\newcommand{\qhbddot}{\dot{\hat{q}}_{{\mathrm{B}/\mathrm{D}}}}
\newcommand{\bx}{{\boldsymbol{x}}}
\newcommand{\bu}{{\boldsymbol{u}}}
\newcommand{\qhdi}{{{\hat{q}}_{{\mathrm{D}/\mathrm{I}}}}}
\newcommand{\ohdi}{{\hat{\omega}_{{\mathrm{D}/\mathrm{I}}}}}
\newcommand{\qhbd}{{{\hat{q}}_{{\mathrm{B}/\mathrm{D}}}}}
\newcommand{\qhstarsqmones}{{{\hat{q}}_{{\mathrm{B}/\mathrm{D}}}}^{\star}\left({{\hat{q}}_{{\mathrm{B}/\mathrm{D}}}}^{\mathrm{s}}-\mathbf{1}^{\mathrm{s}}\right)}
\newcommand{\qhbds}{{\hat{q}}^\mathrm{s}_{{\mathrm{B}/\mathrm{D}}}}
\newcommand{\qhbdmone}{({{\hat{q}}_{{\mathrm{B}/\mathrm{D}}}}-\mathbf{1})}
\newcommand{\dqhbd}{{{\dot{\hat{q}}}_{{\mathrm{B}/\mathrm{D}}}}}
\newcommand{\ohbd}{{{\hat{\omega}}^\mathrm{B}_{{\mathrm{B}/\mathrm{D}}}}}
\newcommand{\dohbd}{{{\dot{\hat{\omega}}}_{{\mathrm{B}/\mathrm{D}}}}}
\newcommand{\ohbi}{{\hat{\omega}_{\mathrm{B/I}}}}
\newcommand{\ohdib}{\hat{\omega}_{\mathrm{D} / \mathrm{I}}^{\mathrm{B}}}
\newcommand{\ohbdb}{{\hat{\omega}^{\mathrm{B}}_{\mathrm{B/D}}}}
\newcommand{\ohbdi}{{\hat{\omega}^{\mathrm{B}}_{\mathrm{D/I}}}}
\newcommand{\qe}{\|{{\hat{q}}_{{\mathrm{B}/\mathrm{D}}}}-\mathbf{1}\|^2}
\newcommand{\qenorm}{\|\qhbd-\mathbf{1}\|^2}
\newcommand{\omegaenorm}{\|\ohbdb\|^2}
\newcommand{\omegaenormsqrt}{\|\ohbdb\|}
\newcommand{\omegae}{{{\hat{\omega}}^\mathrm{Bs}_{{\mathrm{B}/\mathrm{D}}}}}
\newtheorem{remark}{\textnormal{\textbf{Remark}}}
\newtheorem{theorem}{\bf Theorem}
\newtheorem{definition}{\bf Definition}
\newtheorem{lemma}{\bf Lemma}
\newcommand{\obdb}{{{\omega}^{\text{B}}_{\text{B/D}}}}
\newcommand{\qbd}{{{{q}}_{{\mathrm{B}/\mathrm{D}}}}}
\newcommand{\ohbdbdot}{{\dot{\hat{\omega}}^{\text{B}}_{\text{B/D}}}}
\newcommand{\dqbd}{{{\dot{{q}}}_{{\mathrm{B}/\mathrm{D}}}}}
\newcommand{\obdbdot}{{\dot{{\omega}}^{\text{B}}_{\text{B/D}}}}
\title{Semi-global Exponential Stability for Dual Quaternion Based Rigid-Body Tracking Control }
\author{Vrushabh Zinage\footnote{Graduate Student, Department of Aerospace Engineering and Engineering Mechanics, Student Member AIAA}, S P Arjun Ram\footnote{Graduate Research Assistant and Ph.D. candidate, Aerospace Engineering and Engineering Mechanics, The University of Texas at Austin, Austin. Email: arjun.ram@utexas.edu}, Maruthi R. Akella \footnote{Professor and Cockrell Family Chair in Engineering \#19, Aerospace Engineering and Engineering Mechanics, The University of Texas at Austin, Austin, Fellow AIAA. Email: makella@mail.utexas.edu} and Efstathios Bakolas \footnote{Associate Professor, Department of Aerospace Engineering and Engineering Mechanics, Senior Member AIAA. Email: bakolas@austin.utexas.edu}}
\affil{The University of Texas at Austin, Austin, Texas 78712}
\begin{document}

\maketitle

\begin{abstract}
Semi-Global Exponential Stability (SGES) is proved for the combined attitude and position rigid body motion tracking problem, which was previously only known to be asymptotically stable. Dual quaternions are used to jointly represent the rotational and translation tracking error dynamics of the rigid body. A novel nonlinear feedback tracking controller is proposed and a Lyapunov based analysis is provided to prove the semi-global exponential stability of the closed-loop dynamics. Our analysis does not place any restrictions on the reference trajectory or the feedback gains. This stronger SGES result aids in further analyzing the robustness of the rigid body system by establishing Input-to-State Stability (ISS) in the presence of time-varying additive and bounded external disturbances. Motivated by the fact that in many aerospace applications, stringent adherence to safety constraints such as approach path and input constraints is critical for overall mission success, we present a framework for safe control of spacecraft that combines the proposed feedback controller with Control Barrier Functions. Numerical simulations are provided to verify the SGES and ISS results and also showcase the efficacy of the proposed nonlinear feedback controller in several non-trivial scenarios including the Mars Cube One (MarCO) mission, Apollo transposition and docking problem, Starship flip maneuver, collision avoidance of spherical robots, and the rendezvous of SpaceX Dragon 2 with the International Space Station.
\end{abstract}

\section{Nomenclature}

{\renewcommand\arraystretch{1.0}
\noindent\begin{longtable*}{@{}l @{\quad=\quad} l@{}}
$0_{m\times n}$ & $m\times n$ zero matrix\\
$I_n$ & Identity matrix of size $n$\\
$\mathbb{Q},\mathbb{D}$  & Set of quaternions and dual quaternions respectively \\
$a,\;\hat{a}$ &    Quaternion and dual quaternion respectively \\
$1,\;0$ &    Quaternions $(0_{3\times 1},1)$ and $(0_{3\times 1},0)$ respectively \\
$q_{\mathrm{Y} /\mathrm{Z}}$ & Unit quaternion from the Z-frame to the Y-frame\\
$  \bar{\omega}_{\mathrm{Y} / Z}^X$ & Angular velocity of the Y-frame with respect to the Z-frame expressed in the X-frame  \\
$  \bar{v}_{\mathrm{Y} / \mathrm{Z}}^{\mathrm{X}}$ & Linear velocity of the origin of the Y-frame  with respect to the Z-frame expressed in the X-frame  \\
$\bar{r}_{\mathrm{Y} / \mathrm{Z}}^{\mathrm{X}}$ & Translation vector from the origin of the  Z-frame to the origin of the Y-frame expressed in the X-frame  \\
${\tau}^{\mathrm{B}}$ & Total external moment vector applied to the body about its center of mass expressed in the body frame \\
$f^{\mathrm{B}}$ & Total external force vector applied to the body expressed in the body frame \\
 $\hat{f}^\mathrm{B}$ & Dual input force $(f^{\mathrm{B}},0)+\epsilon({\tau}^{\mathrm{B}},0) $\\
$\mathcal{B}_R$& $\{x: \|x\|\leq R\}$ \\
 $J,\;m$ & Dual inertia matrix and mass of rigid body respectively\\
$J_M,\;J_m$ & Maximum and minimum eigenvalues of matrix $J$ \\
 $\mathbf{1},\;\mathbf{0}$ & Dual quaternions $1+\epsilon 0$ and $0+\epsilon 0$ respectively \\
 $\hat{q}_{\mathrm{Y}/\mathrm{Z}}$ & Unit dual quaternion from the Z-frame to Y-frame\\
 
\end{longtable*}}
\section{Introduction}

Any space program's maturity has historically hinged on their ability to perform rendezvous and docking (RVD) operations \cite{woffinden2007navigating_space_programs_1} which have gained significant interest due to applications like on-orbit refueling and assembly \cite{shen2005peer_1,chen2016output_2}. Pioneering U.S. and Soviet missions, like Gemini and Soyuz, aimed to demonstrate these essential capabilities, which later facilitated constructing and servicing space stations in low Earth orbit \cite{goodman2006history_space_programs_2}. Although the core technology has largely remained unchanged, it may not suffice for future missions \cite{woffinden2007navigating_space_programs_1}. The development of fully autonomous rendezvous systems for various domains, including low Earth orbit and low lunar orbit \cite{d2007orion_space_programs_3}, is imperative to advance space exploration and address upcoming challenges. RVD missions \cite{woffinden2007navigating_space_programs_1,shen2005peer_1,chen2016output_2,goodman2006history_space_programs_2} typically involve several main phases, with the final and proximity operations being crucial for mission complexity and safety. Various trajectory designs, control laws, and optimal planning schemes for these operations have been proposed in literature \cite{breger2008safe_4,singla2006adaptive_5}. Unlike traditional spacecraft control problems, RVD missions often require six-degree-of-freedom (6-DOF) control for the chaser spacecraft to synchronously track during critical phases.
Within this context, the MarCO (Mars Cube One) mission by NASA's Jet Propulsion Laboratory (JPL) \cite{schoolcraft2017marco} has marked a significant advancement. Launched in 2018, MarCO sent two CubeSats, small modular satellites, to Mars. Their role was to relay real-time communication during the InSight Mars lander's entry, descent, and landing (EDL) operations. This marked the first use of CubeSats in deep space, demonstrating their potential for future interplanetary missions. In recent years, the term "proximity operations" has become prevalent to describe a variety of space missions that necessitate one spacecraft maintaining close proximity to another space object. These missions may involve tasks such as inspection, health monitoring, surveillance, servicing, and refueling of a space asset by a different spacecraft. A significant challenge in autonomous proximity operations, whether cooperative or uncooperative, is the ability to independently and precisely track the changing relative positions and attitudes, or pose references, in relation to a moving target. This is crucial in order to prevent on-orbit collisions and accomplish the overall objectives of the mission.

A common thread underlying these advancements, from rendezvous and docking operations to the deployment of CubeSats in deep space, is the need for 6-DOF control that involves the precise joint control of both rotational and translational motion. Even though there is rich literature on providing various stability guarantees independently for attitude control and for translational control, the problem of jointly controlling both the position and attitude has only recently gained attention. Spacecraft rendezvous \cite{zinage2022minimum,dong2019adaptive,malyuta2020fast_rendezous}, spacecraft pose estimation \cite{filipe2015extended_pose_estimation,kaki2023real_pose_estimation}, pose tracking \cite{filipe2016pose_tracking}, powered descent guidance \cite{lee2015optimal_powered_desent,malyuta2019discretization_powered_descent} and spacecraft formation flying \cite{huang2017dual_spacecraft_formation_flying} are some important applications that require joint control of both rotational and translational motion. Dual quaternions \cite{brodsky1999dual_old} provide a compact way of representing not only the attitude but also the position of the body, making it a great choice for applications ranging from inertial navigation, formation flying, and robot kinematic chains to computer vision and animation. The dynamics of the tracking control problem for many of the aforementioned aero-mechanical systems tend to be nonlinear, making the controller design a challenging task. Precise characterization of the stability characteristics of the controller significantly enhances the chances of its successful adoption for any particular application. Dual quaternions \cite{brodsky1999dual_old,tsiotras2020dual_applications} have the added advantage of allowing a combined position and attitude control law to be written in a compact single form. Further, the governing rigid body dynamics and the designed feedback control laws based on combined position and attitude (characterized by a single dual quaternion) take into account the natural coupling between the translation and rotational motion of the rigid body.

Tsiotras~\cite{tsiotras1996stabilization} provides asymptotic stability guarantees for attitude stabilization using quaternions and Modified Rodriges Parameters (MRPs) without angular feedback. This asymptotic stability result was extended to the attitude tracking problem in \cite{akella2001rigid}. Recently, a stronger Uniform Exponential Stability (UES) result for attitude stabilization was provided using a novel Lyapunov analysis design, taking advantage of the structure of the MRP error dynamics, where the feedback controller is the MRPs based classical Proportional Derivative with feed-forward terms (PD+ controller) \cite{arjun2020uniform}. When using quaternions for attitude control, it is difficult to guarantee exponential convergence of the vector part of the quaternions to zero without any constraints or lower limits on the feedback gains, as is the case in \cite{WenDelgado} because this high gain feedback is necessary to dominate the nonlinear terms that show up in the error dynamics when using quaternions. In  \cite{arjun2020uniform}, the UES result based on MRPs was extended to quaternions. However, instead of using a direct Lyapunov analysis based on the quaternion kinematics, the argument was made that the vector part of the quaternion error is upper bounded by the MRP error, and hence exponentially converge when the MRPs do so. In \cite{dong2019adaptive}, a set of artificial potential functions are constructed and leveraged to incorporate information regarding spacecraft motion constraints such as field-of-view and approach path constraints that must be followed during the proximity operations. It is shown that by combining these functions with the designed feedback controller, the leader can be steered to the follower asymptotically while respecting its motion constraints. The authors of \cite{filipe2013simultaneous} propose a dual quaternion based nonlinear feedback controller to jointly stabilize both the position and attitude of the rigid body. Further, they provide asymptotic stability guarantees for rigid body dynamics. This result was extended to the tracking control problem for a rigid body in \cite{filipe2013rigid_tracking}.


In this work, we propose a novel feedback tracking control law and present a stronger Semi-Global Exponential Stability (SGES) guarantee for the dual quaternion-based rigid body tracking control problem where
the proposed control law actually guarantees semi-global exponential convergence of both attitude and position states of the rigid body to the desired trajectories and not just asymptotic stability as known before. 
Moreover, no extra conditions are placed on the controller gains, implying that the controller can be designed without any prior knowledge of the bounds on the reference trajectory or body inertia. This stronger semi-global exponential convergence property of the closed loop system is a previously unknown result and therefore presents itself as the major contribution of this paper. This result reaffirms the effectiveness of the controller and provides a compelling case for its use in the 6-DOF motion control of spacecraft, air vehicles, and robotic systems. The stronger exponential stability result also allows us to provide robustness guarantees of local Input-to-State Stability to the controller in the presence of unknown additive-bounded external disturbances. Such disturbances might result from inertia uncertainties, actuators, or measurement sensors, demonstrating the reliability of the proposed model for space applications. 

The rest of the paper is organized as follows. Section \ref{sec:prelim} provides  a brief overview of dual quaternion algebra and the tracking error dynamics. Section \ref{sec:control_design_and_stability_analysis} presents the control design and asymptotic stability analysis for the dual quaternion based tracking dynamics of the rigid body followed by our main result of semi-global exponential stability in Section \ref{sec:exp_stability}. Section \ref{sec:robustness_analysis} presents the robustness properties of the proposed nonlinear feedback controller in terms of Input-to-State Stability. Section \ref{sec:safety_and_motion_constraints} presents the methodology for safe control synthesis in the presence of motion and safety constraints. Section \ref{sec:results} presents the results on five realistic scenarios where we verify the SGES and ISS claims followed by concluding remarks in Section \ref{sec:conclusions}. 

\section{Preliminaries\label{sec:prelim}}
\subsection{Quaternion Algebra}
A quaternion $q$ can be represented by a pair $(\Bar{q},q_4)$ where $\Bar{q}\in\mathbb{R}^3$ is known as its vector part and $q_4\in\mathbb{R}$ as its scalar part. The set of quaternions is denoted by $\mathbb{Q}$. The set $\mathbb{Q}^v = \{q \in\mathbb{Q}: q_4 = 0\}$ denotes the set of quaternions with scalar part zero and the set $\mathbb{Q}^s = \{q \in\mathbb{Q}: \bar{q} = 0\}$ denotes the set of quaternions with zero vector part. Some of the basic operations on quaternions $a,b\in\mathbb{Q}$ are given below
\begin{align}
&\text {Addition: } a+b=\left(\Bar{a}+\Bar{b}, a_{4}+b_{4}\right),\nonumber\\
&\text {Multiplication by a scalar: } \lambda a=\left(\lambda \Bar{a}, \lambda a_{4}\right),\nonumber\\
&\text {Multiplication: } a b=\left(a_{4} \Bar{b}+b_{4} \Bar{a}+\Bar{a} \times \Bar{b}, a_{4} b_{4}-\Bar{a} \cdot \Bar{b}\right),\nonumber \\
&\text {Conjugation: } a^{\star}=\left(-\Bar{a}, a_{4}\right),\nonumber \\
&\text {Dot product: } a \cdot b=\frac{1}{2}\left(a^{\star} b+b^{\star} a\right)=\frac{1}{2}\left(a b^{\star}+b a^{\star}\right)=\left(\Bar{0}, a_{4} b_{4}+\Bar{a} \cdot \Bar{b}\right),\nonumber \\
&\text {Cross product: } a \times b=\frac{1}{2}\left(a b-b^{\star} a^{\star}\right)=\left(b_{4} \Bar{a}+a_{4} \Bar{b}+\Bar{a} \times \Bar{b}, 0\right), \nonumber\\
&\text {Norm: }\|a\|^{2}=a a^{\star}=a^{\star} a=a \cdot a=\left(\Bar{{0}}, a_{4}^{2}+\Bar{a} \cdot \Bar{a}\right),\nonumber \\
&\text { Scalar part: } \operatorname{sc}(a)=\left(\Bar{0}, a_{4}\right) \in \mathbb{Q}^{s}\nonumber \\
&\text { Vector part: } \operatorname{vec}(a)=(\Bar{a}, 0) \in \mathbb{Q}^{v}\nonumber\\
&\text{Multiplication by a matrix: } M\star q=(M_{11} \bar{q} + M_{12}q_4, M_{21} \bar{q} + M_{22}q_4)\nonumber
\end{align}
where $M= \Big[\begin{smallmatrix} M_{11} & M_{12} \\
M_{21} & M_{22} \end{smallmatrix} \Big]\in\mathbb{R}^{4\times 4}$, where $M_{11}\in \mathbb{R}^{3 \times 3},\; M_{21}\in \mathbb{R}^{1 \times 3}, M_{12} \in \mathbb{R}^{3 \times 1}, \; M_{22} \in \mathbb{R}$.
\subsection{Dual Quaternion Algebra}
A dual quaternion $\hat{q}$ can be represented by a pair $({q}_r,{q}_d)$ where ${q}_r\in\mathbb{Q}$ and ${q}_d\in\mathbb{Q}$ are the real and dual parts of $\hat{q}$, respectively. Let the set of dual quaternions be denoted by $\mathbb{D}$ and let also $\mathbb{D}^v_d = \{\hat{q} : \hat{q} = q_r + \epsilon q_d,\; q_r ,\; q_d \in \mathbb{Q}^v \}$ denote the set of dual quaternions where the scalar component of the real and dual parts of $\hat{q}$ are zero. The set $\mathbb{D}_d^s$ denotes the set of dual quaternions where the vector component of the real and dual parts of $\hat{q}$ are zero. A dual quaternion $\hat{q}$ can also be represented as $\hat{q}=q_r+\epsilon q_d$ where $\epsilon^2=0$ and $\epsilon\neq0$. A list of basic operations on dual quaternions $\hat{a},\;\hat{b}\in\mathbb{D}$ are given below \cite{filipe2013rigid_tracking}:
\begin{align}
&\text {Addition: } \hat{a}+\hat{b}=\left(a_{r}+b_{r}\right)+\epsilon\left(a_{d}+b_{d}\right),\nonumber\\
&\text {Multiplication by a scalar: } \lambda \hat{a}=\left(\lambda a_{r}\right)+\epsilon\left(\lambda a_{d}\right),\nonumber\\
&\text {Multiplication: } \hat{a} \hat{b}=\left(a_{r} b_{r}\right)+\epsilon\left(a_{r} b_{d}+a_{d} b_{r}\right),\nonumber \\
&\text {Conjugation: } \hat{a}^{\star}=a_{r}^{\star}+\epsilon a_{d}^{\star},\nonumber \\
 &\text {Swap: } \hat{a}^{\mathrm{s}}=a_{d}+\epsilon a_{r},\nonumber \\
&\text {Dot product: } \hat{a} \cdot \hat{b}=\frac{1}{2}\left(\hat{a}^{\star} \hat{b}+\hat{b}^{\star} \hat{a}\right)=\frac{1}{2}\left(\hat{a} \hat{b}^{\star}+\hat{b} \hat{a}^{\star}\right) =a_{r} \cdot b_{r}+\epsilon\left(a_{d} \cdot b_{r}+a_{r} \cdot b_{d}\right),\nonumber\\
&\text {Cross product: } \hat{a} \times \hat{b}=\frac{1}{2}\left(\hat{a} \hat{b}-\hat{b}^{\star} \hat{a}^{\star}\right)=a_{r} \times b_{r}+\epsilon\left(a_{d} \times b_{r}+a_{r} \times b_{d}\right),\nonumber \\
&\text {Circle product: }\hat{a}\circ\hat{b}=a_r\cdot b_r+a_d\cdot b_d,  \nonumber\\
&\text {Norm: }\|\hat{a}\|^{2}=a_r\cdot a_r+a_d\cdot a_d , \nonumber\\
&\text {Dual norm: }\|\hat{a}\|_{d}^{2}=\hat{a} \hat{a}^{\star}=\hat{a}^{\star} \hat{a}=\hat{a} \cdot \hat{a} \nonumber\\
&\quad\quad\quad\quad\quad\quad\quad\;\;=\left(a_{r} \cdot a_{r}\right)+\epsilon\left(2 a_{r} \cdot a_{d}\right),  \nonumber\\
&\text { Scalar part: } \operatorname{sc}(\hat{a})=\operatorname{sc}(a_r)+\epsilon\operatorname{sc}(a_d)\in\mathbb{D}_d^s\nonumber \\
&\text { Vector part: } \operatorname{vec}(\hat{a})=\operatorname{vec}(a_r)+\epsilon\operatorname{vec}(a_d)\mathbb{D}_d^v\nonumber\\
& \text{Multiplication by matrix}: M \star \hat{q}=\left(M_{11} \star q_{r}+M_{12} \star q_{d}\right)+\epsilon\left(M_{21} \star q_{r}+M_{22} \star q_{d}\right),\nonumber
\end{align}
where $M= \Big[\begin{smallmatrix} M_{11} & M_{12} \\
M_{21} & M_{22} \end{smallmatrix} \Big]\in\mathbb{R}^{8\times 8}$, where $M_{11}, M_{12}, M_{21}, M_{22} \in \mathbb{R}^{4 \times 4}$.

Some other properties that follow from the dual quaternion algebra are as follows:
\begin{subequations}
\begin{equation}
\hat{a} \circ(\hat{b} \hat{c})=\hat{b}^{\mathrm{s}} \circ\left(\hat{a}^{\mathrm{s}} \hat{c}^*\right)=\hat{c}^{\mathrm{s}} \circ\left(\hat{b}^* \hat{a}^{\mathrm{s}}\right) \in \mathbb{R}, \quad \hat{a}, \hat{b}, \hat{c} \in \mathbb{D}_d,
\label{eqn:circ_prod_property}
\end{equation}
\begin{equation}
\hat{a} \circ(\hat{b} \times \hat{c})=\hat{b}^{\mathrm{s}} \circ\left(\hat{c} \times \hat{a}^{\mathrm{s}}\right)=\hat{c}^{\mathrm{s}} \circ\left(\hat{a}^{\mathrm{s}} \times \hat{b}\right), \quad \hat{a}, \hat{b}, \hat{c} \in \mathbb{D}_d^v,
\label{eqn:property_4}
\end{equation}
\begin{equation}
(M \star \hat{a}) \circ \hat{b}=\hat{a} \circ\left(M^{\top} \star \hat{b}\right),\quad \hat{a}, \hat{b} \in \mathbb{D}_d,\quad M \in \mathbb{R}^{8 \times 8}
\label{eqn:property_9_in_other_paper}
\end{equation}
\begin{equation}
\hat{a} \times \hat{b} =-\hat{b} \times \hat{a}, \quad \hat{a}, \hat{b} \in \mathbb{D}_d .
\label{eqn:property_5_in_other_paper}
\end{equation}
\end{subequations}
\subsection{Tracking error dynamics}
The error dynamics for the rigid body motion expressed in terms of dual quaternions is given by \cite{filipe2013rigid_tracking}
\begin{align}
&{\dqhbd}=\frac{1}{2}\qhbd\ohbdb\nonumber\\
&\left(\dot{\hat{\omega}}_{\mathrm{B} / \mathrm{D}}^{\mathrm{B}}\right)^{\mathrm{s}}=\left(J\right)^{-1} \star\left(\hat{f}^{\mathrm{B}}-\left(\ohbdb+\ohdib\right) \times\left(J\star\left(\left(\ohbdb\right)^{\mathrm{s}}+\left(\ohdib\right)^{\mathrm{s}}\right)\right)\right)\nonumber \\
&\quad\quad\left.-J\star\left(\qhbd^{\star} \dot{\hat{\omega}}_{\mathrm{D} / \mathrm{I}}^{\mathrm{B}} \qhbd\right)^{\mathrm{s}}-J\star\left(\ohdib \times \ohbdb\right)^{\mathrm{s}}\right)
\label{eqn:system_dynamics}
\end{align}
where $\qhbd={\hat{q}}_{{\mathrm{D}/\mathrm{I}}}^\star\hat{q}_{{\mathrm{B}/\mathrm{I}}},\;\hat{q}_{{\mathrm{B}/\mathrm{I}}}={q}_{{\mathrm{B}/\mathrm{I}}}+\epsilon\frac{1}{2}{q}_{{\mathrm{B}/\mathrm{I}}}{r}^\mathrm{B}_{{\mathrm{B}/\mathrm{I}}}$, ${r}^\mathrm{B}_{{\mathrm{B}/\mathrm{I}}}=(\Bar{r}^\mathrm{B}_{{\mathrm{B}/\mathrm{I}}},0)$ and $\ohbd=\hat{\omega}_{\text{B/I}}-\hat{\omega}_{\mathrm{D/I}}$. The rotation quaternion and the translation vector of the rigid body with respect to the inertial frame are represented by ${q}_{{\mathrm{B}/\mathrm{I}}}$ and $\Bar{r}^\mathrm{B}_{{\mathrm{B}/\mathrm{I}}}$ respectively. The desired dual quaternion $\hat{q}_{{\mathrm{D}/\mathrm{I}}}$ is defined similarly. Further, $\ohbi=\omega_\mathrm{B/I}+\epsilon v_\mathrm{B/I}$ where $\omega_{{\mathrm{B}/\mathrm{I}}}=(\Bar{\boldsymbol{\omega}},0)\in\mathbb{Q}^v$ and $v_{\mathrm{B}/\mathrm{I}}=(\Bar{\boldsymbol{v}},0)\in\mathbb{Q}^v$ where $\Bar{\boldsymbol{\omega}}\in\mathbb{R}^3$ and $\Bar{\boldsymbol{v}}\in\mathbb{R}^3$ are the angular and linear velocities of the rigid body with respect to the inertial frame respectively. The input dual force $\hat{{f}}^\mathrm{B}:=(f^{\mathrm{B}},0)+\epsilon({\tau}^{\mathrm{B}},0)$ is the total dual input applied to the rigid body where $f^\mathrm{B}$ and $\tau^\mathrm{B}$ represent the force and the torque applied respectively to the rigid body. $J$ is the dual inertia matrix defined as follows
\begin{align}
    J=\begin{bmatrix}
    mI_3&0_{3\times 1}&0_{3\times 3}&0_{3\times 1}\\
    0_{1\times 3}&1&0_{1\times 3}&0\\
    0_{3\times 3}&0_{3\times 1}&\Bar{I}^\mathrm{B}&0_{3\times 1}\\
    0_{1\times 3}&0&0_{1\times 3}&1
    \end{bmatrix}\nonumber
\end{align}
where $\Bar{I}^\mathrm{B}$ is the moment of inertia of the rigid body with respect to its center of mass, $m$ is the mass of the body and $I_3$ is the identity matrix.

\begin{remark}
\normalfont If the translation component $t_\mathrm{B}=t_\mathrm{D}=0$ and linear velocity $v_\mathrm{B}=v_\mathrm{D}=0$, then the dual part of $\ohbd$ and $\qhbd$ are equal to zero. In that case, the system dynamics \eqref{eqn:system_dynamics} simplifies to
\begin{align}
{\dqbd}&=\frac{1}{2}\qbd\obdb\nonumber\\
\left({\obdbdot}\right)^{\mathrm{s}}&=\left(N^{\mathrm{B}}\right)^{-1} \star\left({\tau}^{\mathrm{B}}-\left(\obdb+{\omega}_{\mathrm{D} / \mathrm{I}}^{\mathrm{B}}\right) \times\left(N^{\mathrm{B}} \star\left(\left({\omega}_{\mathrm{B} / \mathrm{D}}^{\mathrm{B}}\right)^{\mathrm{s}}+\left({\omega}_{\mathrm{D} / \mathrm{I}}^{\mathrm{B}}\right)^{\mathrm{s}}\right)\right)\right)\nonumber \\
&\left.-N^\mathrm{B} \star\left({q}_{\mathrm{B} / \mathrm{D}}^{\star} \dot{{\omega}}_{\mathrm{D} / \mathrm{I}}^{\mathrm{D}} {q}_{\mathrm{B} / \mathrm{D}}\right)^{\mathrm{s}}-N^{\mathrm{B}} \star\left({\omega}_{\mathrm{D} / \mathrm{I}}^{\mathrm{B}} \times {\omega}_{\mathrm{B} / \mathrm{D}}^{\mathrm{B}}\right)^{\mathrm{s}}\right)
\end{align}
where $N^{\mathrm{B}}$ is given by
\begin{align}
    N^{\mathrm{B}}=\left[\begin{array}{cc}
\Bar{I}^\mathrm{B} & 0_{3 \times 1} \\
0_{1 \times 3} & 1 
\end{array}\right],\nonumber
\end{align}
which can be seen to have four distinct terms in angular velocity dynamics: the torque, a cross product of the angular velocity and momentum, the rate of change of the desired velocity rotated to the body frame, and a cross product between the angular velocity error and the velocity. This is algebraically similar in structure to the error dynamics of MRPs as given in \cite{arjun2020uniform}.
\label{Remark1}
\end{remark}

\section{Control Design and Stability Analysis\label{sec:control_design_and_stability_analysis}}
In this section, we first propose the nonlinear feedback tracking controller and subsequently establish the asymptotic and semi-global exponential stability property of the closed-loop dynamics. Finally, we discuss the relation of the proposed nonlinear feedback control law with optimal control.
\subsection{Control Design\label{sec:feedback_control}}
The proposed nonlinear feedback tracking control law is given by
\begin{align}
\hat{f}^\mathrm{B} &=- k_{p}\frac{\qhstarsqmones}{1+\qenorm}-k_{d}\left(\ohbdb\right)^{\mathrm{s}}+J\star\left(\qhbd^{\star} \dot{\hat{\omega}}_{\mathrm{D} / \mathrm{I}}^{\mathrm{D}} \qhbd\right)^{\mathrm{s}} +\ohdib \times\left(J\star\left(\ohdib\right)^{\mathrm{s}}\right), \quad k_{p}, k_{d}>0
\label{eqn:feedback_control_law}
\end{align}
Note that the feedback control law proposed in \cite{filipe2013rigid_tracking} only guarantees asymptotic stability for the rigid body tracking error dynamics \eqref{eqn:system_dynamics}. Further, the control law in \cite{filipe2013rigid_tracking} is different from \eqref{eqn:feedback_control_law} and is given by 
\begin{align}
   \hat{f}^\mathrm{B} &=- k_{p}\operatorname{vec}\left({\qhstarsqmones}\right)-k_{d}\left(\ohbdb\right)^{\mathrm{s}}+J\star\left(\qhbd^{\star} \dot{\hat{\omega}}_{\mathrm{D} / \mathrm{I}}^{\mathrm{D}} \qhbd\right)^{\mathrm{s}} +\ohdib \times\left(J\star\left(\ohdib\right)^{\mathrm{s}}\right), \quad k_{p}, k_{d}>0 
\end{align}
In the subsequent sections, we show that the proposed feedback control law \eqref{eqn:feedback_control_law} guarantees asymptotic as well as semi-global exponential stability for the dual quaternion based closed-loop tracking error dynamics of the rigid body represented by \eqref{eqn:system_dynamics}. Furthermore, using the proposed Lyapunov function, we show Input to State Stability (ISS) for the rigid body dynamics in the presence of time-varying additive bounded disturbances.
\subsection{Asymptotic Stability\label{sec:asym_stability}}
As a first step to proving exponential stability, we  leverage Lyapunov analysis to initially prove asymptotic stability. To that end, consider the following candidate Lyapunov functions $V_0$:

\begin{align}
V_0=&k_{p}\text{ln}\left(1+\qhbdmone \circ\qhbdmone\right)+\frac{1}{2}\left(\ohbdb\right)^{\mathrm{s}} \circ\left(J\star\left(\ohbdb\right)^{\mathrm{s}}\right)
\label{eqn:V0}
\end{align}

\begin{theorem}
 \normalfont For $V_0$ defined in \eqref{eqn:V0}, $\dot{V}_0=-k_{d}\left(\ohbdb\right)^{\mathrm{s}} \circ$ $\left(\ohbdb\right)^{\mathrm{s}}=-k_d\|\left(\ohbdb\right)^{\mathrm{s}}\|^2 < 0$, for all $\left(\qhbd, \ohbdb\right) \in \mathbb{D}_{d}^{u}\footnote{$\mathbb{D}_{d}^{u}=\{\hat{q}\in\mathbb{D}_d:\;\hat{q}.\hat{q}=\hat{q}^\star \hat{q}=\hat{q}\hat{q}^\star=1\}$} \times \mathbb{D}_{d}^{v} \backslash\{\mathbf{1}, \mathbf{0}\}$\footnote{With slight abuse of notation we denote a dual quaternion $(\Bar{0},a)+\epsilon(\Bar{0},0)$ by $\boldsymbol{a}$.}.
 \label{thm:as}
\end{theorem}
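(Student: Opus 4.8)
The plan is to split $V_0=V_1+V_2$ into the potential part $V_1=k_p\ln(1+\qhbdmone\circ\qhbdmone)$ and the kinetic part $V_2=\tfrac12(\ohbdb)^{\mathrm s}\circ(J\star(\ohbdb)^{\mathrm s})$, differentiate each along the closed loop obtained by inserting the control law \eqref{eqn:feedback_control_law} into \eqref{eqn:system_dynamics}, and show that every contribution except the dissipation term cancels.

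First I would differentiate $V_1$. Since the circle product is a symmetric $\mathbb{R}$-valued bilinear form and $\mathbf 1$ is constant, the chain rule gives $\dot V_1=k_p\dfrac{2\,\qhbdmone\circ\dqhbd}{1+\qenorm}$. Substituting the kinematics $\dqhbd=\tfrac12\qhbd\ohbdb$ and then applying the circle-product identity \eqref{eqn:circ_prod_property}, together with the linearity and involutivity of the swap (so that $\qhbdmone^{\mathrm s}=\qhbds-\mathbf 1^{\mathrm s}$), yields $\dot V_1=k_p\dfrac{(\ohbdb)^{\mathrm s}\circ\left(\qhstarsqmones\right)}{1+\qenorm}$. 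This is exactly why the nonstandard potential feedback term $-k_p\,\qhstarsqmones/(1+\qenorm)$ was placed in \eqref{eqn:feedback_control_law}.

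Next, the symmetry of $J$ together with \eqref{eqn:property_9_in_other_paper} makes the two halves of $\dot V_2$ coincide, so $\dot V_2=(\ohbdb)^{\mathrm s}\circ\bigl(J\star(\dot{\hat\omega}^{\mathrm B}_{\mathrm{B/D}})^{\mathrm s}\bigr)$. I would then apply $J\star$ to the angular part of \eqref{eqn:system_dynamics} and substitute \eqref{eqn:feedback_control_law}: the feed-forward term $J\star(\qhbd^\star\dot{\hat\omega}^{\mathrm D}_{\mathrm{D/I}}\qhbd)^{\mathrm s}$ removes the matching drift term of the dynamics, the proportional term reproduces $-\dot V_1$ and cancels it against the contribution above, and the rate term contributes exactly $-k_d(\ohbdb)^{\mathrm s}\circ(\ohbdb)^{\mathrm s}$.

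What is left, and what I expect to be the main obstacle, is to show that the gyroscopic bundle $(\ohbdb)^{\mathrm s}\circ\bigl[-(\ohbdb+\ohdib)\times(J\star((\ohbdb)^{\mathrm s}+(\ohdib)^{\mathrm s}))+\ohdib\times(J\star(\ohdib)^{\mathrm s})-J\star(\ohdib\times\ohbdb)^{\mathrm s}\bigr]$ vanishes identically; this is the dual-quaternion analogue of the classical energy neutrality $\omega^{\top}(\omega\times J\omega)=0$. Expanding the mixed cross product, I would treat three groups. The two summands whose leading cross-product factor is $\ohbdb$ vanish because \eqref{eqn:property_4} slides the factor $\ohbdb\times\ohbdb=\mathbf 0$ into the cross-product slot. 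The two copies of $\ohdib\times(J\star(\ohdib)^{\mathrm s})$ — one from the dynamics, one from the feed-forward in \eqref{eqn:feedback_control_law} — cancel against each other. Finally, the remaining pair built from $\ohdib\times(J\star(\ohbdb)^{\mathrm s})$ and $J\star(\ohdib\times\ohbdb)^{\mathrm s}$ cancels once \eqref{eqn:property_9_in_other_paper} (with $J=J^{\top}$) and the swap-invariance of $\circ$ reduce both to a common multiple of $(\ohbdb\times\ohdib)+(\ohdib\times\ohbdb)=\mathbf 0$, using the antisymmetry \eqref{eqn:property_5_in_other_paper}. Collecting everything leaves $\dot V_0=\dot V_1+\dot V_2=-k_d\|(\ohbdb)^{\mathrm s}\|^2$, the asserted identity; strictly this is negative semidefinite, vanishing exactly when $(\ohbdb)^{\mathrm s}=\mathbf 0$, which is precisely the form the subsequent LaSalle/asymptotic-stability argument requires.
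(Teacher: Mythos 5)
Your proposal is correct, and it follows the paper's Lyapunov scheme step for step: differentiate $V_0$ along the closed loop, let the proportional term of \eqref{eqn:feedback_control_law} cancel the potential rate via identity \eqref{eqn:circ_prod_property}, let the feed-forward cancel the rotated desired-acceleration drift, and kill the gyroscopic bundle with the algebraic properties \eqref{eqn:property_4}--\eqref{eqn:property_5_in_other_paper}. The one place you genuinely diverge is the gyroscopic cancellation: the paper first substitutes $\ohbdb=\hat{\omega}^{\mathrm{B}}_{\mathrm{B/I}}-\ohdib$ and re-expands everything in terms of the total body velocity $\hat{\omega}^{\mathrm{B}}_{\mathrm{B/I}}$, producing six terms (its Terms 3--8) of which two vanish individually (the dual analogue of $\omega^{\top}(\omega\times J\omega)=0$) and the remaining four cancel pairwise via \eqref{eqn:property_4} and \eqref{eqn:property_5_in_other_paper}; you instead expand the bundle directly in the $(\ohbdb,\ohdib)$ variables, annihilate the two summands led by $\ohbdb$ by sliding $\ohbdb\times\ohbdb=\mathbf{0}$ into the cross-product slot of \eqref{eqn:property_4}, cancel the two copies of $\ohdib\times(J\star(\ohdib)^{\mathrm{s}})$ outright, and reduce the last pair to $(\ohbdb\times\ohdib)+(\ohdib\times\ohbdb)=\mathbf{0}$ using \eqref{eqn:property_9_in_other_paper} with $J=J^{\top}$, swap-invariance of $\circ$, and \eqref{eqn:property_5_in_other_paper} --- I verified each of these reductions and they are sound. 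Your route is slightly shorter and avoids the change of variables, at the cost of obscuring the physical reading that the paper's substitution makes transparent (energy neutrality of the total-velocity gyroscopic torque); either is fully rigorous. Two small remarks: you correctly observe that the conclusion is only negative \emph{semi}definite ($\dot{V}_0\leq 0$, vanishing when $(\ohbdb)^{\mathrm{s}}=\mathbf{0}$), which is also what the paper's proof actually establishes despite the strict inequality in the theorem statement; and note that the paper closes with an integral/boundedness argument (its Eq.~\eqref{eqn:v_int}) to get $\hat{\omega}^{\mathrm{B}}_{\mathrm{B/D}}(t)\to 0$, which your LaSalle remark gestures at but does not carry out --- harmless here, since the statement under proof concerns only the identity for $\dot{V}_0$.
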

\begin{proof}
The time derivative of $V_0$ along system trajectories \eqref{eqn:system_dynamics} is given by
\begin{align}
    \dot{V}_0=\frac{2k_p(\qhbd-\mathbf{1})\circ \qhbddot}{1+\qe}+\left(\ohbdb\right)^{\mathrm{s}} \circ\left(J\star\left(\ohbdbdot\right)^{\mathrm{s}}\right)
    \label{eqn:V0_dot_expression}
\end{align}
Substituting the feedback control law \eqref{eqn:feedback_control_law} and using \eqref{eqn:system_dynamics} and property \eqref{eqn:circ_prod_property} yields 
\begin{align}
\dot{V}_0=&\left(\ohbdb\right)^{\mathrm{s}} \circ\left(-k_{d}\left(\ohbdb\right)^{\mathrm{s}}\right)+\underbrace{\left(\ohbdb\right)^{\mathrm{s}} \circ{\left(k_{p} \frac{\qhbd^{\star}\left(\qhbd^{\mathrm{s}}-\mathbf{1}^{\mathrm{s}}\right)}{1+\qenorm}-k_{p}\frac{ \qhbd^{\star}\left(\qhbd^{\mathrm{s}}-\mathbf{1}^{\mathrm{s}}\right)}{1+\qenorm}\right)}}_{\text{Term 1}} \nonumber\\
&+\underbrace{\left(\ohbdb\right)^{\mathrm{s}} \circ\left(-\left(\ohbdb+\ohdib\right) \times\left(J\star\left(\left(\ohbdb\right)^{\mathrm{s}}+\left(\ohdib\right)^\mathrm{s}\right)\right)-J\star\left(\ohdib \times \ohbdb\right)^{\mathrm{s}}+\ohdib \times\left(J\star\left(\ohdib\right)^{\mathrm{s}}\right)\right)}_{\text{Term 2}}
\label{eqn:aymstability}
\end{align}
Note that term 1 of the RHS of \eqref{eqn:aymstability} is equal to zero because it is the circle product of a dual vector quaternion with a zero dual quaternion. Next, we show that term 2 is also equal to zero:

\begin{align}
&\left(\ohbdb\right)^{\mathrm{s}} \circ\left(-\left(\ohbdb+\ohdib\right) \times\left(J\star\left(\left(\ohbdb\right)^{\mathrm{s}}+\left(\ohdib\right)\right)^{s}\right)-J\star\left(\ohdib \times \ohbdb\right)^{\mathrm{s}}\right.\nonumber\\
&\left.+\ohdib \times\left(J\star\left(\ohdib\right)^{\mathrm{s}}\right)\right)\nonumber\\
&=\left(\left(\hat{\omega}_{\mathrm{B} / \mathrm{I}}^{\mathrm{B}}\right)^{\mathrm{s}}-\left(\ohdib\right)^{\mathrm{s}}\right) \circ\left(-\hat{\omega}_{\mathrm{B} / \mathrm{I}}^{\mathrm{B}} \times\left(J\star\left(\hat{\omega}_{\mathrm{B} / \mathrm{I}}^{\mathrm{B}}\right)^{\mathrm{s}}\right)-J\star\left(\ohdib \times\left(\hat{\omega}_{\mathrm{B} / \mathrm{I}}^{\mathrm{B}}-\ohdib\right)\right)^{\mathrm{s}}+\ohdib \times\left(J\star\left(\ohdib\right)^{\mathrm{s}}\right)\right) \nonumber\\
&=\left(\hat{\omega}_{\mathrm{B} / \mathrm{I}}^{\mathrm{B}}\right)^{\mathrm{s}} \circ\left(-\hat{\omega}_{\mathrm{B} / \mathrm{I}}^{\mathrm{B}} \times\left(J\star\left(\hat{\omega}_{\mathrm{B} / \mathrm{I}}^{\mathrm{B}}\right)^{\mathrm{s}}\right)-J\star\left(\ohdib \times \hat{\omega}_{\mathrm{B} / \mathrm{I}}^{\mathrm{B}}\right)^{\mathrm{s}}+\ohdib \times\left(J\star\left(\ohdib\right)^{\mathrm{s}}\right)\right)\nonumber \\
&=-\underbrace{\left(\hat{\omega}_{\mathrm{B} / \mathrm{I}}^{\mathrm{B}}\right)^{\mathrm{s}}\circ\left(\hat{\omega}_{\mathrm{B} / \mathrm{I}}^{\mathrm{B}} \times\left(J\star\left(\hat{\omega}_{\mathrm{B} / \mathrm{I}}^{\mathrm{B}}\right)^{\mathrm{s}}\right)\right)}_{\text{Term 3}}-\underbrace{\left(\hat{\omega}_{\mathrm{B} / \mathrm{I}}^{\mathrm{B}}\right)^{\mathrm{s}} \circ\left(J\star\left(\ohdib \times \hat{\omega}_{\mathrm{B} / \mathrm{I}}^{\mathrm{B}}\right)^{\mathrm{s}}\right)}_{\text{Term 4}}+\underbrace{\left(\hat{\omega}_{\mathrm{B} / \mathrm{I}}^{\mathrm{B}}\right)^{\mathrm{s}} \circ\left(\ohdib \times\left(J\star\left(\ohdib\right)^{\mathrm{s}}\right)\right)}_{\text{Term 5}}\nonumber \\
&+\underbrace{\left(\ohdib\right)^{\mathrm{s}} \circ\left(\hat{\omega}_{\mathrm{B} / \mathrm{I}}^{\mathrm{B}} \times\left(J\star\left(\hat{\omega}_{\mathrm{B} / \mathrm{I}}^{\mathrm{B}}\right)^{\mathrm{s}}\right)\right)}_{\text{Term 6}}+\underbrace{\left(\ohdib\right)^{\mathrm{s}} \circ\left(J\star\left(\ohdib \times \hat{\omega}_{\mathrm{B} / \mathrm{I}}^{\mathrm{B}}\right)^{\mathrm{s}}\right)}_{\text{Term 7}}-\underbrace{\left(\ohdib\right)^{\mathrm{s}} \circ\left(\ohdib \times\left(J\star\left(\ohdib\right)^{\mathrm{s}}\right)\right)}_{\text{Term 8}}\nonumber
\end{align}
Note that terms 3 and 8 are equal to zero as well in view of properties \eqref{eqn:circ_prod_property} and \eqref{eqn:property_4}. Now,
\begin{align}
\text{Term 4+Term 5+Term 6+Term 7}=&-\left(J\star\left(\hat{\omega}_{\mathrm{B} / \mathrm{I}}^{\mathrm{B}}\right)^{\mathrm{s}}\right)^{\mathrm{s}} \circ\left(\ohdib \times \hat{\omega}_{\mathrm{B} / \mathrm{I}}^{\mathrm{B}}\right)+\left(\hat{\omega}_{\mathrm{B} / \mathrm{I}}^{\mathrm{B}}\right)^{\mathrm{s}} \circ\left(\ohdib \times\left(J\star\left(\ohdib\right)^{\mathrm{s}}\right)\right)\nonumber \\
&+\left(\ohdib\right)^{\mathrm{s}} \circ\left(\hat{\omega}_{\mathrm{B} / \mathrm{I}}^{\mathrm{B}} \times\left(J\star\left(\hat{\omega}_{\mathrm{B} / \mathrm{I}}^{\mathrm{B}}\right)^{\mathrm{s}}\right)\right)+\left(J\star\left(\ohdib\right)^{\mathrm{s}}\right)^{\mathrm{s}} \circ\left(\ohdib \times \hat{\omega}_{\mathrm{B} / \mathrm{I}}^{\mathrm{B}}\right)
\label{eqn:four_and_seven}
\end{align}
Finally, using properties \eqref{eqn:property_4} and \eqref{eqn:property_5_in_other_paper}, we have
\begin{align}
\text{Term 4+Term 5+Term 6+Term 7}=&-\left(\ohdib\right)^{\mathrm{s}} \circ\left(\hat{\omega}_{\mathrm{B} / \mathrm{I}}^{\mathrm{B}} \times\left(J\star\left(\hat{\omega}_{\mathrm{B} / \mathrm{I}}^{\mathrm{B}}\right)^{\mathrm{s}}\right)\right)+\left(\hat{\omega}_{\mathrm{B} / \mathrm{I}}^{\mathrm{B}}\right)^{\mathrm{s}} \circ\left(\ohdib \times\left(J\star\left(\ohdib\right)^{\mathrm{s}}\right)\right)\nonumber \\
&+\left(\ohdib\right)^{\mathrm{s}} \circ\left(\hat{\omega}_{\mathrm{B} / \mathrm{I}}^{\mathrm{B}} \times\left(J\star\left(\hat{\omega}_{\mathrm{B} / \mathrm{I}}^{\mathrm{B}}\right)^{\mathrm{s}}\right)\right)-\left(\hat{\omega}_{\mathrm{B} / \mathrm{I}}^{\mathrm{B}}\right)^{\mathrm{s}} \circ\left(\ohdib \times\left(J\star\left(\ohdib\right)^{\mathrm{s}}\right)\right)=\mathbf{0}\nonumber
\end{align}

Therefore, the time derivative of the Lyapunov function along the closed-loop system that results from application of control input \eqref{eqn:feedback_control_law} in \eqref{eqn:system_dynamics} is equal to $\dot{V}_0=-k_{d}\left(\ohbdb\right)^{\mathrm{s}} \circ$ $\left(\ohbdb\right)^{\mathrm{s}}=-k_d\|\left(\ohbdb\right)^{\mathrm{s}}\|^2 \leq 0$, for all $\left(\qhbd, \ohbdb\right) \in \mathbb{D}_{d}^{u} \times \mathbb{D}_{d}^{v} \backslash\{\mathbf{1}, \mathbf{0}\}$. By integrating both sides of $\dot{V}=-k_d\left(\hat{\omega}_{\mathrm{BD}}^{\mathrm{B}}\right)^{\mathrm{s}} \circ\left(\hat{\omega}_{\mathrm{BD}}^{\mathrm{B}}\right)^{\mathrm{s}} \leq 0$, we obtain 
\begin{align}
\underset{t\rightarrow\infty}{\lim}\;\int_0^t k_d\left(\hat{\omega}_{\mathrm{BD}}^{\mathrm{B}}(\tau)\right)^{\mathrm{s}} \circ\left(\hat{\omega}_{\mathrm{BD}}^{\mathrm{B}}(\tau)\right)^{\mathrm{s}} \mathrm{d} \tau \leq V(0) .
\label{eqn:v_int}
\end{align}
Since $\hat{q}_{\mathrm{B/D}}, \hat{\omega}_{\mathrm{B/D}}^{\mathrm{B}}\;\text{and}\; \hat{\omega}_{\mathrm{D} /\mathrm{I}}^{\mathrm{B}}\in \mathcal{L}_{\infty}$\footnote{The $\mathcal{L}_{\infty}$-norm of a function $\hat{f}:[0, \infty) \rightarrow$ $\mathbb{D}_d$ is defined as $\|\hat{f}\|_{\infty}=\underset{t\geq 0}{\sup}\;\|\hat{f}(t)\|$. The dual quaternion $\hat{f} \in \mathcal{L}_{\infty}$, if and only if $\|\hat{f}\|_{\infty}<\infty$.}, from \eqref{eqn:feedback_control_law} it follows that $\hat{f}^{\mathrm{B}} \in \mathcal{L}_{\infty}$ as well. From \eqref{eqn:system_dynamics}, it also follows that $\dot{\hat{\omega}}_{\mathrm{B}, \mathrm{D}}^{\mathrm{B}} \in \mathcal{L}_{\infty}$. Along with \eqref{eqn:v_int}, this yields $\hat{\omega}_{\mathrm{B} / \mathrm{D}}^{\mathrm{B}}(t) \rightarrow 0$ as $t \rightarrow \infty$. 
\end{proof}
\subsection{Semi-global Exponential Stability Result\label{sec:exp_stability}}
In this section, we show that the feedback control law \eqref{eqn:feedback_control_law} renders the tracking error dynamics \eqref{eqn:system_dynamics} semi-global exponential stability stable. To that end, we first state the definition of semi-global exponential stability.
\begin{definition}
\normalfont [Definition 5.10, \cite{sastry2013nonlinear}] (\textbf{Semi-global exponential stability}) Let $\bx=0$ be an equilibrium point for a nonlinear system $\dot{\bx}=f(\bx)$ and $\mathcal{B}_R$ be a ball of radius $R$ centered at the origin. Then, $\bx=0$ is semi-globally exponentially stable equilibrium point, if for a given $R>0$, there exist $m(R)>0$ and $\alpha(R)>0$ such that
\begin{align}
    \| \bx(t) \|\leq m(R)e^{-\alpha(R)(t-t_0)}\|x(t_0)\|,
\end{align}
for all $\bx(t_0)\in\mathcal{B}_R$ and $t\geq t_0\geq 0$.
\label{thm:es}
\end{definition}
For our analysis, for $\mathcal{B}_R$, we will consider the following set
\begin{align}
\mathcal{B}_R=\left\{\bx=(\qhbd-\mathbf{1},\ohbd):\;\qenorm+\|\ohbd\|^2\leq R^2\right\}
\label{eqn:ball_defn}
\end{align}
where $R>0$.
Note that $(\qhbd,\ohbd)=(\mathbf{1},\mathbf{0})$ is the equilibrium point for \eqref{eqn:system_dynamics}. Therefore, we consider $\bx(t):=(\qhbd-\mathbf{1},\ohbd)=(0,0)$ to be the equilibrium point. To streamline this presentation, we drop the time indexing for the states $\qhbd-\mathbf{1}$ and $\ohbdb$. Without loss of generality, we assume that $t_0=0$. Unless otherwise mentioned $\bx(t)$ indicates the states $(\qhbd-\mathbf{1},\ohbd)$ for any time instant $t\geq 0$. 
{ Note that in practical applications, the distinction between semi-global exponential stability and global exponential stability becomes negligible. Semi-global exponential stability guarantees that for any given bounded set of initial conditions i.e. within a ball of arbitrary radius $R>0$, the proposed feedback control law ensures the system's exponential stability. This is sufficient in practice, where initial conditions are naturally bounded. Global exponential stability, on the other hand, extends this guarantee to all initial conditions $\bx_0\in\mathbb{R}^n$. }

The following lemma will be used in the proof of semi-global exponential stability.
\begin{lemma}
\normalfont $\|\qhstarsqmones\|^2\geq \frac{\qe}{2}$
\label{lemma:inequality_new}
\end{lemma}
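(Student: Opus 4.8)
The plan is to reduce this dual-quaternion inequality to an elementary scalar inequality by decomposing $\qhbd = q_r + \epsilon q_d$ into its real and dual quaternion parts and exploiting the unit constraint. Since the analysis is carried out on $\mathbb{D}_d^u$, the real part $q_r$ is a unit quaternion, so $\|q_r\| = 1$. Let $c$ denote the scalar part of $q_r$, so that $c \in [-1,1]$, and set $s := \|q_d\|^2 \geq 0$; the goal is to express both sides of the inequality as polynomials in $c$ and $s$.

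First I would evaluate the argument of the norm on the left-hand side. Using the swap operation together with $\mathbf{1}^{\mathrm{s}} = \epsilon\,1$, one has $\qhbds - \mathbf{1}^{\mathrm{s}} = q_d + \epsilon(q_r - 1)$. Multiplying on the left by $\qhbd^\star = q_r^\star + \epsilon q_d^\star$ via the dual-quaternion product rule yields a dual quaternion whose real part is $q_r^\star q_d$ and whose dual part is $q_r^\star(q_r - 1) + q_d^\star q_d$. Applying the dual norm $\|\hat a\|^2 = a_r \cdot a_r + a_d \cdot a_d$, the left-hand side equals $\|q_r^\star q_d\|^2 + \|q_r^\star(q_r-1) + q_d^\star q_d\|^2$.

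Next I would simplify each piece. By the multiplicative property of the quaternion norm and $\|q_r\| = 1$, the first term is just $\|q_d\|^2 = s$. For the second term, the identities $q_r^\star q_r = 1$ and $q_d^\star q_d = \|q_d\|^2$ reduce its argument to the quaternion $(1+s) - q_r^\star$; splitting $q_r^\star$ into its vector and scalar parts and using $\|\bar q_r\|^2 = 1 - c^2$, its squared norm becomes $(1-c^2) + (1+s-c)^2$. Collecting terms, the left-hand side collapses to $2 + 3s + s^2 - 2c - 2sc$. A direct expansion of the right-hand side gives $\qe = \|q_r - 1\|^2 + \|q_d\|^2 = 2 - 2c + s$, so that $\tfrac{1}{2}\qe = 1 - c + \tfrac{s}{2}$.

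The remaining step is the scalar inequality $2 + 3s + s^2 - 2c - 2sc \geq 1 - c + \tfrac{s}{2}$, equivalently $1 + \tfrac{5s}{2} + s^2 - c - 2sc \geq 0$. Since the left side is affine in $c$ with negative slope $-(1 + 2s)$, its minimum over $c \in [-1,1]$ is attained at $c = 1$, where it equals $\tfrac{s}{2} + s^2 \geq 0$; hence the inequality holds for every admissible pair $(c,s)$. I expect the only real obstacle to be the bookkeeping in the dual-quaternion multiplication and norm extraction — correctly tracking the real and dual parts and the scalar/vector split of $q_r$ — after which the claim follows from a one-line monotonicity argument in the single variable $c$.
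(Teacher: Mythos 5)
Your proof is correct, and every step checks out: your opening identity $\qhbd^\star(\qhbds-\mathbf{1}^\mathrm{s}) = q_r^\star q_d + \epsilon\left[q_r^\star(q_r-1)+q_d^\star q_d\right]$ is exactly the paper's starting point, and your two-variable reduction agrees with the paper term-for-term (rewriting the paper's exact difference with $s=\tfrac14\|\bar{t}^B\|^2$ and $c=q_{r4}$ reproduces precisely your expression $1+\tfrac{5s}{2}+s^2-c-2sc$). Where you genuinely diverge is in the treatment of the dual part. The paper substitutes $q_d=\tfrac12 q_r t^B$, invoking an external lemma and the unit-dual-quaternion structure, expands everything in the translation vector $\bar{t}^B$, and closes with a chain of bounds that splits $\|q_r t^B\|^2$ into $q_{r4}^2\|\bar{t}^B\|^2+\|\bar{q}_r\times\bar{t}^B\|^2+(\bar{q}_r\cdot\bar{t}^B)^2$ and then dominates the dot-product term. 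You never use the constraint $q_r\cdot q_d=0$ at all, only that $q_r$ is unit, so your argument actually establishes a strictly more general inequality (valid for arbitrary $q_d\in\mathbb{Q}$), and your endgame --- the difference is affine and decreasing in $c$, hence minimized at $c=1$ where it equals $\tfrac{s}{2}+s^2\geq 0$ --- is a cleaner closure than the paper's term-by-term estimates. What the paper's route buys is the explicit intermediate formula $\qhbd^\star(\qhbds-\mathbf{1}^\mathrm{s})=\tfrac12 t^B+\epsilon\left(1+\tfrac14\|\bar{t}^B\|^2-q_r^\star\right)$, which keeps the physical translation error visible; what yours buys is brevity, slightly greater generality, and independence from the cited auxiliary lemma.
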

\begin{proof}
    Let $\qhbd:=q_r+\epsilon q_d$ where $q_d=\frac{1}{2}q_rt^B$ and $t^B=(\Bar{t}^B,0)$. We have 
    \begin{align}
        \qhbd^\star(\qhbd^s-\mathbf{1}^\mathrm{s})=q_r^\star q_d+\epsilon[q_r^\star(q_r-1)+q_d^\star q_d]
        \label{eqn:control_first_term}
    \end{align}
    Using Lemma 6 from \cite{salgueiro2014nonlinear_thesis} and the fact that $q_r$ is a unit  quaternion, we have
    \begin{align}
        &q_r^\star q_d=q_r^\star(\frac{1}{2}q_rt^B)=\frac{1}{2}q_r^\star q_rt^B=\frac{1}{2}t^B,\quad q_r^\star q_r=1,\quad q_d^\star q_d=\left(\frac{1}{2}q_rt^B\right)^\star\left(\frac{1}{2}q_rt^B\right)=\frac{1}{4}\|\Bar{t}^B\|^2\label{eqn:seperation_terms}
    \end{align}
    Therefore, substituting \eqref{eqn:seperation_terms} in \eqref{eqn:control_first_term}, we have
    \begin{align}
        \qhbd^\star(\qhbd^s-\mathbf{1}^\mathrm{s})=\frac{1}{2}t^B+\epsilon\left(1+\frac{1}{4}\|\Bar{t}^B\|^2-q_r^\star\right)\label{eqn:q_star_expression}
    \end{align}
 Consequently, from \eqref{eqn:q_star_expression}
    \begin{align}
        \|\qhstarsqmones\|^2=&\frac{3}{4}\|\Bar{t}^B\|^2+\|\Bar{q}_r\|^2+q^2_{r4}+1+\frac{1}{16}\|\Bar{t}^B\|^4-2q_{r4}-\frac{1}{2}q_{r4}\|\Bar{t}^B\|^2\nonumber\\
        &=\frac{3}{4}\|\Bar{t}^B\|^2+2(1-q_{r4})+\frac{1}{16}\|\Bar{t}^B\|^4-\frac{1}{2}q_{r4}\|\Bar{t}^B\|^2\nonumber\\
        &=\frac{1}{4}\|\Bar{t}^B\|^2+2(1-q_{r4})+\frac{1}{16}\|\Bar{t}^B\|^4+\frac{1}{2}\left(1-q_{r4}\right)\|\Bar{t}^B\|^2
        \label{eqn:lemma1_first}
    \end{align}
   In addition,
    \begin{align}
        \frac{\qe}{2}=(1-q_{r4})+\frac{1}{8}q_{r4}^2\|\Bar{t}^B\|^2+\frac{\|\Bar{q}_r\times \Bar{t}^B\|^2}{8}+\frac{\|\Bar{q}_r\cdot \Bar{t}^B\|^2}{8}
        \label{eqn:lemma1_second}
    \end{align}
    Now, combining \eqref{eqn:lemma1_first} and \eqref{eqn:lemma1_second}, we have
    \begin{align}
      \|\qhstarsqmones\|^2-\frac{\qe}{2}&\geq (1-q_{r4})\left(1+\frac{1}{2}\|\Bar{t}^B\|^2\right)-\frac{\|\Bar{q}_r\cdot \Bar{t}^B\|^2}{8}\nonumber\\
      &\geq (1-q_{r4})\left(1+\frac{1}{2}\|\Bar{t}^B\|^2\right)-\frac{(1-q^2_{r4})^2 \|\Bar{t}^B\|^2}{8}\nonumber\\
      &\geq(1-q_{r4})\left(1+\frac{1}{4}\|\Bar{t}^B\|^2\right)\geq 0\nonumber
    \end{align}
    Hence the result follows.
\end{proof}
We now state the first main result of this paper.
\begin{theorem}
    \normalfont ($\textbf{Main result 1: SGES}$) Consider the tracking error dynamics given by \eqref{eqn:system_dynamics}. Let the feedback control law \eqref{eqn:feedback_control_law} be applied to \eqref{eqn:system_dynamics}. Then, $\bx:=[\qhbd-\mathbf{1},\;\ohbd]=0$ is semi-globally exponentially stable equilibrium point for the closed-loop dynamics \eqref{eqn:system_dynamics}, i.e
\begin{align}
    \| \bx(t) \|\leq m(R)e^{-\alpha(R)(t-t_0)}\|\bx(t_0)\|,
\end{align}
for all $\bx(t_0)\in\mathcal{B}_R$ and $t\geq t_0\geq 0$. The expressions for $m(R)$ and $\alpha(R)$ are given by
\begin{align}
    & \alpha(R)=\frac{\beta(R)k_d}{2J_M},\nonumber\\
    & m(R)=\frac{\sqrt{(1+R^2)e^{\frac{J_M}{2k_p}R^2}-1+\sqrt{\frac{3}{2}}(1+R)\frac{J_M}{2}R^2}}{R\sqrt{k}_1},\nonumber\\
    &    0<\beta(R)<\min\left\{\frac{c}{2k_p},\frac{k_p}{2(1+R)\left(k_d[{\frac{3}{2}}(1+R)^2+\frac{c}{J_M}]\right)},1\right\},\nonumber\\
    &    c\geq \max\left\{\frac{4k_pk_0}{k_d},\frac{3J_M(1+R)^2}{4},\frac{J_Mk_p}{J_m}\right\},\nonumber\\
    &   k_0=\left(\frac{9}{4}J_M(1+R)+\frac{(3k_d/2+9J_M\delta(1+R)/2)^2(1+R)^3}{k_p}\right)\nonumber
\end{align}
where $J_M$ and $J_m$ are the maximum and minimum eigenvalues of dual inertia matrix $J$ respectively.
\end{theorem}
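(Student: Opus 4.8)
The plan is to strengthen the negative \emph{semi}-definite estimate $\dot V_0=-k_d\|(\ohbdb)^{\mathrm s}\|^2$ from Theorem~\ref{thm:as} into a negative \emph{definite} one by adding a cross term. Since $\dot V_0$ vanishes on the whole set $\{(\ohbdb)^{\mathrm s}=0\}$ regardless of the configuration error, $V_0$ alone certifies only asymptotic convergence and cannot deliver an exponential rate. I would therefore introduce the augmented Lyapunov function $V=V_0+\beta V_1$, where $V_1$ couples the proportional configuration error appearing in the feedback with the velocity error, e.g.\ $V_1=\frac{\qhstarsqmones\circ(\ohbdb)^{\mathrm s}}{1+\qenorm}$, and $\beta=\beta(R)>0$ is a small gain fixed later. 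The purpose of $V_1$ is to inject, through the kinematics $\dqhbd=\tfrac12\qhbd\ohbdb$, a term proportional to $-\|\qhstarsqmones\|^2$ into $\dot V$, which is exactly the configuration-error decay missing from $\dot V_0$.

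The argument then splits into two halves. First, I would show that $V$ is a legitimate (positive definite, radially comparable) Lyapunov function on $\mathcal B_R$: using $\tfrac{s}{1+R^2}\le\ln(1+s)\le s$ with $s=\qhbdmone\circ\qhbdmone$ on the ball, together with $J_m\|(\ohbdb)^{\mathrm s}\|^2\le(\ohbdb)^{\mathrm s}\circ(J\star(\ohbdb)^{\mathrm s})\le J_M\|(\ohbdb)^{\mathrm s}\|^2$, and a Young/Cauchy--Schwarz bound $|\beta V_1|\le\tfrac12 V_0$ valid once $\beta$ is small, I would obtain sandwich bounds $k_1\|\bx\|^2\le V\le k_2\|\bx\|^2$ on $\mathcal B_R$. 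These comparison constants are precisely what generate the factors $(1+R^2)$, $e^{\frac{J_M}{2k_p}R^2}$ and $k_1$ in $m(R)$ --- the exponential arising when one inverts the logarithm to pass from a bound on $V$ back to a bound on $\qe$, and the $\sqrt{3/2}\,(1+R)$ piece arising from the cross term together with Lemma~\ref{lemma:inequality_new}. Second, I would compute $\dot V=\dot V_0+\beta\dot V_1$; after substituting the control law \eqref{eqn:feedback_control_law} and the dynamics \eqref{eqn:system_dynamics}, the leading contribution of $\beta\dot V_1$ is $-\beta\frac{\|\qhstarsqmones\|^2}{1+\qenorm}$ plus a velocity-quadratic term and indefinite cross terms, and applying Lemma~\ref{lemma:inequality_new} to substitute $\|\qhstarsqmones\|^2\ge\tfrac12\qe$ converts this into a genuine $-\mathcal O(\qe)$ term in the true state.

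Collecting everything, I would bound $\dot V$ above by a quadratic form in $(\qenormsqrt,\omegaenormsqrt)$ and dominate the cross terms by Young's inequality; demanding that this form be negative definite produces exactly the stated admissibility conditions on $c$, $k_0$ and $\beta(R)$, and yields $\dot V\le-2\alpha(R)V$ with $\alpha(R)=\frac{\beta(R)k_d}{2J_M}$. The comparison lemma then gives $V(t)\le V(0)e^{-2\alpha(R)t}$, and the sandwich bounds convert this into $\|\bx(t)\|\le m(R)e^{-\alpha(R)(t-t_0)}\|\bx(t_0)\|$. The main obstacle is the second half: unlike in Theorem~\ref{thm:as}, the derivative $\dot V_1$ drags in the full nonlinear gyroscopic cross products and the feed-forward term $J\star(\qhbd^{\star}\dot{\hat{\omega}}_{\mathrm{D}/\mathrm{I}}^{\mathrm{D}}\qhbd)^{\mathrm s}$, which do not cancel cleanly, so they must be bounded uniformly in the state. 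This uniform domination is possible only after confining the state to $\mathcal B_R$, which is precisely what introduces the $(1+R)$ factors and renders the result \emph{semi-global} rather than global, while the competing requirement that $\beta(R)$ stay small enough to keep $V$ positive definite is what pins down its admissible range.
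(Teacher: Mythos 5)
Your strategy---strictifying $V_0$ by adding a small multiple $\beta$ of a configuration--velocity cross term and working on the ball $\mathcal{B}_R$---is sound and would deliver semi-global exponential stability, but it is a genuinely different construction from the paper's. The paper does not take $V=V_0+\beta V_1$; it takes $V=c\left[\exp\left(V_0/k_p\right)-1\right]+\qhstarsqmones\circ J\star\ohbd$. That is, it exponentiates $V_0$ (which converts $k_p\ln\left(1+\qe\right)$ into the algebraic factor $1+\qe$ and removes any need for your ball-restricted bound $\ln(1+s)\geq s/(1+R^2)$), it attaches a \emph{large} gain $c$ to the exponentiated energy rather than a \emph{small} gain $\beta$ to the cross term (these are equivalent up to overall scaling, your $\beta$ playing the role of $1/c$), and it weights the cross term by $J$, so that $\dot{N}$ meets $J\star\dohbd$ and the $J^{-1}$ in the dynamics cancels---your unweighted, normalized $V_1$ would instead drag in $J^{-1}$ and derivatives of the denominator $1+\qenorm$, harmless but messier. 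The paper's pivotal step, for which your sketch has no counterpart, is the scalar inequality $-xe^{x}\leq-(e^{x}-1)$ for $x>0$: applied to the exponentially weighted damping term $S$, it converts that term directly into $-\frac{k_d}{J_M}V$ plus lower-order pieces, so the decay $\dot{V}\leq-\frac{\beta(R)k_d}{J_M}V$ is obtained without ever passing through a negative-definite quadratic-form test of the kind you propose. Both arguments use Lemma \ref{lemma:inequality_new} in the same way, to extract the $-\qe$ decay from the proportional feedback, and both conclude with the comparison lemma; what the paper's route buys is that the rate constant and the comparison bounds fall out in closed form, while your route buys a more familiar, lower-tech estimate chain at the price of looser, log-based constants.

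That last point is where your sketch overreaches: you claim your construction ``produces exactly the stated admissibility conditions on $c$, $k_0$ and $\beta(R)$,'' but your Lyapunov function contains no parameter $c$ at all, and the stated $m(R)$ contains the factor $(1+R^2)e^{\frac{J_M}{2k_p}R^2}$, which arises specifically from evaluating the paper's exponentiated $V_0$ on the boundary of $\mathcal{B}_R$; likewise the condition $c\geq J_Mk_p/J_m$ comes from the positive-definiteness argument for the paper's $V$, not from any quadratic form your $V$ generates. With $V=V_0+\beta V_1$ you would prove the SGES property with different comparison constants, so if the theorem is read as asserting those particular expressions for $m(R)$, $\alpha(R)$, $c$, and $k_0$, your route establishes a respectable variant rather than the statement verbatim. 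One further shared caveat: both your sketch and the paper's proof invoke $R$-dependent bounds (e.g., $\|\qhbd^{\star}\|\leq 1+R$) along the trajectory while only assuming $\bx(t_0)\in\mathcal{B}_R$; since $\dot{V}_0\leq 0$ holds globally by Theorem \ref{thm:as}, this is patchable by inflating the radius at which the constants are evaluated, but it deserves an explicit sentence in either treatment.
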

\begin{proof}
 To prove SGES, we consider the following candidate Lyapunov function $V$
\begin{align}
    V=&c\left[\text{exp}\left(\frac{V_0}{k_p}\right)-1\right]+\qhstarsqmones\circ J\star\ohbd\nonumber\\
    = &c\left[\left(1+\qenorm\right)\text{exp}\left(\frac{1}{2k_p}\left(\ohbdb\right)^{\mathrm{s}} \circ\left(J\star\left(\ohbdb\right)^{\mathrm{s}}\right)\right)-1\right]+\qhstarsqmones\circ J\star\ohbd\nonumber
\end{align}
First, we show that $V$ is positive definite. To that end
\begin{align}
    V&=c\left[\left(1+\qenorm\right)\left(1+\frac{1}{2k_p}\left(\ohbdb\right)^{\mathrm{s}} \circ\left(J\star\left(\ohbdb\right)^{\mathrm{s}}\right)\right)-1\right]+\qhstarsqmones\circ J\star\ohbd\nonumber\\
    &\geq c\left[{\qe}+\frac{1}{2k_p}\left(\ohbdb\right)^{\mathrm{s}} \circ\left(J\star\left(\ohbdb\right)^{\mathrm{s}}\right)\right]-\frac{J_M}{2}\left[\|\qhstarsqmones\|^2+\|\ohbd\|^2\right]\nonumber\\
    &\geq \left({c}-\frac{J_M}{2}\sqrt{\frac{3}{2}}^2(1+R)^2\right)\|\qhbd-\mathbf{1}\|^2+\left(\frac{cJ_m}{2k_p}-\frac{J_M}{2}\right)\|\ohbd\|^2\nonumber
\end{align}
where for the last inequality we have used Lemma 1 from \cite{filipe2013rigid_tracking} and the fact that $\|\qhbd^\star\|\leq (1+R)$. The terms $J_m$ and $J_M$ denote the minimum and maximum eigenvalues of $J$.
Therefore, if $c>\frac{J_M}{2}\text{max}({3(1+R)^2/2},2k_p/J_m)$, then there exists $k_1>0$ such that
\begin{align}
    V\geq k_1(\qe+\|\ohbd\|^2)
    \label{eqn:radially_unbounded}
\end{align}
which ensures that $V$ is a radially unbounded function and belongs to $\mathcal{KR}$\footnote{A function $\phi:[0,\infty)\rightarrow\mathbb{R}^+$ with $\phi(0)=0$ is called a class $\mathcal{K}$ function if it is continuous and strictly increasing. In addition, a function $\phi$ is a class $\mathcal{KR}$ function, if it is a $\mathcal{K}$ function and $\underset{r\rightarrow\infty}{\lim}\;\phi(r)=\infty$.}. 
The time derivative of $V$ along the system trajectories governed by \eqref{eqn:system_dynamics} is given by
\begin{align}
\dot{V}=c\left[\text{exp}\left(\frac{V_0}{k_p}\right)\frac{\dot{V}_0}{k_p}\right]+\dot{N}
\end{align}

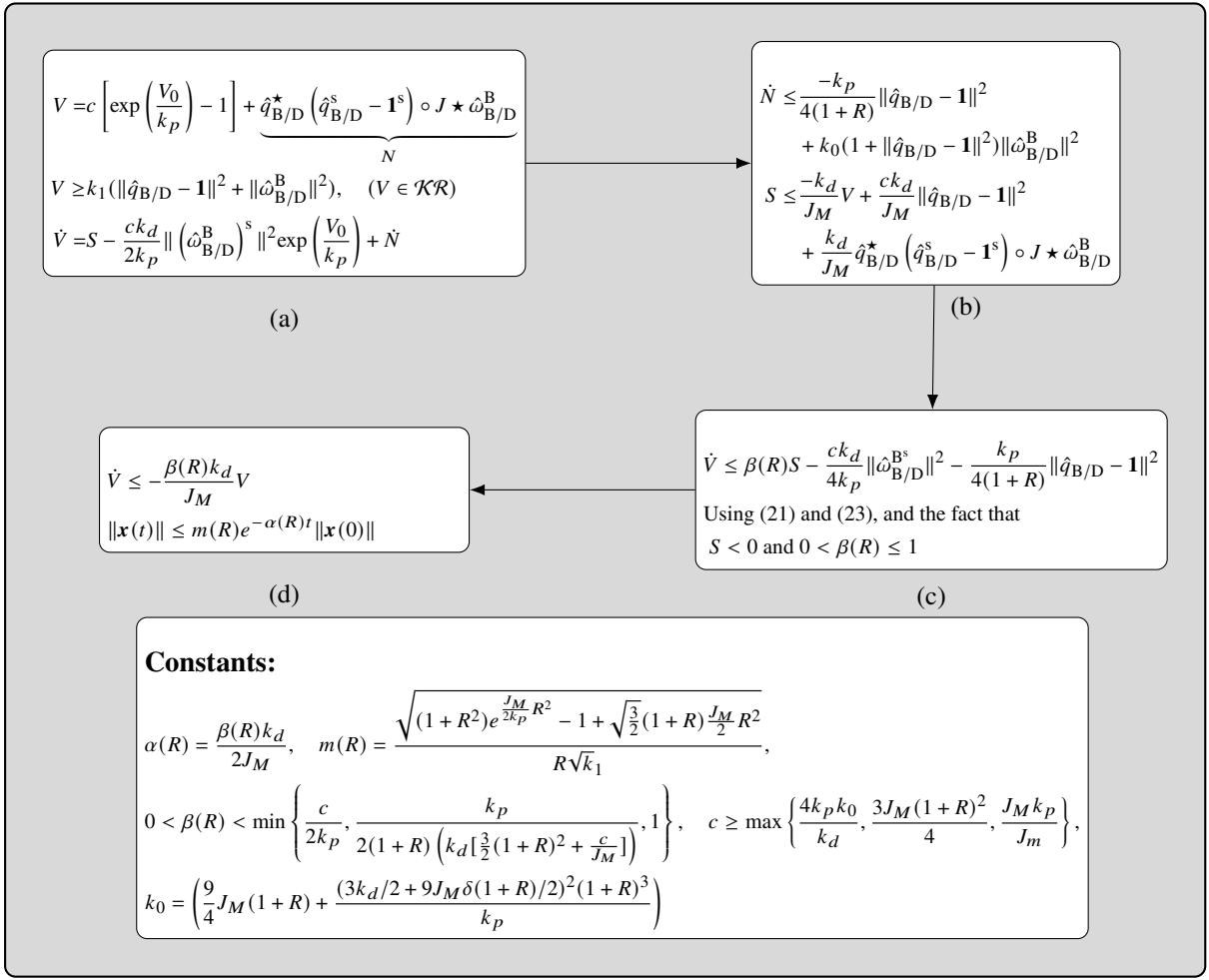
\begin{figure}[H]
\centering
\begin{tikzpicture}[block/.style={rectangle, draw, minimum height=1.5cm, minimum width=3cm,font=\small,fill=white,rounded corners,scale=0.9},
                    bigblock/.style={rectangle, draw, thick,rounded corners},
                    line/.style={-{Latex[length=2mm]}},
                    node distance=2cm and 3cm]

  \node[block] (a) {    \begin{minipage}{.2\textwidth}\begin{align}
   V=&c\left[\text{exp}\left(\frac{V_0}{k_p}\right)-1\right]+\underbrace{\qhstarsqmones\circ J\star\ohbd}_N\nonumber\\ 
       V\geq& k_1(\qe+\|\ohbd\|^2),\quad(V\in\mathcal{KR})\nonumber\\
   \dot{V}=&S-\frac{ck_d}{2k_p}\|\left(\ohbdb\right)^{\mathrm{s}}\|^2\text{exp}\left(\frac{V_0}{k_p}\right)+\dot{N}\nonumber
\end{align}
\end{minipage}
};
  \node[block, right=of a] (b) {\small\begin{minipage}{.2\textwidth}\begin{align}
        \dot{N}\leq&\frac{-k_p}{4(1+R)}\qenorm\nonumber\\
        &+k_0(1+\qenorm)\omegaenorm\nonumber\\
     S\leq& \frac{-k_d}{J_M}V+\frac{ck_d}{J_M}\qenorm\nonumber\\
     &+\frac{k_d}{J_M}\qhstarsqmones\circ J\star\hat{\omega}_{\mathrm{B} / \mathrm{D}}^{{\mathrm{B}}}\nonumber
\end{align}
\end{minipage}};
  \node[block, below=of a] (c) {
\begin{minipage}{.2\textwidth}\begin{align}
     &\dot{V}\leq-\frac{\beta(R) k_d}{J_M}V\nonumber\\
     &    \|\boldsymbol{x}(t)\|\leq m(R)e^{-\alpha(R)t}\|\boldsymbol{x}(0)\|\quad\quad\quad\quad\nonumber
\end{align}
\end{minipage}};
  \node[block, right=of c] (d) {\begin{minipage}{.2\textwidth}\begin{align}
     &\dot{V}\leq\beta(R) S-\frac{ck_d}{4k_p}\|\hat{\omega}_{\mathrm{B} / \mathrm{D}}^{{\mathrm{B}}^{\mathrm{s}}}\|^2-\frac{k_p}{4(1+R)}\qe\nonumber\\
     &\text{Using (21) and (23), and the fact that}\nonumber\\
     &\text{ $S<0$ and $0<\beta(R)\leq1$ }\quad\quad\nonumber
\end{align}
\end{minipage}};
  \node[block] (e) at ($(c.south east)+(1.9,-3)$) {\begin{minipage}{.2\textwidth}\begin{align}
  &\textbf{\large Constants:}\nonumber\\
    & \alpha(R)=\frac{\beta(R)k_d}{2J_M},\quad m(R)=\frac{\sqrt{(1+R^2)e^{\frac{J_M}{2k_p}R^2}-1+\sqrt{\frac{3}{2}}(1+R)\frac{J_M}{2}R^2}}{R\sqrt{k}_1},\nonumber\\
    &    0<\beta(R)<\min\left\{\frac{c}{2k_p},\frac{k_p}{2(1+R)\left(k_d[{\frac{3}{2}}(1+R)^2+\frac{c}{J_M}]\right)},1\right\},\quad    c\geq \max\left\{\frac{4k_pk_0}{k_d},\frac{3J_M(1+R)^2}{4},\frac{J_Mk_p}{J_m}\right\},\nonumber\\
    &   k_0=\left(\frac{9}{4}J_M(1+R)+\frac{(3k_d/2+9J_M\delta(1+R)/2)^2(1+R)^3}{k_p}\right)\nonumber
\end{align}
\end{minipage}};
  \node[below=0.3cm of a] {(a)};
  \node at ($(b.south)+(0.4,-0.3)$) {(b)};
  \node[below=0.3cm of c] {(d)};
  \node[below=0.1cm of d] {(c)};

  \draw[-{Latex[length=2mm]}] (a) -- (b) node[midway, above] {};
  \draw[-{Latex[length=2mm]}] (b) -- (d) node[midway, right] {};
  \draw[-{Latex[length=2mm]}] (d) -- (c) node[midway, below] {};

  \begin{scope}[on background layer]
    \node[bigblock, fit=(a) (b) (c) (d) (e), inner sep=0.5cm, fill=gray!30] (big) {};
  \end{scope}
  

\end{tikzpicture}
\caption{The figure illustrates the main steps involved in proving Theorem \ref{thm:es}. We begin by choosing a candidate Lyapunov function $V$, as demonstrated in Fig. (1a), and select a suitably large independent parameter $c$. This selection ensures that the function $V$ will be radially unbounded, i.e., $V\in\mathcal{KR}$.
Next, we leverage Lemma \ref{lemma:inequality_new} in conjunction with dual quaternion algebra to derive an upper limit for $\dot{N}$ and $S$ (Fig. (1b)). With the established condition that $S<0$, we then determine an independent parameter $0<\beta(R)< 1$, which is used to establish an upper boundary for $\dot{V}$, as depicted in Fig. (1c).
In our final step, as illustrated in Fig. (1d), we use the previously obtained upper bounds for $\dot{N}$ and $S$ to derive an upper limit for $\dot{V}$. This, in turn, permits us to establish the semi-global exponential stability, thus completing the proof of Theorem \ref{thm:es}.}
\label{fig:outline_of_the_proof}
\end{figure}






where $N=\qhstarsqmones\circ J\star\ohbd$. Now, using the fact that $\dot{V}_0=-k_d\|\left(\ohbdb\right)^{\mathrm{s}}\|^2$ (from Theorem \ref{thm:as}), we have
\begin{align}
    \dot{V}=&c\left[\frac{-k_d\|\left(\ohbdb\right)^{\mathrm{s}}\|^2}{k_p}\text{exp}\left(\frac{V_0}{k_p}\right) \right]+\dot{N}\nonumber\\
    = &c\left[\frac{-k_d\|\left(\ohbdb\right)^{\mathrm{s}}\|^2}{k_p}\left(1+{\qe}\right)\text{exp}\left(\frac{1}{2k_p}\left(\ohbdb\right)^{\mathrm{s}} \circ\left(J\star\left(\ohbdb\right)^{\mathrm{s}}\right)\right)\right]+\dot{N}\nonumber\\
    =&S-\frac{ck_d}{2k_p}\|\left(\ohbdb\right)^{\mathrm{s}}\|^2\text{exp}\left(\frac{V_0}{k_p}\right)+\dot{N}
    \label{eqn:vdot_upper_bound}
\end{align}
where $S$ is given by
\begin{align}
    S=\frac{-ck_d\|\left(\ohbdb\right)^{\mathrm{s}}\|^2}{2k_p}\left(1+{\qe }\right)\text{exp}\left(\frac{1}{2k_p}\left(\ohbdb\right)^{\mathrm{s}} \circ\left(J\star\left(\ohbdb\right)^{\mathrm{s}}\right)\right)
    \label{eqn:expression_S}
\end{align}
and $\dot{N}$ is given by
\begin{equation}
\begin{split}
    \dot{N}=&[\qhbddot^\star(\qhbds-\mathbf{1}^\mathrm{s})+\qhbd^\star\qhbddot^\mathrm{s}]\circ J\star\ohbdb+\qhstarsqmones\circ J\star{\dohbd}.\nonumber
\end{split}
\end{equation}

Substituting $\hat{f}^\mathrm{B}$ (which is present in the expression for $\dohbd$) from \eqref{eqn:feedback_control_law} in expression for $\dot{N}$, we have
\begin{align}
    \dot{N}=&[\qhbddot^\star(\qhbds-\mathbf{1}^\mathrm{s})+\qhbd^\star\qhbddot^\mathrm{s}]\circ J\star\ohbdb+\qhstarsqmones\circ\left[- k_{p}\frac{\left(\qhstarsqmones\right)}{1+\qenorm}-k_{d}\left(\ohbdb\right)^{\mathrm{s}}\right.\nonumber \\
&\left.+\ohdib \times\left(J\star\left(\ohdib\right)^{\mathrm{s}}\right)-\left(\ohbdb+\ohdib\right) \times\left(J\star\left(\left(\ohbdb\right)^{\mathrm{s}}+\left(\ohdib\right)^{\mathrm{s}}\right)\right)-J\star\left(\ohdib \times \ohbdb\right)^{\mathrm{s}}\right]\nonumber\\
=&[\qhbddot^\star(\qhbds-\mathbf{1}^\mathrm{s})+\qhbd^\star\qhbddot^\mathrm{s}]\circ J\star\ohbdb+\qhstarsqmones\circ\left[- {k_{p}\frac{\left(\qhstarsqmones\right)}{1+\qenorm}}-k_{d}\left(\ohbdb\right)^{\mathrm{s}}\right.\nonumber \\
&\left.-\left(\ohbdb\right) \times\left(J\star\left(\ohbdb\right)^{\mathrm{s}}\right)-\left(\ohdib\right) \times\left(J\star\left(\ohbdb\right)^{\mathrm{s}}\right)-\left(\ohbdb\right) \times\left(J\star\left(\ohdib\right)^{\mathrm{s}}\right]-J\star\left(\ohdib \times \ohbdb\right)^{\mathrm{s}}\right]\nonumber\\
\leq& \left[\frac{(\qhbd\ohbdb)^\star}{2}(\qhbds-\mathbf{1}^\mathrm{s})+\qhbd^{\star}\frac{(\qhbd\ohbd)^\mathrm{s}}{2}\right]\circ J\star\ohbdb-{k_{p}\frac{\|\qhstarsqmones\|^2}{1+\qenorm}}\nonumber\\
&-k_d\qhstarsqmones\circ\omegae+\frac{J_M}{2}\sqrt{\frac{3}{2}}(1+R)(1+\qenorm)\omegaenorm\nonumber\\
&\quad+3J_M\sqrt{\frac{3}{2}}(1+R)\delta\omegaenormsqrt\|\qhbdmone\|\nonumber
\label{eqn:N_dot_first}
\end{align}
where we have used Lemma 1 from  \cite{filipe2013rigid_tracking} and $\delta=\sup\;\|\ohbdi\|$ in the last step. 
Using Lemma \ref{lemma:inequality_new}, we have
\begin{align}
    -\|\qhstarsqmones\|^2\leq -\frac{\qe}{2}\nonumber
\end{align} 
 Further, using Lemma 1 from \cite{filipe2013rigid_tracking} and the fact that $\|\qhbd^\mathrm{s}-\mathbf{1}^\mathrm{s}\|\leq 1+\|\qhbd-\mathbf{1}\|^2$, we have
\begin{align}
    \dot{N}&\leq -\frac{k_p\qenorm}{2(1+R)}+\frac{3J_M(1+R)}{2}(1+\qenorm)\omegaenorm-k_d\qhstarsqmones\circ\omegae+\nonumber\\
    &\quad\frac{J_M}{2}\sqrt{\frac{3}{2}}(1+R)(1+\qenorm)\omegaenorm
    +\frac{9}{2}(1+R)J_M\delta(1+\qenorm)\|\qhbdmone\|\|\ohbdb\|\nonumber\\
    &\leq -\frac{k_p}{4(1+R)}\qenorm-\frac{k_p}{4(1+R)}\qe+\frac{9J_M(1+R)}{4}(1+\qenorm)\omegaenorm\nonumber\\
    &\quad +(3k_d/2+9J_M\delta(1+R)/2)(1+R)\|\qhbdmone\|\|\ohbdb\|\nonumber\\
    &\leq \frac{-k_p}{4(1+R)}\qenorm+\left[\frac{9}{4}J_M(1+R)+\frac{(3k_d/2+9J_M\delta(1+R)/2)^2(1+R)^3}{k_p}\right](1+\qenorm)\omegaenorm\nonumber\\
    &\leq\frac{-k_p}{4(1+R)}\qenorm+k_0(1+\qenorm)\omegaenorm
\end{align}
where $k_0$ is given by
\begin{align}
   k_0=\left(\frac{9}{4}J_M(1+R)+\frac{(3k_d/2+9J_M\delta(1+R)/2)^2(1+R)^3}{k_p}\right)\nonumber
\end{align}
Coming to expression of $S$ in \eqref{eqn:expression_S}, we have
\begin{align}
   S=&\frac{-ck_d}{k_p}\left(1+{\qe }\right)\text{exp}\left(\frac{1}{2k_p}\left(\ohbdb\right)^{\mathrm{s}} \circ\left(J\star\left(\ohbdb\right)^{\mathrm{s}}\right)\right)\left(\frac{\|(\ohbdb)^{\mathrm{s}}\|^2J_M}{2J_M}\right)\nonumber\\
   \leq& \frac{-ck_d}{J_M}\left(1+{\qe }\right)\text{exp}\left(\frac{1}{2k_p}\left(\ohbdb\right)^{\mathrm{s}} \circ\left(J\star\left(\ohbdb\right)^{\mathrm{s}}\right)\right)\left(\frac{\left(\ohbdb\right)^{\mathrm{s}} \circ\left(J\star\left(\ohbdb\right)^{\mathrm{s}}\right)}{2k_p}\right)
\end{align}
 Using the fact that $-xe^x<-(e^x-1)$ for all $x>0$, we have
\begin{align}
    S\leq&\frac{-ck_d}{J_M}\left(1+{\qe }\right)\left(\text{exp}\left(\frac{1}{2k_p}\left(\ohbdb\right)^{\mathrm{s}} \circ\left(J\star\left(\ohbdb\right)^{\mathrm{s}}\right)\right)-1\right)\nonumber\\
    =&\frac{-ck_d}{J_M}\left[\left(1+{\qe }\right)\text{exp}\left(\frac{1}{k_p}\left(\ohbdb\right)^{\mathrm{s}} \circ\left(J\star\left(\ohbdb\right)^{\mathrm{s}}\right)\right)\right]+\frac{ck_d}{J_M}\left(1+{\qe }\right)\nonumber\\
    \leq&\frac{-k_d}{J_M}\left[c\left[\left(1+{\qe }\right)\text{exp}\left(\frac{1}{2k_p}\left(\ohbdb\right)^{\mathrm{s}} \circ\left(J\star\left(\ohbdb\right)^{\mathrm{s}}\right)\right)-1\right]\right]-\frac{ck_d}{J_M}\nonumber\\
    -&\frac{k_d}{J_M}\qhstarsqmones\circ J\star\hat{\omega}_{\mathrm{B} / \mathrm{D}}^{{\mathrm{B}}}+\frac{ck_d}{J_M}\qenorm+\frac{ck_d}{J_M}+\frac{k_d}{J_M}\qhstarsqmones\circ J\star\hat{\omega}_{\mathrm{B} / \mathrm{D}}^{{\mathrm{B}}}\nonumber\\
    \leq&\frac{-k_d}{J_M}V+\frac{ck_d}{J_M}\qenorm+\frac{k_d}{J_M}\qhstarsqmones\circ J\star\hat{\omega}_{\mathrm{B} / \mathrm{D}}^{{\mathrm{B}}}
    \label{eqn:s_upper_bound}
\end{align}
Since $S$ is non-positive, we choose $\beta(R)\in(0,1)$ such that
\begin{align}
    \dot{V}\leq &\beta(R) S-\frac{ck_d}{2k_p}(1+\qe)\|\left(\ohbdb\right)^{\mathrm{s}}\|^2-\frac{k_p}{4(1+R)}\qe+k_0(1+\qe)\|\left(\ohbdb\right)^{\mathrm{s}}\|^2\nonumber\\
        \leq&\beta(R) S-\frac{ck_d}{4k_p}(1+\qe)\|\left(\ohbdb\right)^{\mathrm{s}}\|^2-\frac{k_p}{4(1+R)}\qe\nonumber\\
        &-\left(\frac{ck_d}{4k_p}-k_0\right)(1+\qe)\|\left(\ohbdb\right)^{\mathrm{s}}\|^2\nonumber\\
    \leq&\beta(R) S-\frac{ck_d}{4k_p}\|\left(\ohbdb\right)^{\mathrm{s}}\|^2-\frac{k_p}{4(1+R)}\qe-\left(\frac{ck_d}{4k_p}-k_0\right)(1+\qe)\|\left(\ohbdb\right)^{\mathrm{s}}\|^2\nonumber\\
    \leq &\beta(R) S-\frac{ck_d}{4k_p}\|\left(\ohbdb\right)^{\mathrm{s}}\|^2-\frac{k_p}{4(1+R)}\qe
\end{align}
where $c$ is chosen such that
\begin{align}
    c\geq \max\left\{\frac{4k_pk_0}{k_d},\frac{3J_M(1+R)^2}{4},\frac{J_Mk_p}{J_m}\right\}
\end{align}
Using \eqref{eqn:s_upper_bound}, we have
\begin{align}
    \dot{V}\leq&\beta(R)\left(\frac{-k_d}{J_M}V+\frac{ck_d}{J_M}\qenorm+\frac{k_d}{J_M}\qhstarsqmones\circ J\star\hat{\omega}_{\mathrm{B} / \mathrm{D}}^{{\mathrm{B}}}\right)-\frac{ck_d}{4k_p}\|\left(\ohbdb\right)^{\mathrm{s}}\|^2\nonumber\\
    &-\frac{k_p}{4(1+R)}\qe\nonumber\\
    \leq& -\frac{\beta(R) k_dV}{J_M}-\left(\frac{k_p}{4(1+R)}-\frac{\beta(R) c k_d}{J_M}\right)\qe-\frac{ck_d}{4k_p}\|\left(\ohbdb\right)^{\mathrm{s}}\|^2\nonumber\\
    &+\frac{\beta(R) k_d}{J_M}\qhstarsqmones\circ J\star\left(\ohbdb\right)^{\mathrm{s}}\nonumber\\
     \leq& -\frac{\beta(R) k_dV}{J_M}-\left(\frac{k_p}{4(1+R)}-\frac{\beta(R) c k_d}{J_M}\right)\qe-\frac{ck_d}{4k_p}\|\left(\ohbdb\right)^{\mathrm{s}}\|^2\nonumber\\
     &+\frac{\beta(R) k_d}{2}\left(\|\qhstarsqmones\|^2+\|\left(\ohbdb\right)^{\mathrm{s}}\|^2 \right)\nonumber\\
     \leq & -\frac{\beta(R) k_d}{J_M}V+\left(\frac{\beta(R) k_d}{2}\left({\frac{3}{2}}(1+R)^2+\frac{c}{J_M}\right)-\frac{k_p}{4(1+R)}\right)\qe+\frac{k_d}{2}\left(\beta(R)-\frac{c}{2k_p}\right)\|\left(\ohbdb\right)^{\mathrm{s}}\|^2\nonumber\\
     \leq&-\frac{\beta(R) k_d}{J_M}V
     \label{eqn:lyapunov_final}
\end{align}
where $\beta(R)$ is chosen such that
\begin{align}
    0<\beta(R)<\min\left\{\frac{c}{2k_p},\frac{k_p}{2(1+R)\left(k_d[{\frac{3}{2}}(1+R)^2+\frac{c}{J_M}]\right)},1\right\}
\end{align}
Therefore, by comparison lemma, we have
\begin{align}
    V(t)\leq V(0)\text{exp}\left(-\frac{\beta(R) k_dt}
    {J_M}\right),\quad\forall t\geq 0
    \label{eqn:Vt_less_than_V0_exp}
\end{align}
Clearly from \eqref{eqn:radially_unbounded}, $V(t)$ is lower bounded by $k_1\|\bx(t)\|^2$. For $\bx(0)\neq 0$, we have
\begin{align}
    V(0)=\frac{V(0)}{\|\bx(0)\|^2}\|\bx(0)\|^2\leq \underbrace{\frac{\left(\sqrt{(1+\|\bx(0)\|^2)e^{\frac{J_M}{2k_p}\|\bx(0)\|^2}-1+\sqrt{\frac{3}{2}}(1+R)\frac{J_M}{2}\|\bx(0)\|^2}\right)^2}{\|\bx(0)\|^2}}_{:=L(\|\bx(0)\|)}\|\bx(0)\|^2
        \label{eqn:v_0_upper_bound}
        \end{align}
        It can be easily be shown that $L(\|\bx(0)\|)$ is an increasing function for all $\|\bx(0)\|>0$. Since $\bx(0)\in\mathcal{B}_R$, \eqref{eqn:v_0_upper_bound} becomes
        \begin{align}
  V(0)  \leq \frac{\left(\sqrt{(1+R^2)e^{\frac{J_M}{2k_p}R^2}-1+\sqrt{\frac{3}{2}}\frac{J_M}{2}(1+R)R^2}\right)^2}{R^2}\|\bx(0)\|^2
    \label{eqn:v_0_upper_bound_1}
\end{align}
Using \eqref{eqn:v_0_upper_bound_1} and \eqref{eqn:radially_unbounded}, \eqref{eqn:Vt_less_than_V0_exp} becomes
\begin{align}
    k_1\|\bx(t)\|^2\leq V(t)\leq \frac{\left(\sqrt{(1+R^2)e^{\frac{J_M}{2k_p}R^2}-1+\sqrt{\frac{3}{2}}\frac{J_M}{2}(1+R)R^2}\right)^2}{R^2}\text{exp}\left(-\frac{\beta(R) k_dt}
    {J_M}\right)\|\bx(0)\|^2
    \label{eqn:Vt_less_than_x0}
\end{align}
Consequently, \eqref{eqn:Vt_less_than_x0} becomes
\begin{align}
    \|\bx(t)\|\leq m(R)e^{-\alpha(R)t}\|\bx(0)\|
\end{align}
where $\alpha(R)=\frac{\beta(R)k_d}{2J_M}$ and $m(R)=\frac{\sqrt{(1+R^2)e^{\frac{J_M}{2k_p}R^2}-1+\sqrt{\frac{3}{2}}(1+R)\frac{J_M}{2}R^2}}{R\sqrt{k}_1}$. From Definition \ref{thm:es}, we  conclude that $\bx=0$ is a semi-global exponentially stable equilibrium point for the closed-loop system \eqref{eqn:system_dynamics} under feedback control law \eqref{eqn:feedback_control_law}.
\end{proof}
In the following section, we discuss the relation of the proposed feedback controller \eqref{eqn:feedback_control_law} with optimal control.
{
\subsection{Relation of proposed feedback control law to optimal control}
\begin{theorem}
    \normalfont [Theorem 8.2, \cite{haddad2008nonlinear_wassim}] Consider the nonlinear controlled dynamical system $\dot{\bx}=F(\bx,\bu)$ with performance functional
\begin{align}
J\left(\bx_0, \bu\right) \triangleq \int_0^{\infty} L(\bx(t), \bu(t)) \mathrm{dt},\nonumber
\end{align}
 Assume that there exists a continuously differentiable function $V: \mathcal{B}_R \rightarrow \mathbb{R}$ and a control law $\phi$ such that
 \begin{subequations}
    \begin{align}
& V(0)=0, \quad V(\bx)>0, \quad\phi(0)=0, \quad V^{\prime}(\bx) F(\bx, \phi(\bx))<0\quad \bx \in \mathcal{B}_R, \quad \bx \neq 0 \\
& H(\bx, \phi(\bx))=0,\quad H(\bx, \bu) \geq 0, \quad \bx \in \mathcal{B}_R
\end{align}
\label{eqn:V_conditions_optimal}
 \end{subequations}
where $H(\bx, \bu) \triangleq L(\bx, \bu)+V^{\prime}(\bx) F(\bx, \bu)$.
In addition, if $
\bx_0 \in \mathcal{B}_R$ then the feedback control $\bu=\phi(\bx)$ minimizes $J\left(\bx_0, \bu\right)$ in the sense that
\begin{align}
J\left(\bx_0, \phi(\bx(\cdot))\right)=\min _{\bu \in \mathbb{R}^m} J\left(\bx_0, \bu\right)\nonumber
\end{align}
\label{thm:wassim}
\end{theorem}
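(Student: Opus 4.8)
The plan is to argue this as a classical Hamilton--Jacobi--Bellman verification theorem, treating the candidate function $V$ as the optimal cost-to-go. The central observation is that the hypotheses \eqref{eqn:V_conditions_optimal} decouple the running cost from the flow through the Hamiltonian: by definition $L(\bx,\bu)=H(\bx,\bu)-V'(\bx)F(\bx,\bu)$, and we are given $H(\bx,\bu)\geq 0$ on all of $\mathcal{B}_R$ with equality exactly along $\bu=\phi(\bx)$. I would use this identity to evaluate the cost exactly under the proposed feedback and then to lower-bound it along every competing trajectory.

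First I would compute $J(\bx_0,\phi(\bx(\cdot)))$. Setting $\bu=\phi(\bx)$ and invoking $H(\bx,\phi(\bx))=0$ yields $L(\bx,\phi(\bx))=-V'(\bx)F(\bx,\phi(\bx))=-\dot V(\bx(t))$ along the closed loop, so that $\int_0^T L\,\mathrm{d}t$ telescopes to $V(\bx_0)-V(\bx(T))$. Because $V(0)=0$, $V(\bx)>0$, and $V'(\bx)F(\bx,\phi(\bx))<0$ on $\mathcal{B}_R\setminus\{0\}$, the function $V$ is a strict Lyapunov function for the closed loop, forcing $\bx(t)\to 0$ and hence $V(\bx(T))\to 0$. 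Taking $T\to\infty$ gives $J(\bx_0,\phi(\bx(\cdot)))=V(\bx_0)$.

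Next I would prove the lower bound $J(\bx_0,\bu)\geq V(\bx_0)$ for an arbitrary admissible control. For any $\bu(\cdot)$ whose trajectory remains in $\mathcal{B}_R$, the inequality $H(\bx,\bu)\geq 0$ gives $L(\bx,\bu)\geq -V'(\bx)F(\bx,\bu)=-\dot V(\bx(t))$, whence $\int_0^T L\,\mathrm{d}t\geq V(\bx_0)-V(\bx(T))$. For admissible controls that drive the state to the origin, so that $V(\bx(T))\to 0$, passing to the limit $T\to\infty$ yields $J(\bx_0,\bu)\geq V(\bx_0)$. Combined with the previous step, this shows $J(\bx_0,\phi(\bx(\cdot)))=V(\bx_0)\leq J(\bx_0,\bu)$, which is exactly the claimed optimality of $\phi$.

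I expect the sole delicate point to be the boundary term $\lim_{T\to\infty}V(\bx(T))$, which must vanish for the lower bound to close, and the attendant notion of admissibility. Along the optimal feedback this limit is zero by the strict Lyapunov decrease established in the first step, so $\phi$ is itself admissible. For a competing control the vanishing is not automatic from $V\geq 0$ alone, so I would fix the admissible class to be the controls that asymptotically stabilize the origin, for which $V(\bx(T))\to 0$ by construction; equivalently one may restrict to finite-cost controls and invoke positive definiteness of the integrand to force $\bx(t)\to 0$. Pinning down this class precisely is the only genuine subtlety — the rest is the same telescoping integral identity applied in the two cases.
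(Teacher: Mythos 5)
Your proof is correct, and it is essentially the canonical argument: the paper offers no proof of this statement at all, importing it verbatim as Theorem 8.2 of \cite{haddad2008nonlinear_wassim}, whose proof proceeds exactly by your two steps --- the telescoping identity $L(\bx,\phi(\bx))=-\dot V$ giving $J(\bx_0,\phi(\bx(\cdot)))=V(\bx_0)$ via the strict Lyapunov decrease, and the Hamiltonian inequality $L(\bx,\bu)\geq -\dot V$ giving the lower bound for competitors. Your closing caveat is also exactly right: in Haddad's original statement the minimum is taken over the set $\mathcal{S}(\bx_0)$ of admissible controls that asymptotically stabilize the origin, so that $V(\bx(T))\to 0$ is built into admissibility, and the paper's looser ``$\min_{\bu\in\mathbb{R}^m}$'' should be read with that restriction.
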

As a consequence of Theorem \ref{thm:wassim}, we have the following result
\begin{theorem}
    \normalfont For the system \ref{eqn:system_dynamics}, the feedback controller \ref{eqn:feedback_control_law} is the solution to the following optimal control problem i.e.
    \begin{align}
\hat{f}^\mathrm{B}=\underset{\bu\in\mathbb{R}^m}{\text{argmin}} \;\int_0^{\infty} L(\bx(t), \bu(t)) \mathrm{dt}\nonumber
    \end{align}
    where $ L(\bx(t), \bu(t))=-\dot{V}\left(:=\frac{-\partial V(\bx)}{\partial 
    \bx}\dot{\bx}\right)>0$ and $\bx:=[\qhbd-\mathbf{1},\;\ohbd]$.
\end{theorem}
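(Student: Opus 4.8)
The plan is to realize the statement as a direct corollary of Theorem \ref{thm:wassim} (the inverse-optimality result of Haddad), taking the Lyapunov function $V$ from the semi-global exponential stability proof as the value function, the feedback law $\phi(\bx):=\hat{f}^\mathrm{B}$ of \eqref{eqn:feedback_control_law} as the candidate optimal control, and the running cost fixed by the statement as $L(\bx,\bu):=-\dot{V}=-\tfrac{\partial V}{\partial\bx}F(\bx,\bu)$, where $\dot{\bx}=F(\bx,\bu)$ is the tracking dynamics \eqref{eqn:system_dynamics} written with input $\bu=\hat{f}^\mathrm{B}$ and state $\bx=[\qhbd-\mathbf{1},\,\ohbd]$. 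It then suffices to verify the two hypotheses of Theorem \ref{thm:wassim} and invoke its conclusion.

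First I would check the conditions in \eqref{eqn:V_conditions_optimal}(a). The map $V$ is continuously differentiable on $\mathcal{B}_R$, since it is built from the smooth compositions $\ln(1+\,\cdot\,)$ and $\exp(\cdot)$ applied to arguments that are polynomial in the error states; it satisfies $V(0)=0$, because both $V_0$ and the cross term $N=\qhstarsqmones\circ J\star\ohbd$ vanish at $\bx=0$; and it is positive definite by the bound \eqref{eqn:radially_unbounded}, namely $V\geq k_1(\qe+\|\ohbd\|^2)$. The strict decrescence $V'(\bx)F(\bx,\phi(\bx))=\dot{V}<0$ for $\bx\neq0$ is precisely the estimate \eqref{eqn:lyapunov_final}, $\dot{V}\leq-\tfrac{\beta(R)k_d}{J_M}V<0$. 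Together with $\phi(0)=0$ this establishes \eqref{eqn:V_conditions_optimal}(a).

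The key step is the algebraic collapse of the Hamiltonian. With the prescribed cost $L(\bx,\bu)=-V'(\bx)F(\bx,\bu)$, the quantity $H(\bx,\bu)=L(\bx,\bu)+V'(\bx)F(\bx,\bu)$ is identically zero for every admissible $\bu$, so both requirements of \eqref{eqn:V_conditions_optimal}(b), namely $H(\bx,\phi(\bx))=0$ and $H(\bx,\bu)\geq0$, hold trivially with equality. Theorem \ref{thm:wassim} then yields that $\bu=\phi(\bx)=\hat{f}^\mathrm{B}$ minimizes $J(\bx_0,\bu)$ for every $\bx_0\in\mathcal{B}_R$, with optimal value $J(\bx_0,\phi)=\int_0^\infty(-\dot{V})\,\mathrm{dt}=V(\bx_0)$, where convergence of the integral and $V(\bx(t))\to0$ are guaranteed by the exponential decay \eqref{eqn:Vt_less_than_V0_exp}.

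The point demanding the most care is reconciling the autonomous hypotheses of Theorem \ref{thm:wassim} with the present tracking problem. Because \eqref{eqn:feedback_control_law} carries feedforward terms depending on the reference trajectory, the closed loop is genuinely time varying and the requirement $\phi(0)=0$ must be read after the drift and feedforward are folded into $F$, so that only the stabilizing error-feedback portion need vanish at $\bx=0$. One must also be explicit that the degenerate identity $H\equiv0$ makes the optimality weak: every admissible control steering $\bx(t)\to0$ attains the same cost $V(\bx_0)$, so $\hat{f}^\mathrm{B}$ is \emph{a} minimizer rather than the unique one, and the running-cost positivity $L=-\dot{V}>0$ asserted in the statement is the one that holds along the closed-loop trajectory by \eqref{eqn:lyapunov_final}.
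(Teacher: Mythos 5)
Your proposal is correct and follows essentially the same route as the paper's proof: choose $L=-\dot{V}$ so that $H(\bx,\bu)=L(\bx,\bu)+V'(\bx)F(\bx,\bu)\equiv 0$ for all $\bu$, verify the Lyapunov hypotheses of Theorem \ref{thm:wassim} using $V(0)=0$, positive definiteness from \eqref{eqn:radially_unbounded}, and $\dot{V}<0$ from \eqref{eqn:lyapunov_final}, then invoke the inverse-optimality result. Your write-up is in fact more careful than the paper's one-line argument, since you explicitly flag two points the paper glosses over: the degeneracy of this choice of $L$ (with $H\equiv 0$, every admissible control driving $\bx(t)\to 0$ attains the same cost $V(\bx_0)$, so $\hat{f}^\mathrm{B}$ is \emph{a} minimizer rather than the unique one) and the reading of $\phi(0)=0$ once the reference-dependent feedforward terms of \eqref{eqn:feedback_control_law} are folded into $F$.
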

\begin{proof}
  Choose  $L(\bx(t), \bu(t))=-\dot{V}$, which implies $H(\bx,\bu)=H(\bx,\hat{f}^\mathrm{B})=0$. Furthermore, since $V(0)=0$, $V(\bx)>0$, $\dot{V}<0$, and $\hat{f}^\mathrm{B}=0$, from Theorem \ref{thm:wassim}, we can conclude that the feedback controller \ref{eqn:feedback_control_law} is the solution to the following optimal control problem i.e. $\hat{f}^\mathrm{B}=\underset{\bu\in\mathbb{R}^m}{\text{argmin}} \;\int_0^{\infty} L(\bx(t), \bu(t)) \mathrm{dt}$.
\end{proof}
}

\section{Robustness analysis\label{sec:robustness_analysis}}
In this section, we define and prove the local Input-to-State stability (ISS) for the system \eqref{eqn:system_dynamics} in the presence of time-varying additive bounded external disturbances. The ISS property provides a mechanism to ensure that for any bounded input, such as external disturbances or changes in system parameters, the state of the system remains bounded, ensuring the overall system stability. This is particularly important in aerospace applications where loss of stability can have catastrophic consequences. ISS not only assures a bounded response but also provides a measure of the degree of this boundedness, thereby providing a quantifiable measure of robustness. Formally, ISS is defined as follows:
\begin{definition}
\normalfont \textbf{(Local Input-to-State stability  (ISS)) \cite{khalil2002nonlinear}} 
 The system $\dot{\bx}=f(\bx,\bu)$ is said to be locally input-to-state stable (ISS), if there exist a class $\mathcal{KL}$\footnote{A continuous function $\beta:[0,r)\times [0,\infty)\rightarrow [0,\infty)$ is a class $\mathcal{KL}$ function if it satisfies the following three conditions. First, for a fixed $y$, $\beta(x,y)\in\mathcal{K}$ with respect to $x$. Second, $\beta(x,y)$ is decreasing with $y$. Lastly, $\underset{y\rightarrow\infty}{\lim}\;\beta(x,y)=0$.} function $\beta$, a class $\mathcal{K}$ function $\gamma$, and positive constants $k_1$ and $k_2$ such that for any initial state $\bx\left(t_0\right)$ with $\left\|\bx\left(t_0\right)\right\|<k_1$ and any input $\bu(t)$ with $\sup _{t \geq t_0}\|\bu(t)\|<k_2$, the solution $\bx(t)$ exists and satisfies
$$
\|\bx(t)\| \leq \beta\left(\left\|\bx\left(t_0\right)\right\|, t-t_0\right)+\gamma\left(\sup _{t_0 \leq \tau \leq t}\|\bu(\tau)\|\right)
$$
for all $t \geq t_0 \geq 0$. 
\end{definition}
Note that, Input-to-state stability (ISS) is particularly important in the presence of time-varying additive disturbances because it provides a framework to assess how external perturbations affect the system's state. Aerospace systems are often subjected to disturbances such as gravitational perturbations, solar radiation pressure, magnetic field interactions etc. These disturbances are inherently time-varying and can significantly impact the trajectory and attitude of the spacecraft. ISS ensures that if the disturbances are bounded, the system's state will remain bounded and converge to an acceptable range, thus preserving the system's functionality and preventing instability. This robustness is crucial for the design and operation of aerospace systems where maintaining performance in the face of such uncertainties is essential for safety, reliability, and effectiveness. The following theorem allows us to claim ISS using Lyapunov analysis.
\begin{theorem}
  \normalfont \textbf{Local ISS Theorem \cite{khalil2002nonlinear}}:  Let $D=\left\{\bx \in \mathbb{R}^n \mid\|\bx\|<r\right\}, D_u=\left\{\bu \in \mathbb{R}^m \mid\|\bu\|<r_u\right\}$, and $f:[0, \infty) \times D \times D_u \rightarrow \mathbb{R}^n$ be piecewise continuous in $t$ and locally Lipschitz in $\bx$ and $\bu$. Let $V:[0, \infty) \times D \rightarrow R$ be a continuously differentiable function such that
\begin{align*}
& \alpha_1(\|\bx\|)  \leq V(t, \bx) \leq \alpha_2(\|\bx\|) \\
& \frac{\partial V}{\partial t}+\frac{\partial V}{\partial x} f(t, \bx, \bu)  \leq-\alpha_3(\|\bx\|), \quad \forall (\bx, \bu)~~\text{s.t.}~~\|\bx\| \geq \rho(\|\bu\|)>0
\end{align*}
and $(t, \bx, \bu) \in[0, \infty) \times D \times D_u$ where $\alpha_1, \alpha_2, \alpha_3$ and $\rho$ are class $\mathcal{K}$ functions. Then, the nonlinear system $\dot{\bx}=f(t,\bx,\bu)$ is locally input-to-state stable with $\gamma=\alpha_1^{-1} \circ \alpha_2 \circ \rho\; , k_1=\alpha_2^{-1}\left(\alpha_1(r)\right)$, and $k_2=\rho^{-1}\left(\min \left\{k_1, \rho\left(r_u\right)\right\}\right)$. 
\label{thm:local_iss}
\end{theorem}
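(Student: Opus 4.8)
The statement is a standard ISS-Lyapunov theorem, so the plan is to run the classical comparison-function argument, being careful to keep the whole trajectory inside the local domain $D \times D_u$. First I would eliminate the state-dependence on the right-hand side of the decrease condition and turn it into a scalar differential inequality in $V$ alone. Because the upper sandwich $V(t,\bx) \le \alpha_2(\|\bx\|)$ is equivalent to $\|\bx\| \ge \alpha_2^{-1}(V)$, on the set where $\|\bx\| \ge \rho(\|\bu\|)$ the hypothesis $\dot V \le -\alpha_3(\|\bx\|)$ gives
\begin{align}
\dot V \;\le\; -\alpha_3\!\big(\alpha_2^{-1}(V)\big) \;=:\; -W(V),\nonumber
\end{align}
where $W = \alpha_3 \circ \alpha_2^{-1}$ is again class $\mathcal{K}$. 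This collapses the vector problem to a one-dimensional comparison.

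Next I would invoke the standard fact that the scalar autonomous equation $\dot y = -W(y)$, $y(t_0)=y_0\ge 0$, has a unique solution of the form $y(t) = \sigma(y_0,\, t-t_0)$ for some class $\mathcal{KL}$ function $\sigma$, and apply the comparison lemma. Setting $\mu := \rho\big(\sup_{\tau}\|\bu(\tau)\|\big)$, the inequality for $\dot V$ holds whenever $V > \alpha_2(\mu)$ (since then $\|\bx\| \ge \alpha_2^{-1}(V) > \mu \ge \rho(\|\bu\|)$), while the sublevel set $\{V \le \alpha_2(\mu)\}$ is positively invariant because $\dot V < 0$ on its boundary. Splitting into the transient and steady phases therefore yields the max-type estimate
\begin{align}
V(t) \;\le\; \max\big\{\, \sigma(V(t_0),\, t-t_0),\ \alpha_2(\mu) \,\big\}.\nonumber
\end{align}
Since $\alpha_1^{-1}$ is increasing and $\alpha_1(\|\bx\|) \le V$, applying $\alpha_1^{-1}$ and using $\max\{a,b\}\le a+b$ produces
\begin{align}
\|\bx(t)\| \;\le\; \beta\big(\|\bx(t_0)\|,\, t-t_0\big) + \gamma\Big(\sup_{t_0\le\tau\le t}\|\bu(\tau)\|\Big),\nonumber
\end{align}
with $\beta(s,t) := \alpha_1^{-1}\big(\sigma(\alpha_2(s),\, t)\big) \in \mathcal{KL}$ and $\gamma := \alpha_1^{-1}\circ\alpha_2\circ\rho \in \mathcal{K}$, which is exactly the claimed gain.

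The hard part, and the reason the result is only \emph{local}, will be guaranteeing that the solution never leaves $D$ so that the bounds on $V$ and $\dot V$ stay valid; this is precisely what fixes $k_1$ and $k_2$. I would work inside the sublevel set $\Omega_c := \{\bx : V \le c\}$ with $c = \alpha_1(r)$. The sandwich inequalities give $\Omega_c \subset \{\|\bx\| \le r\} = D$, and any initial state with $\|\bx(t_0)\| < k_1 := \alpha_2^{-1}(\alpha_1(r))$ satisfies $V(t_0) \le \alpha_2(\|\bx(t_0)\|) < \alpha_1(r) = c$, hence lies in $\Omega_c$. Choosing $\sup\|\bu\| < k_2 := \rho^{-1}\big(\min\{k_1,\rho(r_u)\}\big)$ then forces $\mu = \rho(\sup\|\bu\|) < k_1$, so that $\alpha_2(\mu) < c$ (the ultimate-bound set sits strictly inside $\Omega_c$) and simultaneously $\sup\|\bu\| < r_u$ (so $\bu(t)\in D_u$). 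With these containments, $\Omega_c$ is positively invariant and every step above is justified on $D \times D_u$; verifying these inclusions and the invariance of $\Omega_c$ under the stated size restrictions is the only real obstacle, after which the comparison bound and the back-substitution through $\alpha_1^{-1}, \alpha_2, \rho$ deliver the stated $\gamma$, $k_1$, and $k_2$ directly.
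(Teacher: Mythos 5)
Your proposal is correct and follows essentially the same route as the source: the paper gives no proof of this statement (it is quoted from the cited reference \cite{khalil2002nonlinear}), and your argument reconstructs precisely the standard proof found there --- reduction to the scalar comparison inequality $\dot V \le -\alpha_3\big(\alpha_2^{-1}(V)\big)$ off the sublevel set $\{V \le \alpha_2(\mu)\}$ with $\mu = \rho\big(\sup\|\bu\|\big)$, positive invariance of that set, a $\mathcal{KL}$ bound from the scalar comparison equation, and confinement to $\Omega_c$ with $c = \alpha_1(r)$ to justify the local restrictions, which delivers exactly the stated $\gamma = \alpha_1^{-1}\circ\alpha_2\circ\rho$, $k_1 = \alpha_2^{-1}\big(\alpha_1(r)\big)$, and $k_2 = \rho^{-1}\big(\min\{k_1,\rho(r_u)\}\big)$. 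The only point left implicit is forward completeness of solutions on $[t_0,\infty)$, which your invariance of the compact set $\Omega_c$ already supplies via the standard continuation argument, so no genuine gap remains.
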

We now state the second main result of this paper.
\begin{theorem}
    \normalfont ($\textbf{Main result 2: Local ISS}$) Consider the dual quaternion-based rigid body dynamics with additive bounded disturbances $d_1(t),\;d_2(t)\in\mathcal{D}$ for all $t\geq 0$ where $\mathcal{D}$ is a compact set as follows:
\begin{align}
{\dqhbd}&=\frac{1}{2}\qhbd\ohbdb+d_1(t)\nonumber\\
\left(\dot{\hat{\omega}}_{\mathrm{B} / \mathrm{D}}^{\mathrm{B}}\right)^{\mathrm{s}}&=\left(J\right)^{-1} \star\left(\hat{f}^\mathrm{B}-\left(\ohbdb+\ohdib\right) \times\left(J\star\left(\left(\ohbdb\right)^{\mathrm{s}}+\left(\ohdib\right)^{\mathrm{s}}\right)\right)\right)\nonumber \\
&\quad\quad\left.-J\star\left(\qhbd^{\star} \dot{\hat{\omega}}_{\mathrm{D} / \mathrm{I}}^{\mathrm{D}} \qhbd\right)^{\mathrm{s}}-J\star\left(\ohdib \times \ohbdb\right)^{\mathrm{s}}+d_2(t)\right)\label{eqn:system_dynamics_disturbances}
\end{align} 
The closed-loop system \eqref{eqn:system_dynamics_disturbances} under the feedback control law \eqref{eqn:feedback_control_law} is local ISS.
\end{theorem}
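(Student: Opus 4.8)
The plan is to recycle the Lyapunov function $V$ constructed in the proof of Main result 1 and invoke the Local ISS Theorem (Theorem \ref{thm:local_iss}), treating the disturbance pair $\bu := (d_1, d_2)$ as the exogenous input. The sandwich bounds required by Theorem \ref{thm:local_iss} are already available on the ball $\mathcal{B}_R$: the lower bound $V \ge k_1 \|\bx\|^2 =: \alpha_1(\|\bx\|)$ is exactly \eqref{eqn:radially_unbounded}, while the upper bound $V \le L(\|\bx\|)\|\bx\|^2 =: \alpha_2(\|\bx\|)$ follows from \eqref{eqn:v_0_upper_bound} together with the monotonicity of $L$, so both $\alpha_1,\alpha_2$ are class $\mathcal{K}$ on $\mathcal{B}_R$.

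The second step is to compute $\dot V$ along the disturbed dynamics \eqref{eqn:system_dynamics_disturbances}. Because the disturbances enter additively, the chain rule splits the derivative as $\dot V = \dot V_{\mathrm{nom}} + \Delta$, where $\dot V_{\mathrm{nom}}$ is exactly the quantity already bounded in \eqref{eqn:lyapunov_final} by $-\tfrac{\beta(R)k_d}{J_M}V$, and $\Delta$ collects the new terms produced by $d_1$ in $\dqhbd$ and by $J^{-1}\star d_2$ in $(\dohbd)^{\mathrm{s}}$. Differentiating the two pieces of $V$ shows that $\Delta$ consists of the circle products of $d_1$ with the $\qhbd$-gradient and of $d_2$ with the $\ohbd$-gradient; schematically these are terms like $\exp(V_0/k_p)\,[(\qhbdmone)\circ d_1]$, $\qhstarsqmones\circ d_2$, and $(\ohbdb)^{\mathrm{s}}\circ d_2$, the latter two arising from $N$ and from the quadratic-in-velocity part of $V_0$ respectively.

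The third step bounds $\Delta$ on $\mathcal{B}_R$. On the compact ball the exponential factor $\exp(V_0/k_p)$, the norm $\|\qhbd^\star\|\le 1+R$, and $\|\qhstarsqmones\|$ are all bounded, so after using Lemma \ref{lemma:inequality_new}, Lemma 1 of \cite{filipe2013rigid_tracking}, and Cauchy--Schwarz one obtains an estimate of the form $\Delta \le \kappa(R)\,\|\bx\|\,\|\bu\|$ for a finite constant $\kappa(R)$. Combining with \eqref{eqn:lyapunov_final} and using $V\ge k_1\|\bx\|^2$ yields $\dot V \le -\tfrac{\beta(R)k_d k_1}{J_M}\|\bx\|^2 + \kappa(R)\|\bx\|\|\bu\|$. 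Hence whenever $\|\bx\| \ge \rho(\|\bu\|) := \tfrac{2 J_M \kappa(R)}{\beta(R)k_d k_1}\|\bu\|$ one has $\dot V \le -\alpha_3(\|\bx\|)$ with $\alpha_3(\|\bx\|)=\tfrac{\beta(R)k_d k_1}{2 J_M}\|\bx\|^2$, a class $\mathcal{K}$ function. All hypotheses of Theorem \ref{thm:local_iss} are then met on $\mathcal{B}_R$, so the closed-loop system is locally ISS with the $\gamma$, $k_1$, $k_2$ prescribed there.

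I expect the main obstacle to be the careful derivation and bounding of the disturbance contribution $\Delta$: the Lyapunov function mixes an exponential in $V_0$ with the indefinite cross term $N$, so its gradient involves products of the velocity error, the pose error, and the exponential weight. Keeping every one of these factors bounded is exactly what confines the argument to the compact ball $\mathcal{B}_R$ and is the reason only \emph{local} (rather than global) ISS is obtained; the algebra is routine once the circle-product identities \eqref{eqn:circ_prod_property}--\eqref{eqn:property_9_in_other_paper} and Lemma \ref{lemma:inequality_new} are applied, but the bookkeeping to certify that $\kappa(R)$ is finite is the delicate part.
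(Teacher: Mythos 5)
Your proposal is correct and follows the same skeleton as the paper's proof: both recycle the SGES Lyapunov function $V$, compute $\dot V$ along the disturbed dynamics \eqref{eqn:system_dynamics_disturbances}, show that $\dot V$ satisfies a decay inequality once $\|\bx\|$ exceeds a linear threshold in the disturbance size, and then invoke Theorem \ref{thm:local_iss} together with $V\in\mathcal{KR}$. Where you differ is in the absorption bookkeeping. The paper keeps the disturbance contributions split by channel: a pose-error term $\sqrt{\tfrac{3}{2}}(1+R)\|\qhbdmone\|d_m$ is dominated by the residual $-\tfrac{k_p}{4(1+R)}\qe$ left over from the SGES estimate (condition \eqref{eqn:negative1}), while the velocity-channel terms $(1+2R)\|J\|\|\ohbdb\|d_m$ and the exponential-weighted term are dominated by the \emph{unused slack} $(1-\beta(R))S$ — the portion of the negative exponential term that the SGES proof discarded when it kept only $\beta(R)S$ (condition \eqref{eqn:negative2}). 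These two component-wise thresholds are then merged into the explicit radius $\psi$ of \eqref{eqn:psi_expression}, so the paper obtains a concrete ultimate bound $\|\bx\|\geq\psi d_m$. You instead merge everything into a single cross term $\Delta\leq\kappa(R)\|\bx\|\|\bu\|$ and absorb it by sacrificing half of the quadratic decay via $V\geq k_1\|\bx\|^2$; this is the textbook ISS absorption, it is valid (your itemization of the gradient terms — $\exp(V_0/k_p)[(\qhbdmone)\circ d_1]$ from the exponential part, $\qhstarsqmones\circ d_2$ from $N$ after $J\star J^{-1}$ cancels, and $(\ohbdb)^{\mathrm{s}}\circ d_2$ from the kinetic part — matches what actually appears, and each is indeed $O(\|\bx\|\|\bu\|)$ on $\mathcal{B}_R$), and it yields a cleaner derivation, but at the cost of a less explicit gain $\rho$ and a halved nominal decay rate, whereas the paper's channel-wise route produces the sharper, componentwise-interpretable thresholds and the closed-form $\psi$ that is later verified numerically. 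You also correctly identify the reason the result is only local — the exponential weight and the factors $\|\qhbd^\star\|\leq 1+R$ are only bounded on the compact ball — which is the same restriction underlying the paper's argument.
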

\begin{proof}
Following similar steps as in the proof of Theorem \ref{thm:es}, it can be shown that $\dot{V}$ along system trajectories \eqref{eqn:system_dynamics_disturbances} turns out to be the following
\begin{align}
   \dot{V}\leq& -\frac{\beta k_d}{2J_M}V+\underbrace{\sqrt{\frac{3}{2}}(1+R)\|\qhbdmone\|d_m-\frac{k_p}{4(1+R)} \qe}_{\text{Term 1}}\nonumber\\
   &+\underbrace{(1-\beta(R))S+(1+2R)\|J\|\|\ohbdb\|d_m+\frac{c(1+k_p)}{k_p}\text{exp}\left(\frac{V_0}{2k_p}\right)\|\ohbdb\|d_m}_{\text{Term 2}}\nonumber
\end{align}
where $d_m=\underset{\mathcal{D}}{\max}\{d_1(t),\;d_2(t)\}$. To ensure that $ \dot{V}\leq -\frac{\beta k_d}{2J_M}V$, the following inequalities must be satisfied
\begin{subequations}
\begin{equation}
   \|\qhbdmone\| \geq \frac{4(1+R)^2\sqrt{3}d_m}{\sqrt{2}k_p},\quad\;\;\quad \text{(Term 1)}>0
   \label{eqn:negative1}
   \end{equation}
   \begin{equation}
   \|\ohbdb\|\geq \frac{\left[\|J\|(1+2R)+\frac{c(1+k_p)}{k_p}\text{exp}\left(\frac{V_0}{2k_p}\right)\right]d_m}{\frac{(1-\beta(R))ck_d}{2k_p}\text{exp}\left(\frac{V_0}{2k_p}\right)}\geq \frac{2(1+k_p)d_m}{(1-\beta(R))k_d},\quad\;\; \text{(Term 2)}>0
   \label{eqn:negative2}
\end{equation}
\label{eqn:negative}
\end{subequations}
Conditions \eqref{eqn:negative1} and \eqref{eqn:negative2} imply that $\|\qhbdmone\|d_m-\frac{k_p}{4(1+R)} \qe\leq 0$ and $(1-\beta(R))S+(1+2R)\|J\|.\|\ohbdb\|d_m+\frac{c(1+k_p)}{k_p}\text{exp}\left(\frac{V_0}{2k_p}\right)\|\ohbdb\|d_m\leq 0$ respectively. Note that the disturbance in the actuator can be considered as equivalent to the disturbance added to the $\ohbdb$ dynamics. Therefore, in this section, we do not analyze the case with actuator disturbance. If $\gamma$ is selected such that
\begin{align}
    \psi=\sqrt{\left( \frac{4(1+R)^2\sqrt{3}}{\sqrt{2}k_p}\right)^2+\left(\frac{2(1+k_p)}{(1-\beta(R))k_d}\right)^2}
    \label{eqn:psi_expression}
\end{align}
then from \eqref{eqn:negative},  we have $\|\bx\|\geq\psi d_m$. Therefore if $\|\bx\|\geq\psi d_m$, $ \dot{V}\leq -\frac{\beta k_d}{2J_M}V$ holds true. Further, since $V\in\mathcal{KR}$ (from \eqref{eqn:radially_unbounded}), we can conclude from Theorem \ref{thm:local_iss} that the closed-loop control law with time-varying additive bounded external disturbances is local Input-to-State Stable (ISS).
\end{proof}
{
\begin{figure}[H]
 \captionsetup[subfigure]{justification=centering}
 \centering
{\includegraphics[scale=0.38]{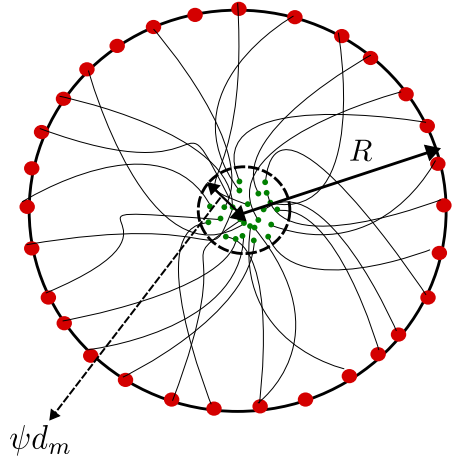}}
 \caption{Schematic diagram of trajectories under the proposed feedback controller \eqref{eqn:feedback_control_law} converging to a ball of radius equal or at most $\psi d_m$}
\label{fig:iss_diagram}
\end{figure}
}
\section{Motion and Safety constraints\label{sec:safety_and_motion_constraints}}
In this section, we leverage the notion of Control Barrier Functions (CBFs) \cite{ames2019control} to design a control framework that ensures that motion and safety constraints are satisfied for the rigid body. CBFs offer a simplified and compact method for ensuring safety and satisfying motion constraints in nonlinear controlled systems. This is particularly crucial in critical aerospace applications where safety is paramount. In such applications, even minor deviations from the desired motion trajectory or state can lead to catastrophic outcomes. The CBF based approach for synthesising control inputs basically modifies the nominal control input $\bu_0$ (defined by \eqref{eqn:feedback_control_law} in our case) as necessary to enforce the motion and safety constraints, while still trying to achieve the performance objectives as closely as possible. In essence, the nominal controller sets the performance targets, and the CBF-based controller ensures safety while striving to meet these targets. We now discuss the notion of CBF.

\subsection{Control Barrier Functions (CBFs)}
Let $h(\bx)$ be a scalar valued continuously differentiable function that satisfies the following:
\begin{align}
    h(\bx)>0\;\;\;\forall\; \bx\in\text{int}(\mathcal{S}),\quad   h(\bx)=0\;\;\;\forall\; \bx\in\partial \mathcal{S},\quad   h(\bx)<0\;\;\;\forall \;\bx\in\mathcal{X}\setminus\mathcal{S}
    \label{eqn:h_function}
\end{align}
where $\text{int}(\mathcal{S})$ and $\partial\mathcal{S}$ denotes the  interior and the boundary of the safe set $\mathcal{S}$ respectively. The system has a relative degree $r$ (positive integer) if, in
some neighborhood of the origin,
\begin{align}
    L_gL_f^{i-1}h(\bx)=0,\quad  L_gL_f^{r-1}h(\bx)\neq 0,\quad\forall\quad i=1,2,\dots,(r-1)\nonumber
\end{align}
The CBF is defined as follows
\begin{definition}
    \normalfont Given a control-affine dynamical system represented by $\dot{\bx}=f(\bx)+g(\bx)\bu$, $ h(\bx)$ defined by \eqref{eqn:h_function} is said to be a CBF if there exist a control input $\bu$ such that the following holds true
    \begin{align}
\underset{\bu\in\mathcal{U}}{\sup}\;L_f^{r}h(\bx)+L_gL_f^{r-1}h(\bx)\bu\geq -\boldsymbol{a}^\mathrm{T}\eta(h(\bx))
    \end{align}
 where $L_fh(\bx):=\frac{\partial h(\bx)}{\partial\bx}f(\bx)$ and $L_gh(\bx):=\frac{\partial h(\bx)}{\partial\bx}g(\bx)$ are the Lie derivatives of $h(\bx)$ along $f$ and $g$ respectively, $\eta(h(\bx))=\left[L_f^{r-1} h, L_f^{r-2} h, \ldots, h\right]^{\mathrm{T}}$, and $\boldsymbol{a}=\left[a_1, \ldots, a_r\right]^{\top} \in \mathbb{R}^r$ are the set of parameters chosen such that the roots of $\lambda^r+a_1 \lambda^{r-1}+\ldots+a_{r-1} \lambda+a_r=0$ are all
negative reals.
\end{definition}
Therefore, a feedback control input $\bu_s(\bx)$ that guarantees satisfaction of motion and safety constraints can be synthesized by solving the following (constrained) Quadratic Program (QP) as follows:
\begin{subequations}
\begin{align}
&\bu_s(\bx)=\underset{\bu\in\mathcal{U}}{\text{argmin}}\;\|\bu-\bu_0\|^2\\
    &  L_f^{r}h(\bx)+L_gL_f^{r-1}h(\bx)\bu\geq -\boldsymbol{a}^\mathrm{T}\eta(h(\bx))
\end{align}
    \label{eqn:cbf_qp_general}
\end{subequations}

where $\bu_0$ is the nominal controller. Note that for any fixed $\bx$, \eqref{eqn:cbf_qp_general} is a QP with $\bu$ as the optimization variable. The following subsection discusses the application of CBF for generating safe control inputs for spacecraft rendezvous and docking problem.

\subsection{Safe spacecraft rendezvous and docking}
We consider the final phase of the spacecraft rendezvous and docking problem where the chaser spacecraft docks in the target spacecraft while satisfying motion constraints such as approach path and input constraints. As shown in Fig. \ref{fig:approach_path_constraint}, the approach path constraint ensures that the vector $\boldsymbol{p}$ makes an angle of less than $\theta$ with the vector pointing from the chaser to the target spacecraft. For the spacecraft docking problem that captures the approach path constraint and the input constraints, we define the CBF function $h$ as follows:
\begin{align}
h({r}_{{\mathrm{B}/\mathrm{D}}})=\left\{\begin{array}{cc}
\frac{\bar{x}^3 \tan ^2(\theta)}{2 \bar{r}_1-\bar{x}}-\bar{y}^2-\bar{z}^2 & \bar{x} \in\left[0, \bar{r}_1\right] \\
\bar{x}^2+\bar{y}^2+\bar{z}^2-\bar{r}_3^2 & \bar{x} \in\left[-\bar{r}_3, \bar{r}_2\right] \\
\bar{x}^2+\bar{y}^2+\bar{z}^2+r^\star \sin \left(\pi \frac{\bar{x}}{\bar{r}_2}\right)\left(\bar{y}^2+\bar{z}^2\right)-\bar{r}_3^2 & \bar{x} \in\left[\bar{r}_1, \bar{r}_2\right]
\end{array}\right.
\label{eqn:cbf_defn}
\end{align}
where ${r}_{{\mathrm{B}/\mathrm{D}}}=[\Bar{x},\;\Bar{y},\;\Bar{z}]^\mathrm{T}$, $ \bar{r}_3>\bar{r}_2>\bar{r}_1>0,\;\theta>0,\;r^\star>0$ are constants. The value of $r^\star$ is chosen such that $h$ is continuous and differentiable at $x=\bar{r}_2$. To that end, $r^\star$ is given by
\begin{align}
r^\star=\frac{\bar{r}_3^2-\bar{r}_1^2\left[1+\tan ^2(\theta)\right]}{\bar{r}_1^2 \sin \left(\pi\left(\bar{r}_1 / \bar{r}_2\right)\right) \tan ^2(\theta)}\nonumber
\end{align}
The mathematical equation \eqref{eqn:cbf_defn} presents a composite surface ($h=0$), which comprises of three sections. The initial section is a curve based on the cissoid \cite{zhang2010ellipse_cissoid} and serves to delineate the ultimate approach corridor. Adjusting $\bar{r}_1$ and $\theta$ allows for modifications to the length and the angle included in the approach corridor, respectively. Although the second section consists of a partial spherical surface, its radius ($\bar{r}_3$) must be sufficiently large to encompass the target's components. Fig. \ref{fig:cbf_yellow} shows the surface $h=0$. The purpose of the third section is to link the preceding two sections. The control inputs are then synthesized by solving the following QP:
\begin{align}
& \bu_s({r}_{{\mathrm{B}/\mathrm{D}}})=  \underset{\bu\in\mathcal{U}}{\text{argmin}}\;\|\bu-{f}^\mathrm{B}\|^2\nonumber\\
    &\frac{\partial^2 h({r}_{{\mathrm{B}/\mathrm{D}}})}{\partial {r}^2_{{\mathrm{B}/\mathrm{D}}}}\ddot{{r}}_{{\mathrm{B}/\mathrm{D}}}\geq -\alpha(h({r}_{{\mathrm{B}/\mathrm{D}}}))
\end{align}
where ${f}^\mathrm{B}$ is the nominal force (which is extracted from the $\hat{f}^\mathrm{B}=(f^\mathrm{B},\tau^\mathrm{B})$ in \eqref{eqn:feedback_control_law}) applied by the chaser spacecraft that guarantees semi-global exponential stability for the translation motion. Note that the input constraints (for instance ball or box constraints for the spacecraft's thrust) can be incorporated in the set $\mathcal{U}$ and the translational dynamics is of relative degree 2. Furthermore, in this formulation, since we have only considered the approach path and input constraints, we do not have to modify the torque applied by the chaser spacecraft.
\begin{figure}[H]
 \captionsetup[subfigure]{justification=centering}
 \centering
 \begin{subfigure}{0.42\textwidth}
{\includegraphics[scale=0.18]{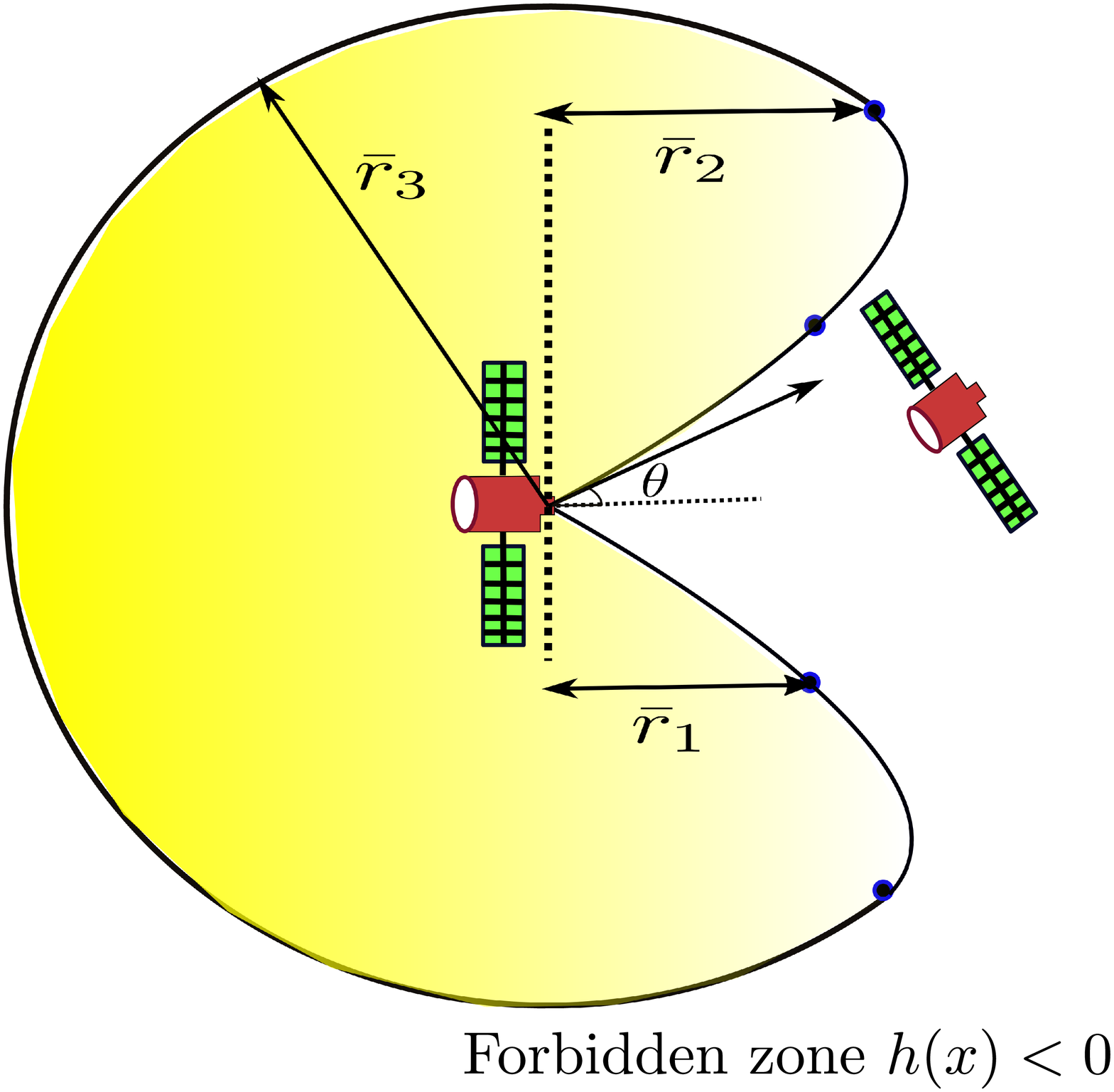}}
 \caption{Depiction of CBF $h(\bx)$}
\label{fig:cbf_yellow}
 \end{subfigure}
\label{fig:cbf_svg_yello}
 \begin{subfigure}{0.3\textwidth}
{\includegraphics[scale=0.11]{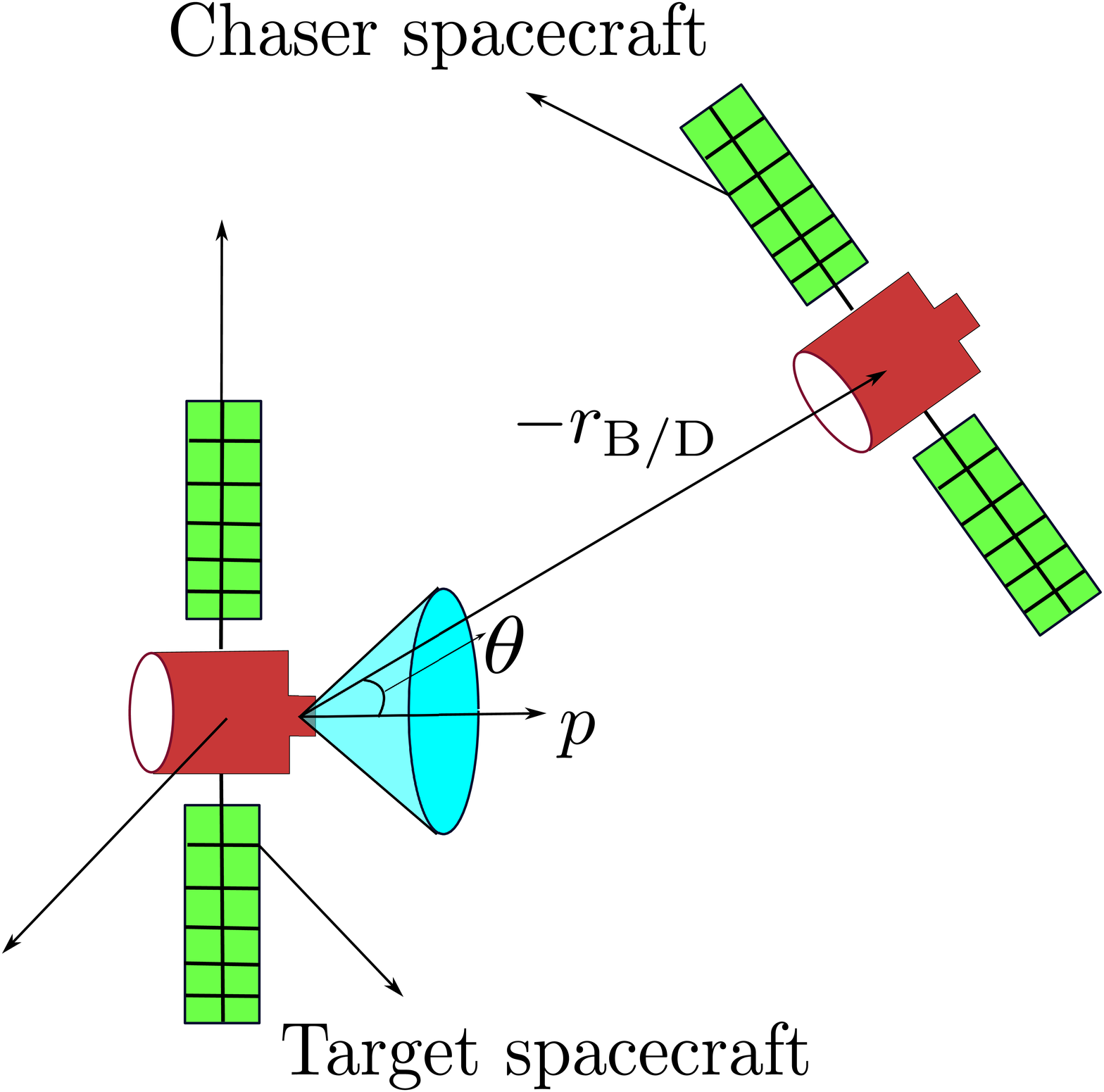}}
\caption{Approach path constraint}
\label{fig:approach_path_constraint}
\end{subfigure}
  \caption{}
\end{figure}
\begin{remark}
    \normalfont Recent methods \cite{breeden2021robust_cbf_rendezvous_1, breeden2021guaranteed_cbf_rendezvous_2} that use CBF for safe spacecraft rendezvous employ PID controllers as the nominal controller. In contrast, the nominal controller proposed herein, which is denoted as $f^\mathrm{B}$, is an exponentially stabilizing controller. This type of controller provides significant advantages by ensuring rapid convergence to the desired state or trajectory, even in the presence of time-varying additive bounded external disturbances. This property of exponential stabilization guarantees a quicker and more robust response when compared to traditional PID controllers.
\end{remark}

\section{Results\label{sec:results}}
In this section, we verify the efficacy of our proposed tracking controller \eqref{eqn:feedback_control_law} on five realistic scenarios.

\subsection{Mars Cube One (MarCO) mission}
NASA's Mars Cube One (MarCO) mission, launched in 2018, included two CubeSats—MarCO-A and MarCO-B—that showcased significant advancements in satellite technology. The two CubeSats possessed 6-DOF and had the ability to control both translational and rotational motion in deep space. We consider a typical Mars Cube One (MarCO) CubeSat with mass $m=13.5$kg.
The moment of inertia of MarCO CubeSat is as follows :
\begin{align}
    \bar{I}^\mathrm{B}=\begin{bmatrix}
    0.0465&-0.0007&0.0004\\
    -0.0007&0.0486&-0.0021\\
    0.0004&-0.0021&0.0482
    \end{bmatrix}\text{kg.m}^2\nonumber
\end{align}
\subsubsection{Trajectory tracking}
The objective is for the MarCO to track the desired trajectory governed by $\qhdi=1$ and $\ohdi=0$. The control gains for the feedback controller \eqref{eqn:feedback_control_law} are selected to be $k_p=0.2$ and $k_d=0.3$. 100 trajectories are generated from 100 initial conditions that are uniformly sampled inside the ball $\mathcal{B}_R$ (defined by \eqref{eqn:ball_defn}) with $R=2.5$. Figures \ref{fig:qe} and \ref{fig:omegae} show the plots of error quaternions $\|\qhbd-1\|$ and $\|\ohbd\|$ versus time respectively for these 100 trajectories. The dotted lines in Figs. \ref{fig:qe}, \ref{fig:omegae}, and \ref{fig:error_sges} show the decaying exponential upper bound. It can be observed that the errors $\|\qhbdmone\|$ and $\|\ohbd\|$ lie below these exponential curves (where the rate of convergence is governed by $\alpha(R)$) which verifies our claim that the feedback controller guarantees semi-global exponential stability.
\begin{figure}[H]
 \captionsetup[subfigure]{justification=centering}
 \centering
 \begin{subfigure}{0.33\textwidth}
{\includegraphics[scale=0.3]{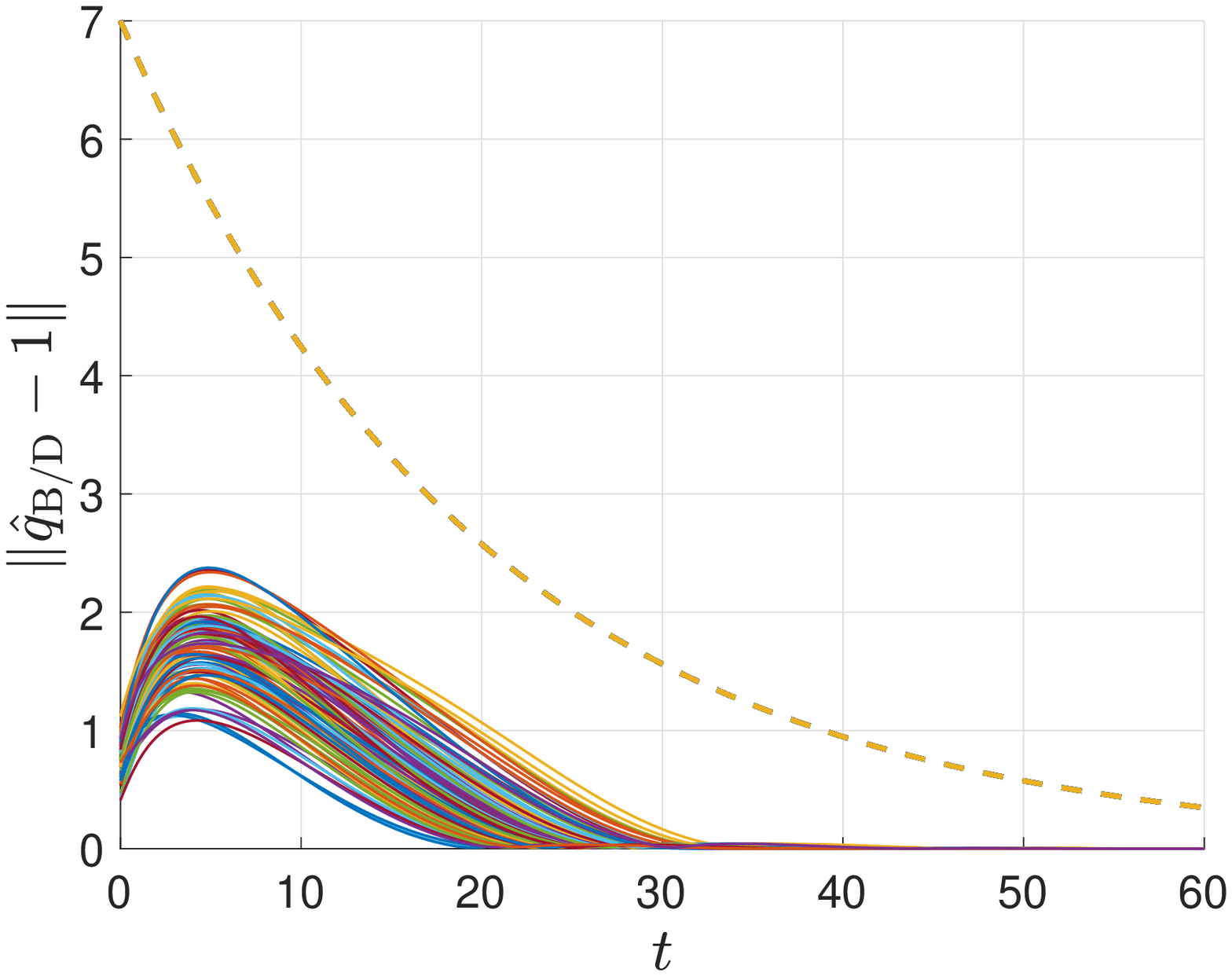}}
 \caption{$\|\qhbd-\mathbf{1}\|$ v/s $t$}
\label{fig:qe}
 \end{subfigure}
 \begin{subfigure}{0.33\textwidth}
{\includegraphics[scale=0.3]{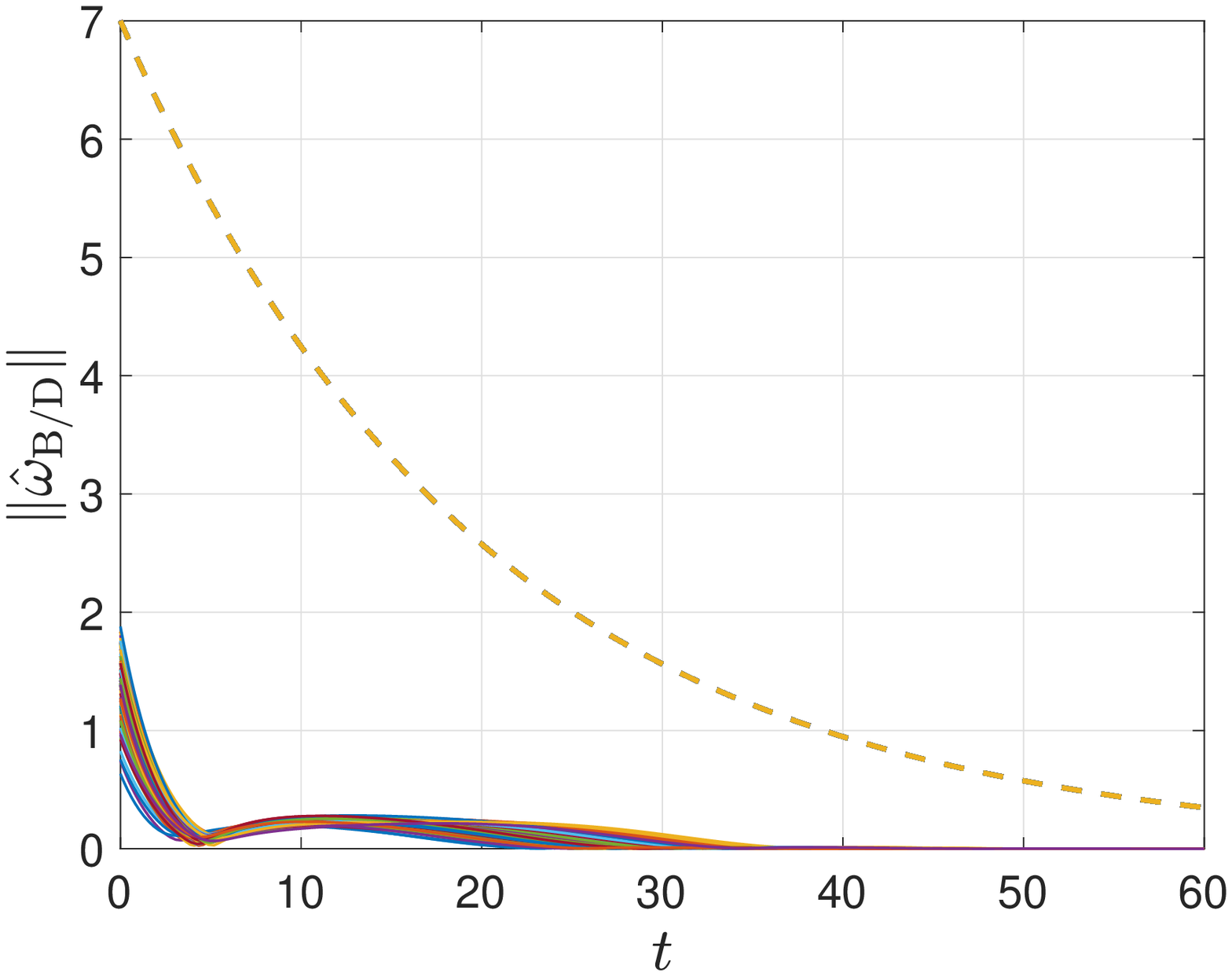}}
\caption{$\|\ohbd\|$ v/s $t$}
\label{fig:omegae}
\end{subfigure}
 \begin{subfigure}{0.33\textwidth}
{\includegraphics[scale=0.3]{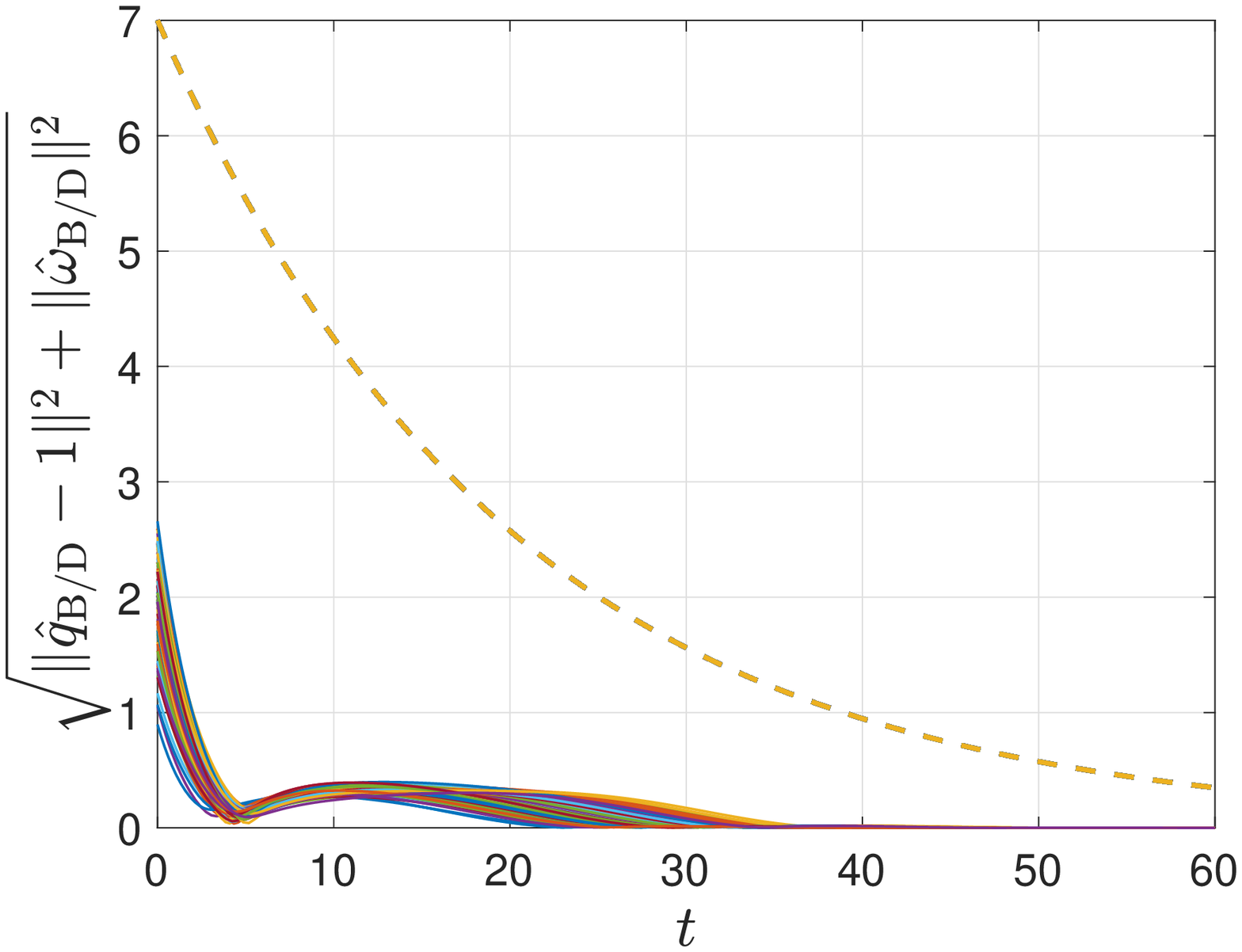}}
\caption{$\sqrt{\|\ohbd\|^2+\|\qhbd-\mathbf{1}\|^2}$ v/s $t$}
\label{fig:error_sges}
\end{subfigure}
\caption{}
\label{fig:exp_character}
\end{figure}
\subsubsection{Robustness against disturbances}
We evaluate the performance of our proposed feedback control law \eqref{eqn:feedback_control_law} for the system with time-varying additive bounded disturbances \eqref{eqn:system_dynamics_disturbances}. 100 trajectories are simulated with initial conditions chosen from a ball of radius $3.5m$, The values for $d_1(t)$ and $d_2(t)$ are chosen uniformly from the set $\mathcal{D}=[-10^{-2},\;10^{-2}]$. Note that this is a reasonable assumption to make as the external disturbances for practical applications are of the order of $10^{-2}$ to $10^{-3}$ as reported in \cite{roscoe2018overview_cubesats}.  As shown in Figs. \ref{fig:error_sges_disturbance} and \ref{fig:error_sges_disturbance_ball}, all these trajectories converge exponentially to a ball of radius less than or equal to $\psi d_m$ where $\psi$ is defined by \eqref{eqn:psi_expression}. This verifies the ISS claim made in Theorem \ref{thm:local_iss}.

\begin{figure}[H]
 \captionsetup[subfigure]{justification=centering}
 \centering
 \begin{subfigure}{0.5\textwidth}
{\includegraphics[scale=0.46]{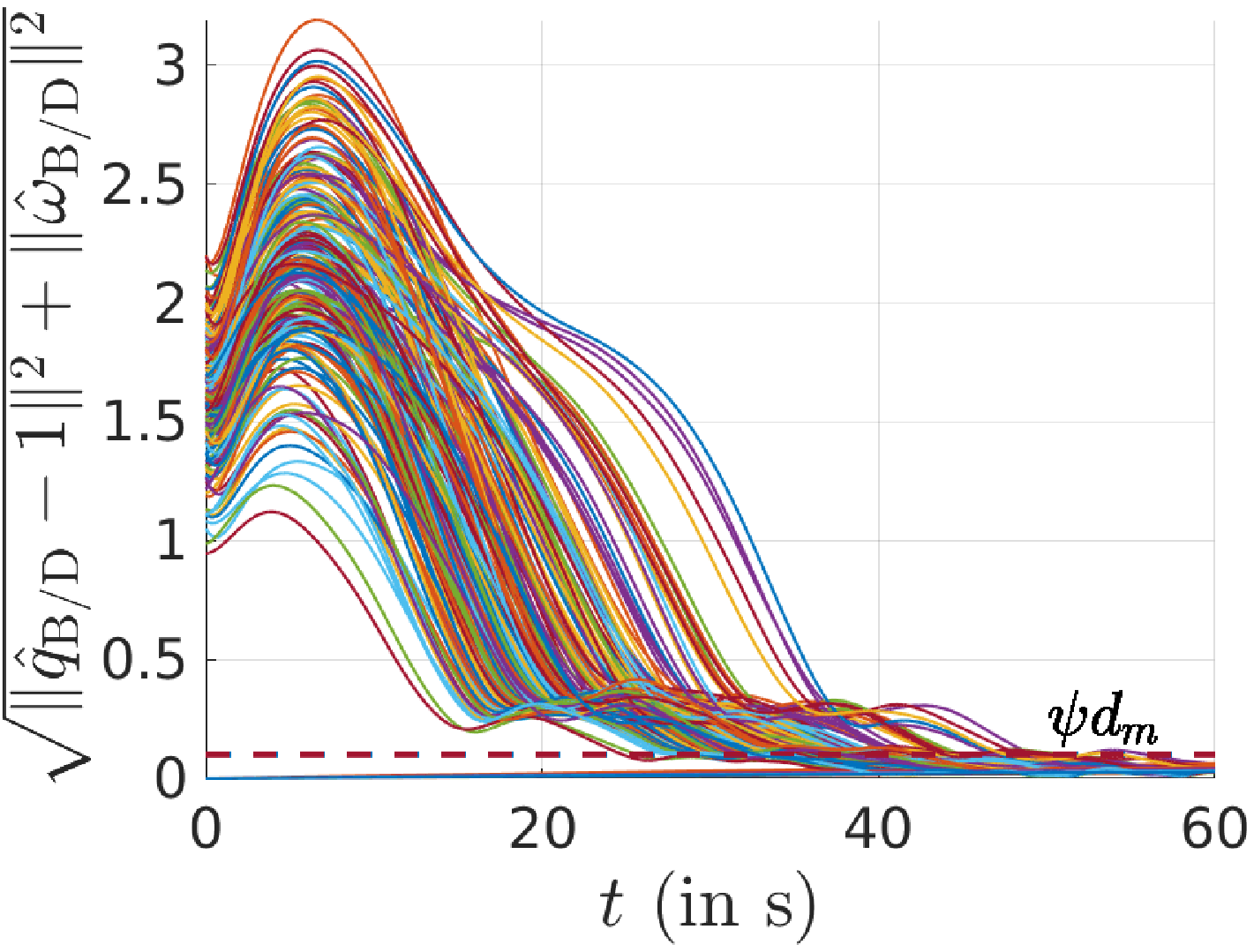}}
 \caption{$\sqrt{\|\ohbd\|^2+\|\qhbd-\mathbf{1}\|^2}$ v/s $t$}
\label{fig:error_sges_disturbance}
 \end{subfigure}
 \begin{subfigure}{0.45\textwidth}
{\includegraphics[scale=0.36]{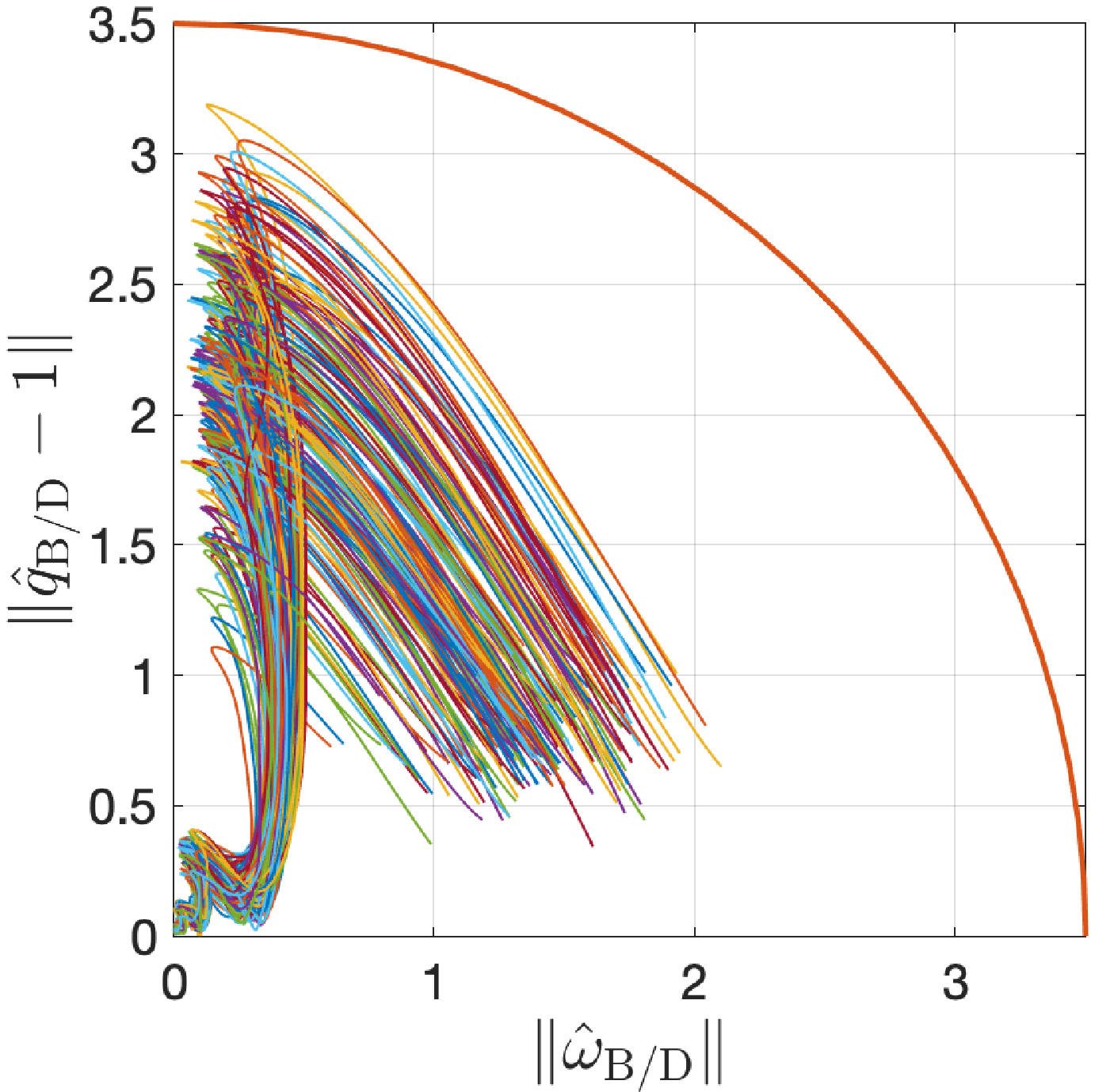}}
\caption{$\|\qhbd-\mathbf{1}\|$ vs $\|\ohbd\|$}
\label{fig:error_sges_disturbance_ball}
\end{subfigure}
\caption{}
\label{fig:exp_character_disturbances}
\end{figure}
\subsection{Apollo Transposition and Docking}
 The Apollo Transposition and Docking maneuver was a crucial step during the Apollo missions to the Moon, carried out by NASA between 1968 and 1972. It involved separating (Fig. \ref{fig:real_csm_seperation}) the Apollo Command/Service Module (CSM) from the Saturn V rocket's third stage (S-IVB), rotating the CSM, and then docking with the Lunar Module (LM), which was housed within the S-IVB.
The transposition of the CSM occurs at a distance of approximately $20m$ from the S-IVB (Fig. \ref{fig:transposition_real}). The CSM then rotates by an angle of $180^\circ$. Then the CSM docks (Fig. \ref{fig:real_docking}) in with the S-IVB to extract the lunar module (LM) and forms the combined CSM-LM spacecraft (Fig. \ref{fig:real_LM_extraction})\footnote{Image edited from \href{https://commons.wikimedia.org}{https://commons.wikimedia.org}}. 
\begin{figure}[H]
 \captionsetup[subfigure]{justification=centering}
 \centering
 \begin{subfigure}{0.49\textwidth}
{\includegraphics[scale=0.085]{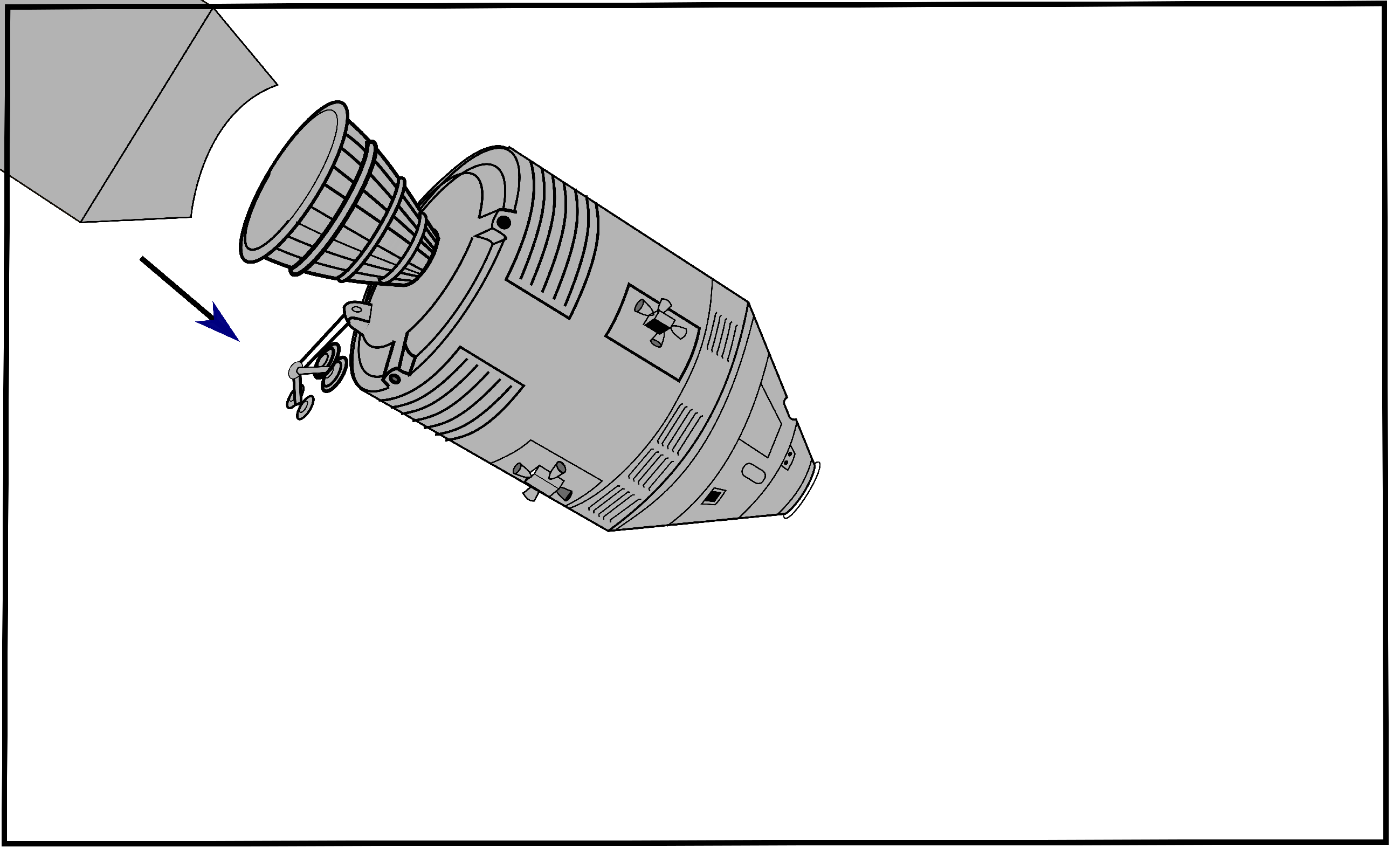}}
 \caption{Spacecraft separation}
\label{fig:real_csm_seperation}
 \end{subfigure}
 \begin{subfigure}{0.43\textwidth}
{\includegraphics[scale=0.085]{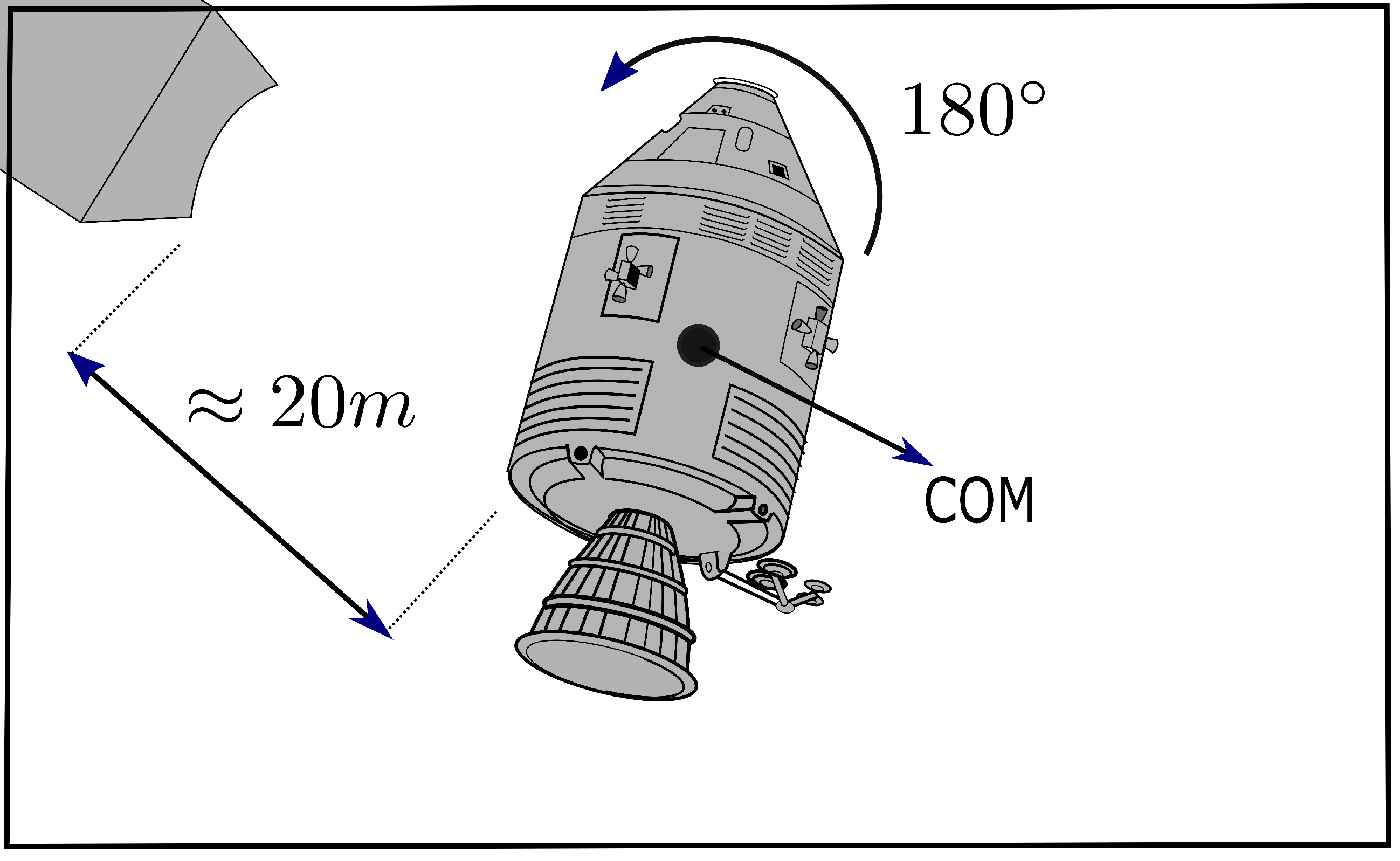}}
\caption{Transposition}
\label{fig:transposition_real}
\end{subfigure}
 \begin{subfigure}{0.49\textwidth}
{\includegraphics[scale=0.085]{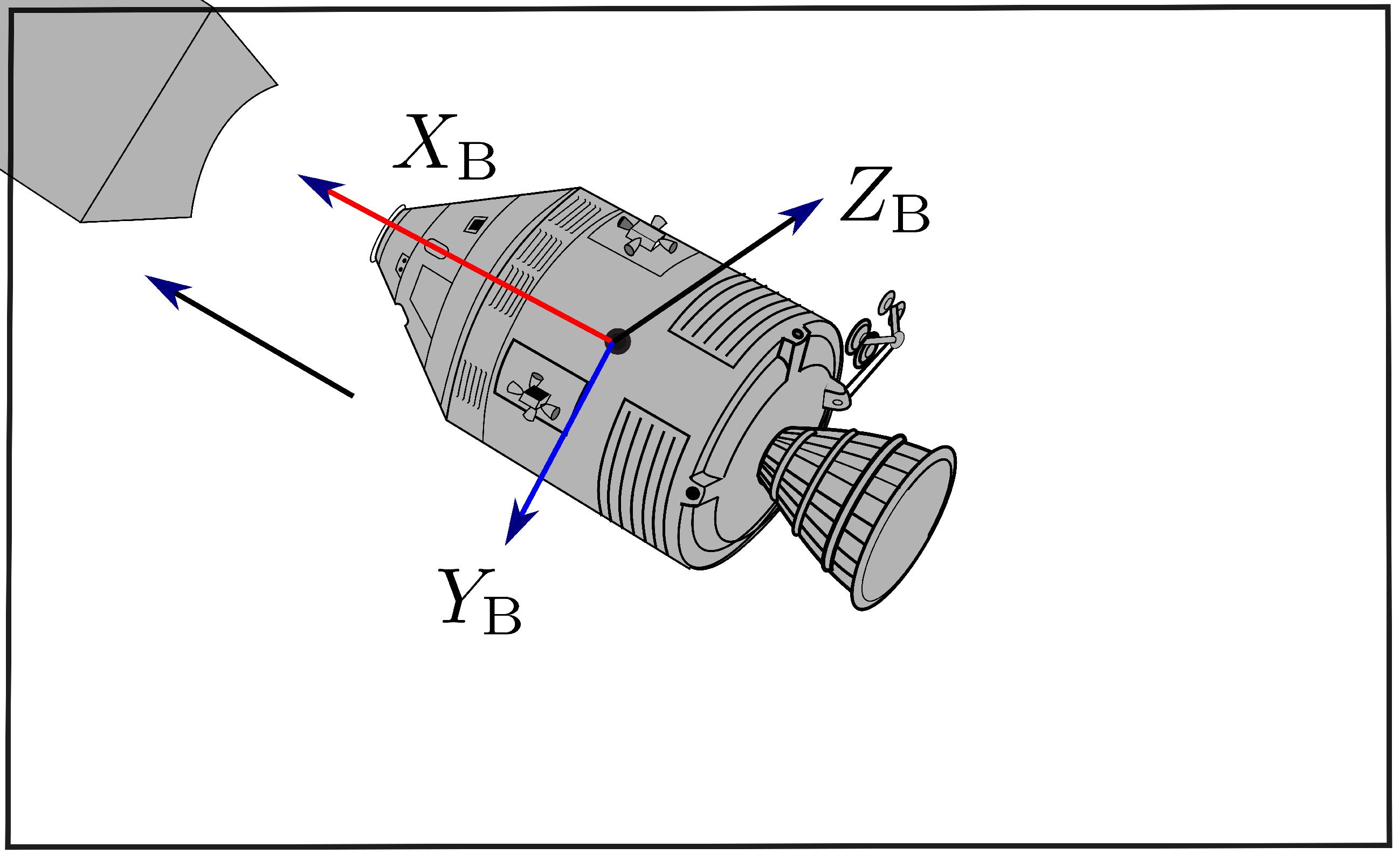}}
\caption{Docking}
\label{fig:real_docking}
\end{subfigure}
 \begin{subfigure}{0.43\textwidth}
{\includegraphics[scale=0.085]{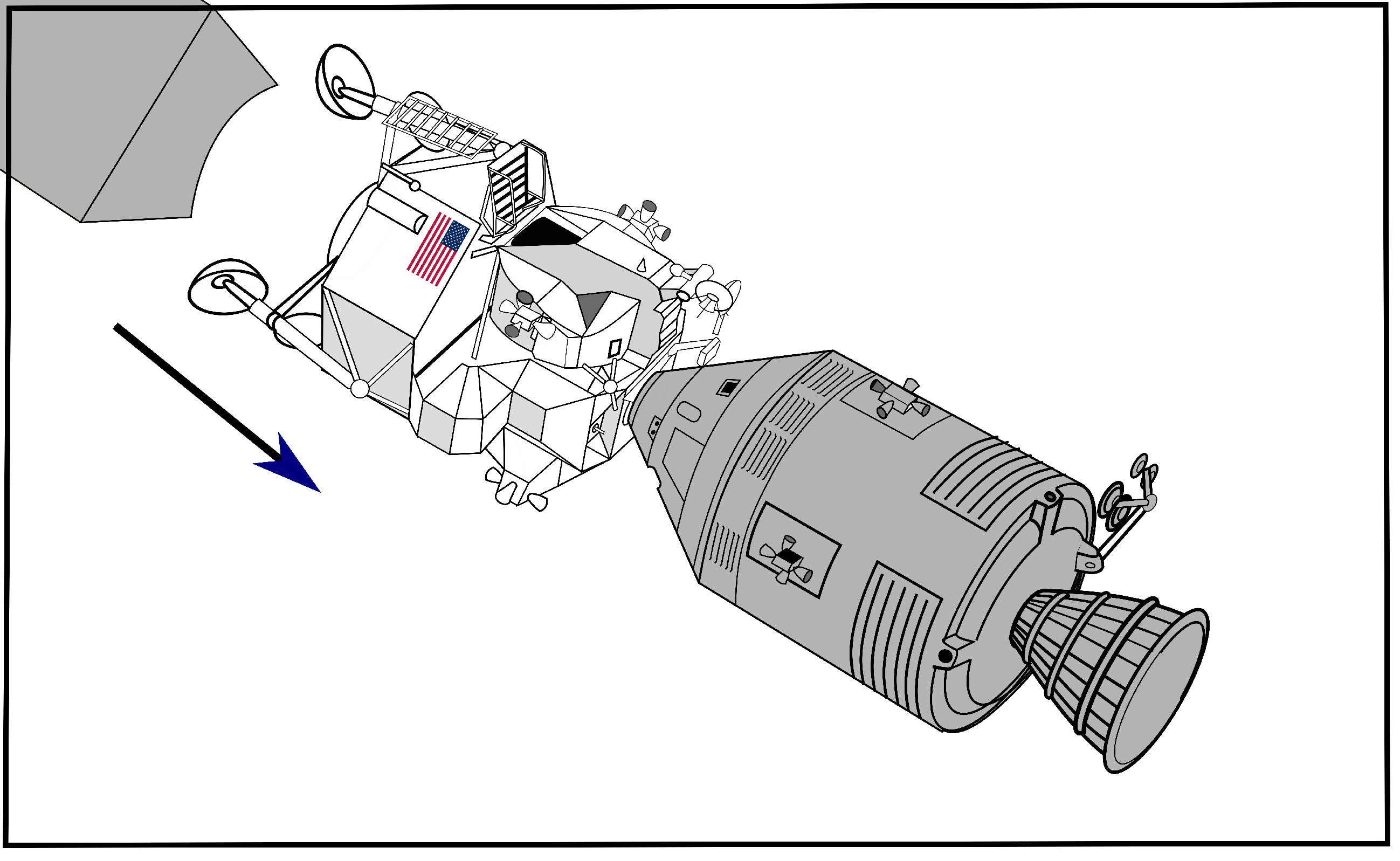}}
\caption{Lunar Module (LM) extraction}
\label{fig:real_LM_extraction}
\end{subfigure}
 \caption{Outline of the steps involved in the Apollo transposition and docking maneuver (a) Spacecraft separation: Post-TLI burn, the CSM departs from the S-IVB stage. (b) Transposition: CSM executes a $180^\circ$ pitch to align with the LM on the S-IVB stage. (c) Docking: Astronauts maneuver the CSM to dock with the LM using RCS thrusters, docking probes, and drogue. (d) Lunar Module extraction: After successful docking, the LM is detached from the S-IVB stage, preparing the Apollo stack for the lunar journey.
 }
\label{fig:steps_apollo_transposition_and_docking}
\end{figure}

We apply our proposed feedback control law \eqref{eqn:feedback_control_law} to perform the Apollo Transposition and Docking maneuver. The gains $k_p$ and $k_d$ are chosen to be equal to $10^2$. The CSM mass and inertia at the start of transposition and docking are specified in [\cite{book1969csm}, Table 3.1-2] and are given as follows:
\begin{align}
m \approx 30322.9 \mathrm{~kg}, \quad \Bar{I}^\mathrm{B} \approx\left[\begin{array}{ccc}
49248.7 & 2862.1 & -370.1 \\
2862.1 & 108514.2 & -3075.0 \\
-370.1 & -3075.0 & 110771.7
\end{array}\right] \mathrm{kg} \cdot \mathrm{m}^2\nonumber
    \end{align}
    \subsubsection{Apollo Transposition and Docking maneuver}
For the transposition maneuver, Figs. \ref{fig:apollo_transp_quaternion} and \ref{fig:apollo_torque_transp} shows the evolution of quaternion with time for a $180^\circ$ rotation and the corresponding torque $\tau^\mathrm{B}$ applied to the CSM respectively. Fig. \ref{fig:apollo_transp_exp_conv} verifies the SGES claim made by Theorem \ref{thm:es}. For the docking maneuver, 50 trajectories are generated for initial conditions sampled within a ball of radius 0.1 centered at the origin. Fig. \ref{fig:docking_position_plot} shows the evolution of the position of CSM along the docking axis with time. As observed, the terminal position of CSM is within the permissible range i.e. $p_e=0.3m$ [\cite{book1970csm}, Section 3.8.2.3]. Fig. \ref{fig:docking_input_force_plot} shows the corresponding input force applied along the docking axis.
\begin{figure}[H]
 \captionsetup[subfigure]{justification=centering}
 \centering
 \begin{subfigure}{0.33\textwidth}
{\includegraphics[scale=0.33]{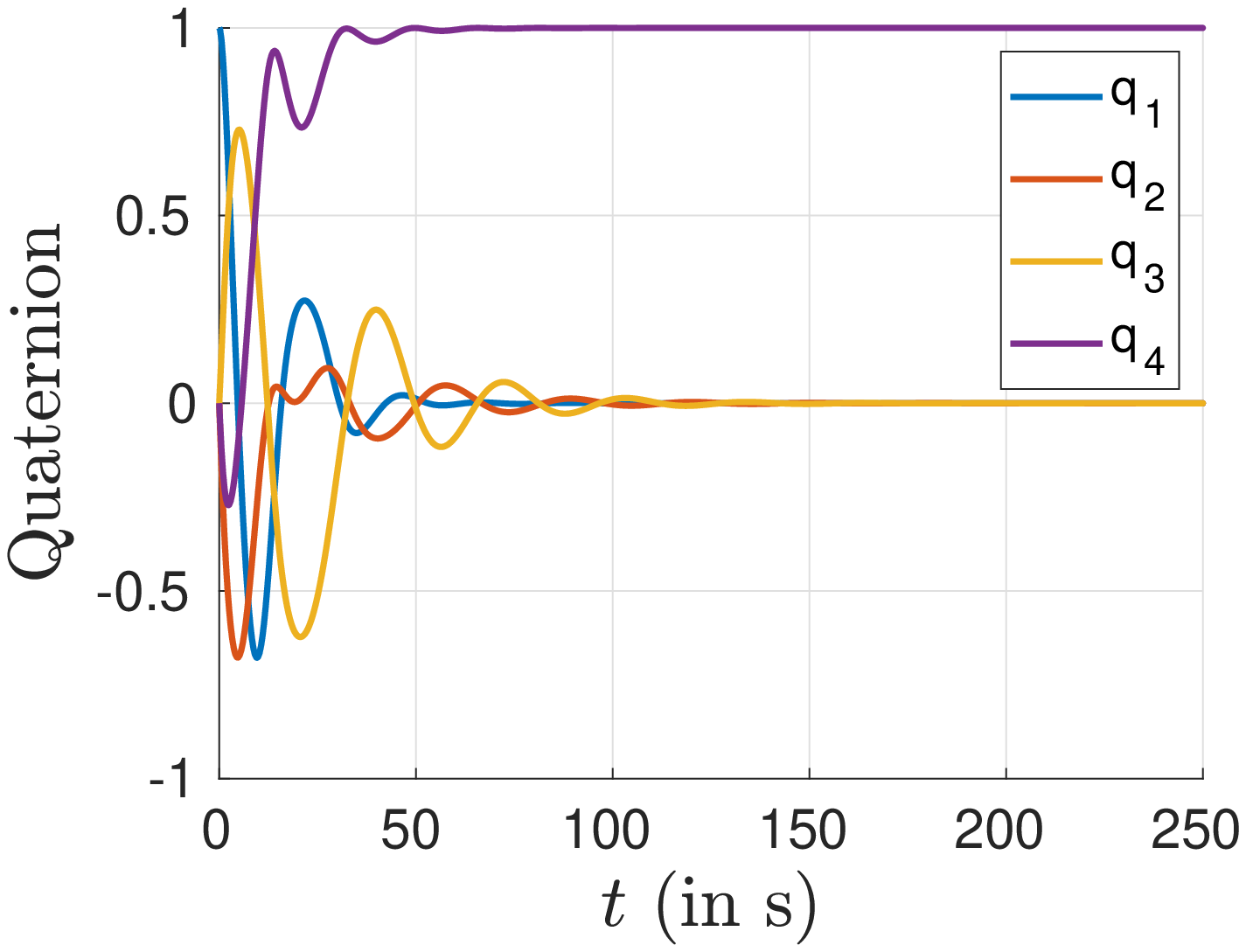}}
 \caption{Evolution of quaternion with $t$.}
\label{fig:apollo_transp_quaternion}
 \end{subfigure}
 \begin{subfigure}{0.32\textwidth}
{\includegraphics[scale=0.33]{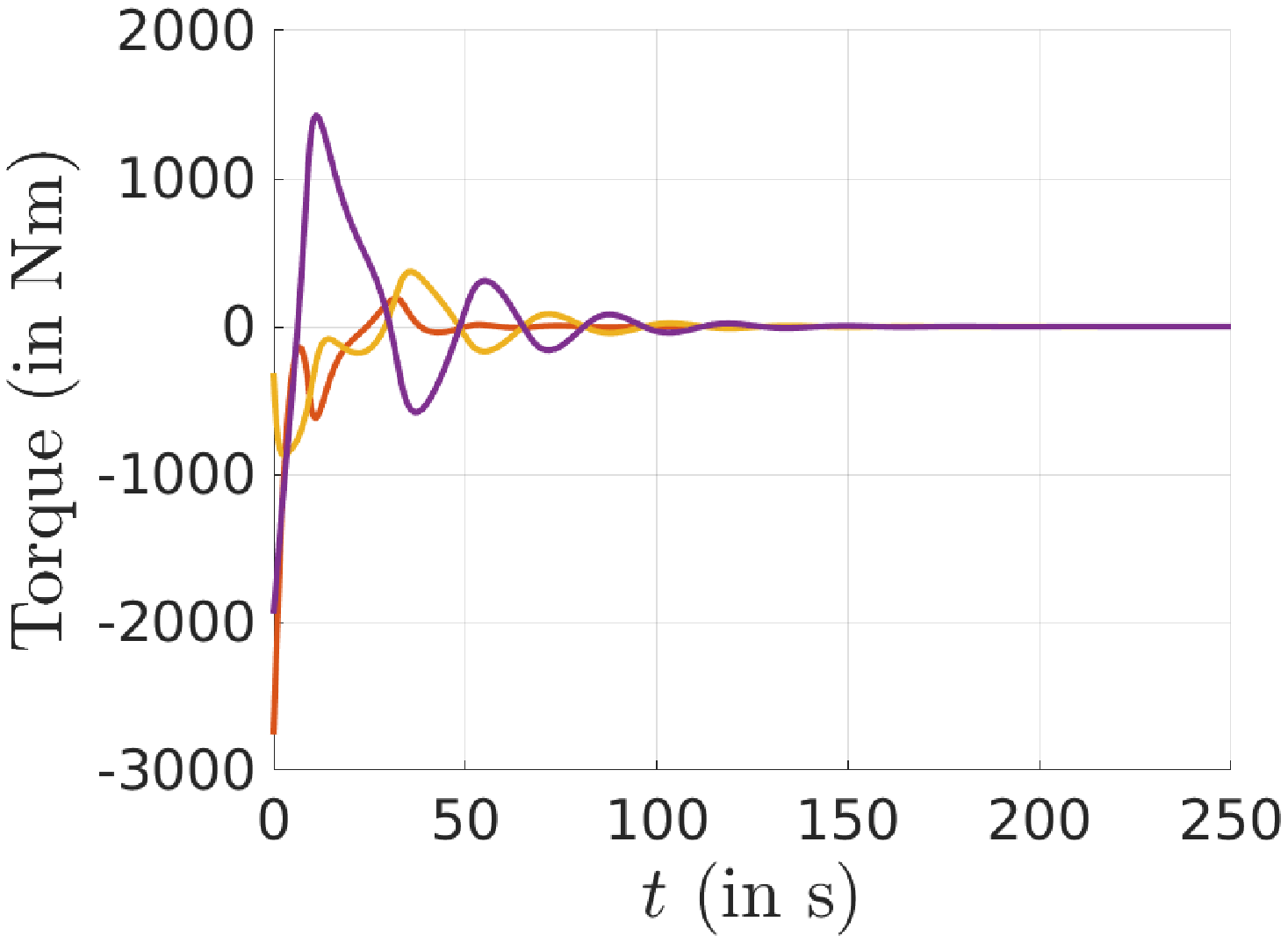}}
\caption{Evolution of torque with $t$}
\label{fig:apollo_torque_transp}
\end{subfigure}
 \begin{subfigure}{0.33\textwidth}
{\includegraphics[scale=0.33]{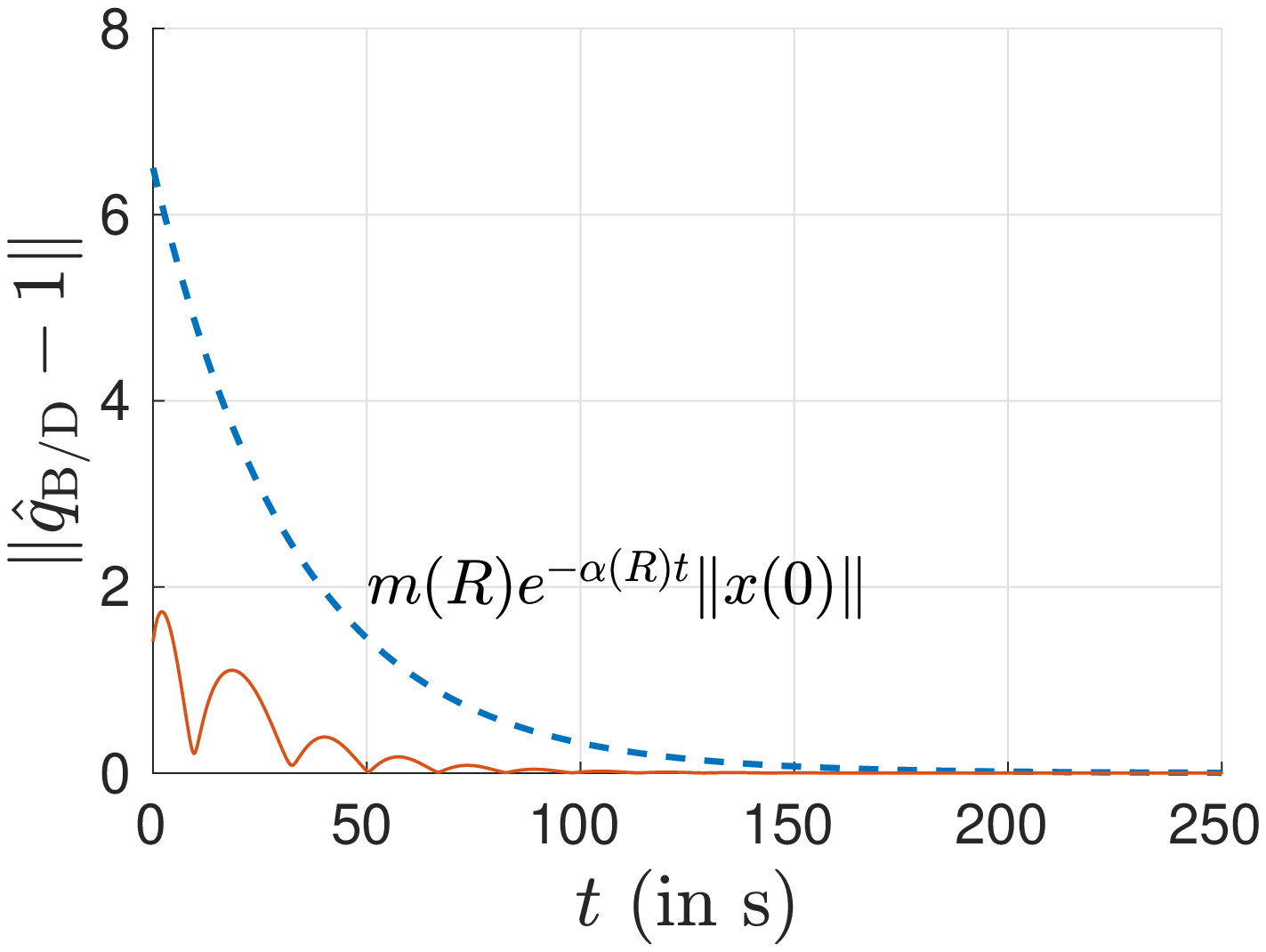}}
\caption{$\qe$ along $x$ versus $t$}
\label{fig:apollo_transp_exp_conv}
\end{subfigure}
  \caption{Evolution of quaternion $\qbd$ and torque $\tau^\mathrm{B}$ applied with time $t$ for the transposition maneuver (Fig. \ref{fig:transposition_real}). }
\end{figure}
\begin{figure}[H]
 \captionsetup[subfigure]{justification=centering}
 \centering
 \begin{subfigure}{0.45\textwidth}
{\includegraphics[scale=0.4]{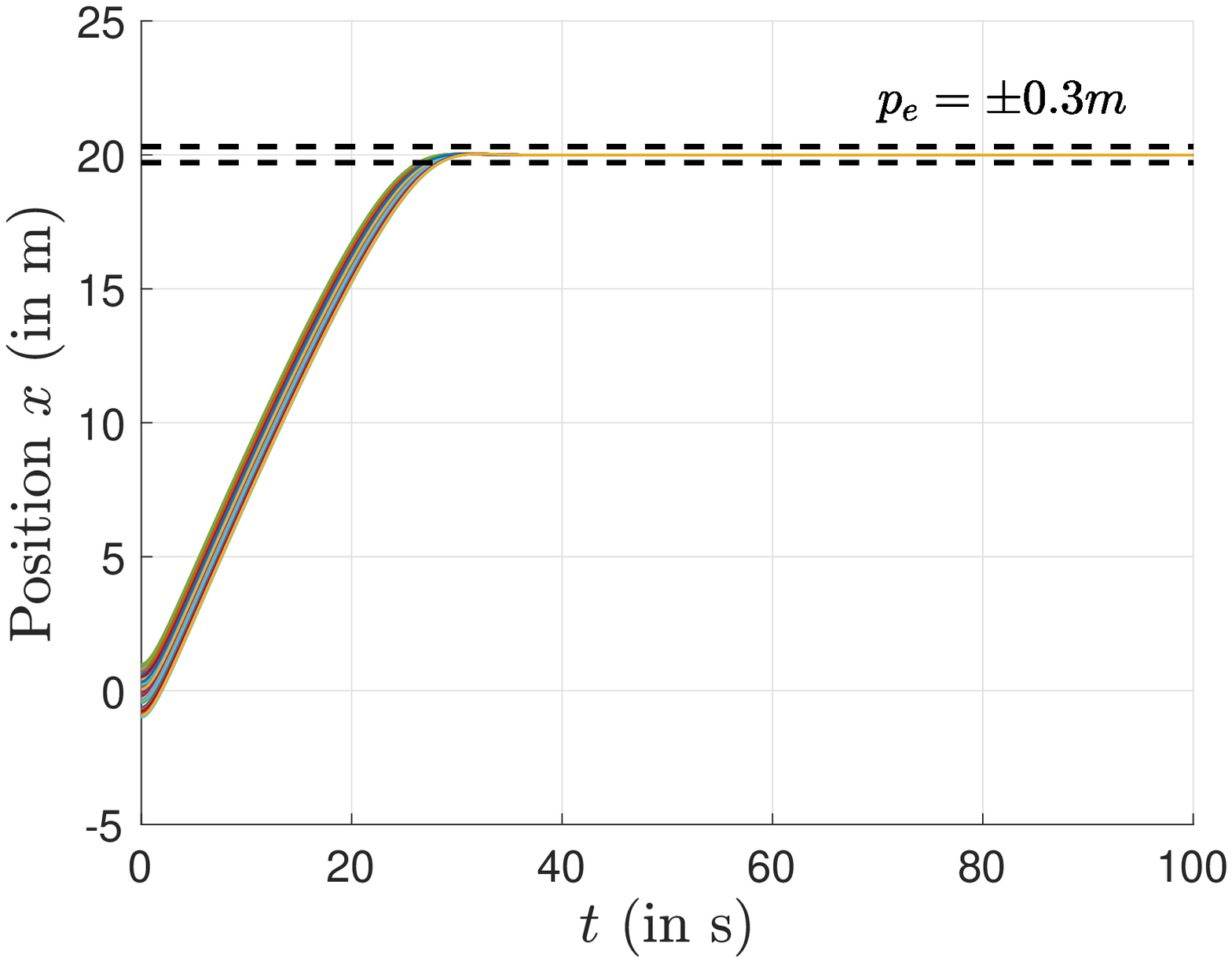}}
 \caption{Evolution of position along $x$ versus $t$.}
 \label{fig:docking_position_plot}
 \end{subfigure}
 \begin{subfigure}{0.45\textwidth}
{\includegraphics[scale=0.4]{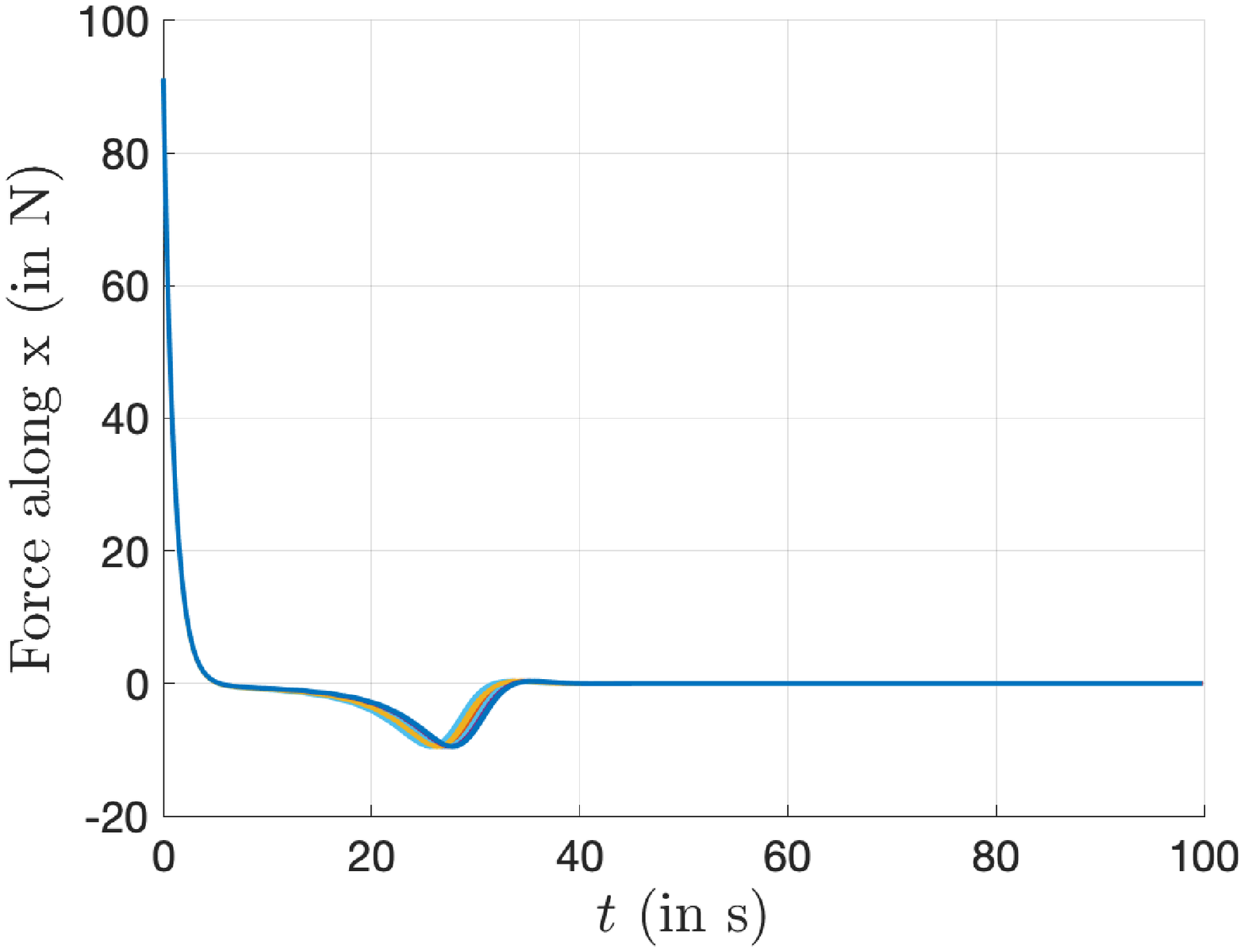}}
\caption{Input force $f^\mathrm{B}$ along $x$ versus $t$}
\label{fig:docking_input_force_plot}
\end{subfigure}
  \caption{Evolution of position and the input force $f^\mathrm{B}$ along $x$ versus $t$ for the docking maneuver (Fig. \ref{fig:real_docking}). As observed from Fig. \ref{fig:docking_position_plot}, the terminal position $x$ satisfies the permissible error of $p_e=0.3m$.}
\end{figure}

\subsubsection{Fuel consumption}
In this section, we compare the performance of our proposed controller in terms of fuel consumption with the actual fuel consumed during the Apollo transposition and docking maneuvers. The total fuel consumption of the CSM is given by the following expression:
\begin{align}
m(t)=m(0)+\int_0^tc_1\|\hat{{f}}^\mathrm{B}\|dt
\end{align}
where $c_1=1.3\times10^{-3}$kg/Ns. We apply the feedback control law \eqref{eqn:feedback_control_law} and compute the fuel consumed by CSM for final times $t=\{100,\;200,\;300,\;400\}s$. Fig. \ref{fig:fuel_consumption} shows the fuel consumption using \eqref{eqn:feedback_control_law} and the actual fuel consumed during the Apollo transposition and docking maneuver for final time $t_f=\{100,\;200,\;300,\;400\}s$. It can be clearly observed that the fuel consumed using our method is approximately $62\%$ less than that consumed during the actual Apollo mission.

\begin{figure}[H]
\centering
\includegraphics[width=11cm]{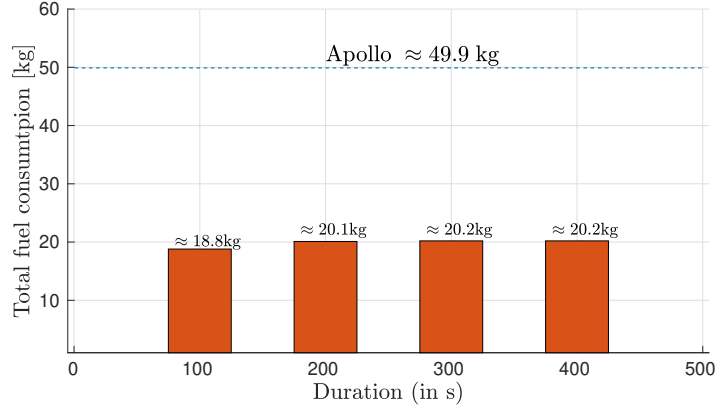}
\caption{Total fuel consumption of the trajectories generated by using the proposed feedback tracking control law \eqref{eqn:feedback_control_law}.}
\label{fig:fuel_consumption}
\end{figure}
{

\subsection{Starship flip maneuver}
The Starship (developed by SpaceX) flip maneuver advances the controlled rocket descent and landing methodology. This maneuver is executed during the terminal phase of landing, following atmospheric re-entry where the vehicle adopts a belly-first orientation to maximize air resistance and thermal shielding. As the Starship approaches the landing site, it performs an aerodynamic inversion or flip to transition from a horizontal descent to a vertical orientation, a maneuver demanding precise control over the vehicle's aerodynamic surfaces. Subsequently, the rocket engines are reignited to decelerate the spacecraft for a soft vertical landing. The initial condition for the state and the parameters of the starship are taken from \cite{malyuta2022convex_tutorial}. The comparative analysis of trajectories, as depicted in Fig. \ref{fig:starship} illustrates the distinct impact of control strategies on the system's performance. While the open-loop control, as shown in \ref{fig:starship_open}, demonstrates a tendency to violate critical constraints like glide slope under the influence of unknown smooth additive bounded disturbances, the integration of the proposed feedback controller with CBF, as seen in \ref{fig:starship_closed}, significantly mitigates these deviations, maintaining system safety/constraint satisfaction even in the presence of such disturbances. Further, the trajectory in \ref{fig:starship_closed} converges to a region given in Theorem \ref{thm:local_iss}.
\begin{figure}[H]
 \captionsetup[subfigure]{justification=centering}
 \centering
 \begin{subfigure}{0.49\textwidth}
{\includegraphics[scale=0.4]{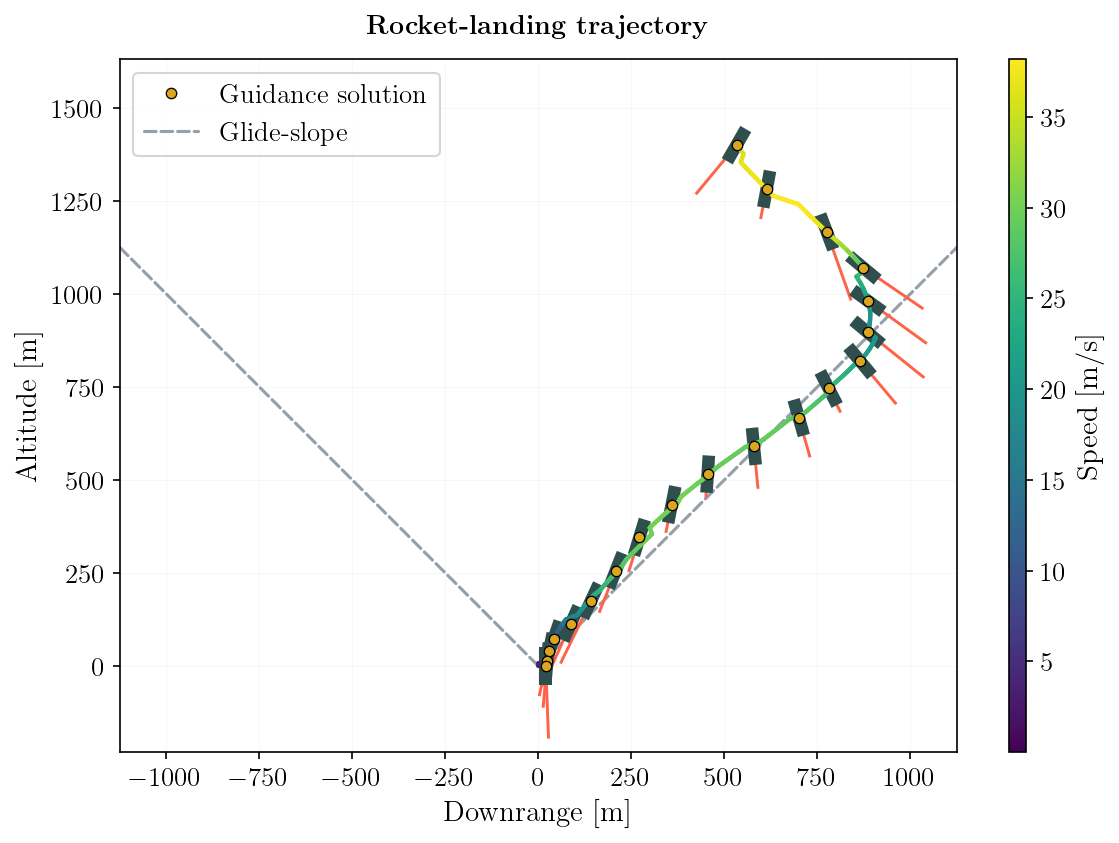}}
 \caption{Trajectory  using open loop control}
\label{fig:starship_open}
 \end{subfigure}
 \begin{subfigure}{0.49\textwidth}
{\includegraphics[scale=0.4]{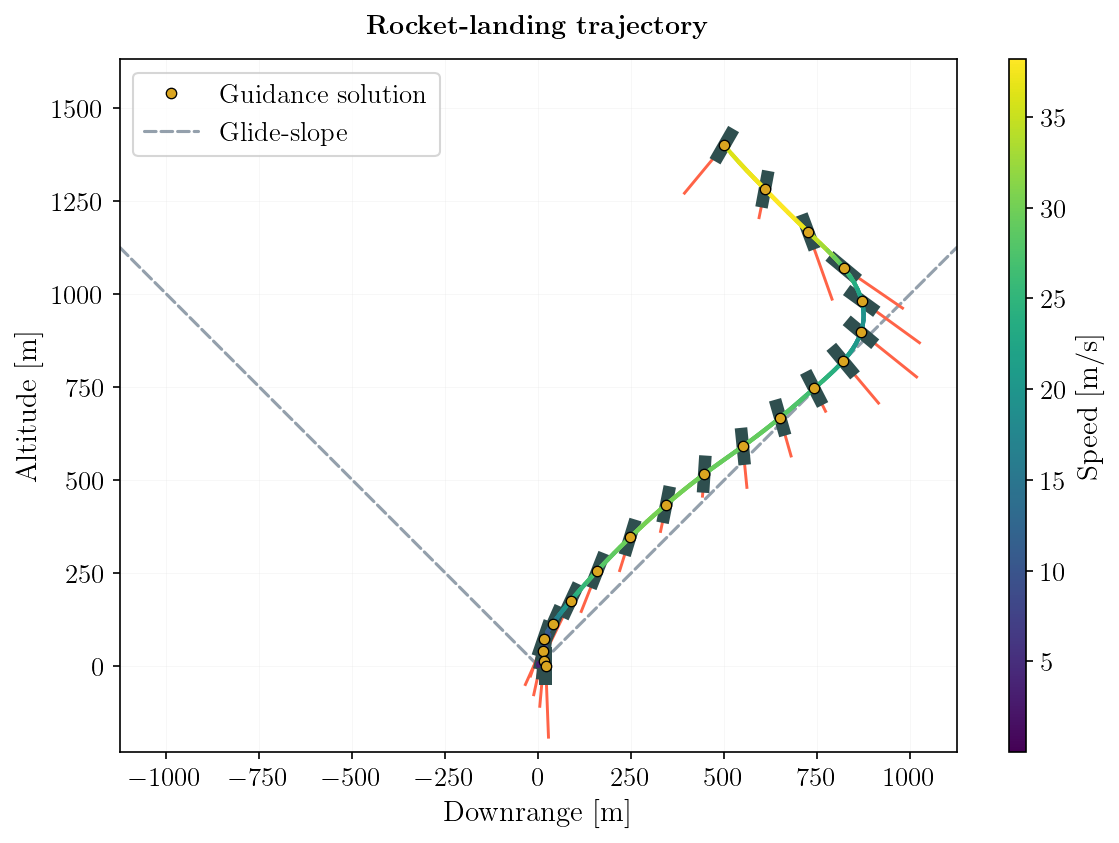}}
\caption{Trajectory using proposed feedback controller and CBF}
\label{fig:starship_closed}
\end{subfigure}

\caption{(a): The open-loop control inputs generated by solving the SCP optimization problems \cite{malyuta2022convex_tutorial} can lead to violation of constraints such as glide slope when smooth unknown additive bounded disturbances act on the actual system. (b): The figure shows the evolution of the states when a feedback control obtained by integrating the proposed feedback controller and CBF is applied to the actual system in the presence of smooth unknown additive bounded disturbances.  }
\label{fig:starship}
\end{figure}
}

{
\subsection{Spherical rocks avoiding collision avoidance with martian rock}
In the simulation setup to evaluate collision avoidance for spherical robots with a Martian rock, a realistic Martian Rochette rock and the 6-DOF spherical robot dynamics from \cite{sabet2020dynamic_spherical_robot} were used. We leverage Control Barrier Functions (CBFs) and the proposed feedback controller to navigate this environment, avoid collisions, and reach the goal. This scenario was designed to test the efficacy of the integrated feedback and CBF bases controller to avoid an obstacle (martian rock in this case) providing an assessment of their real-world application. In Fig. \ref{fig:rock_side}, we use the proposed controller \eqref{eqn:feedback_control_law} and two CBF's $h_1(\bx)={z}$ and $h_2(\bx)=H_m-z$, ($h_1(\bx)>0$ and $h_2(\bx)$ implies safety/collision avoidance) in \eqref{eqn:cbf_qp_general}x where $H_m$ is the maximum height of the rock and $z$, is the height of the spherical robot from the martian surface to avoid collision with the rock. In Fig. \ref{fig:rock_above} we only use one CBF $h_1(\bx)={z}$ and the proposed controller \eqref{eqn:feedback_control_law}. 
\begin{figure}[H]
 \captionsetup[subfigure]{justification=centering}
 \centering
 \begin{subfigure}{0.49\textwidth}
{\includegraphics[scale=0.4]{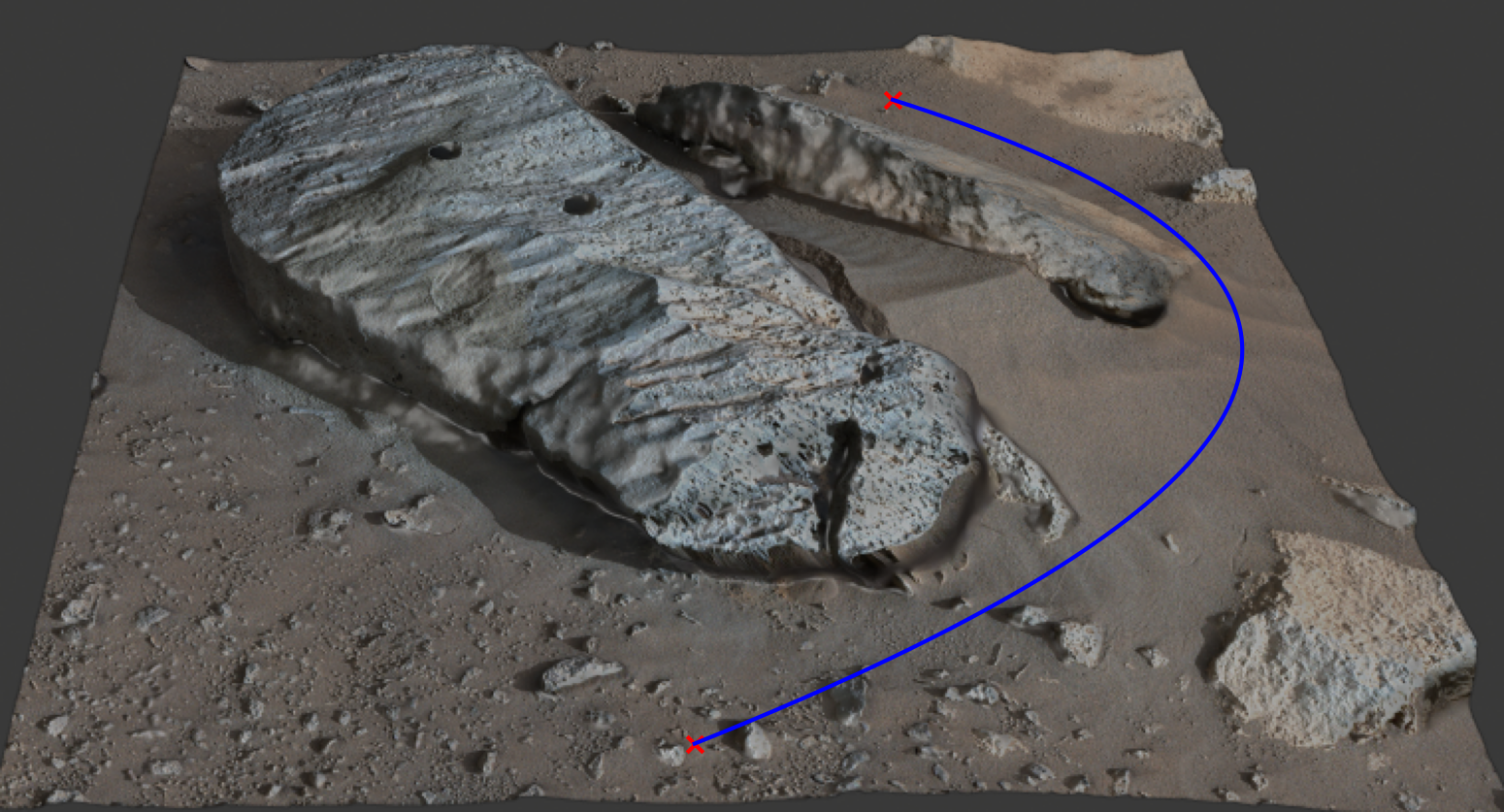}}
 \caption{}
\label{fig:rock_side}
 \end{subfigure}
 \begin{subfigure}{0.49\textwidth}
{\includegraphics[scale=0.4]{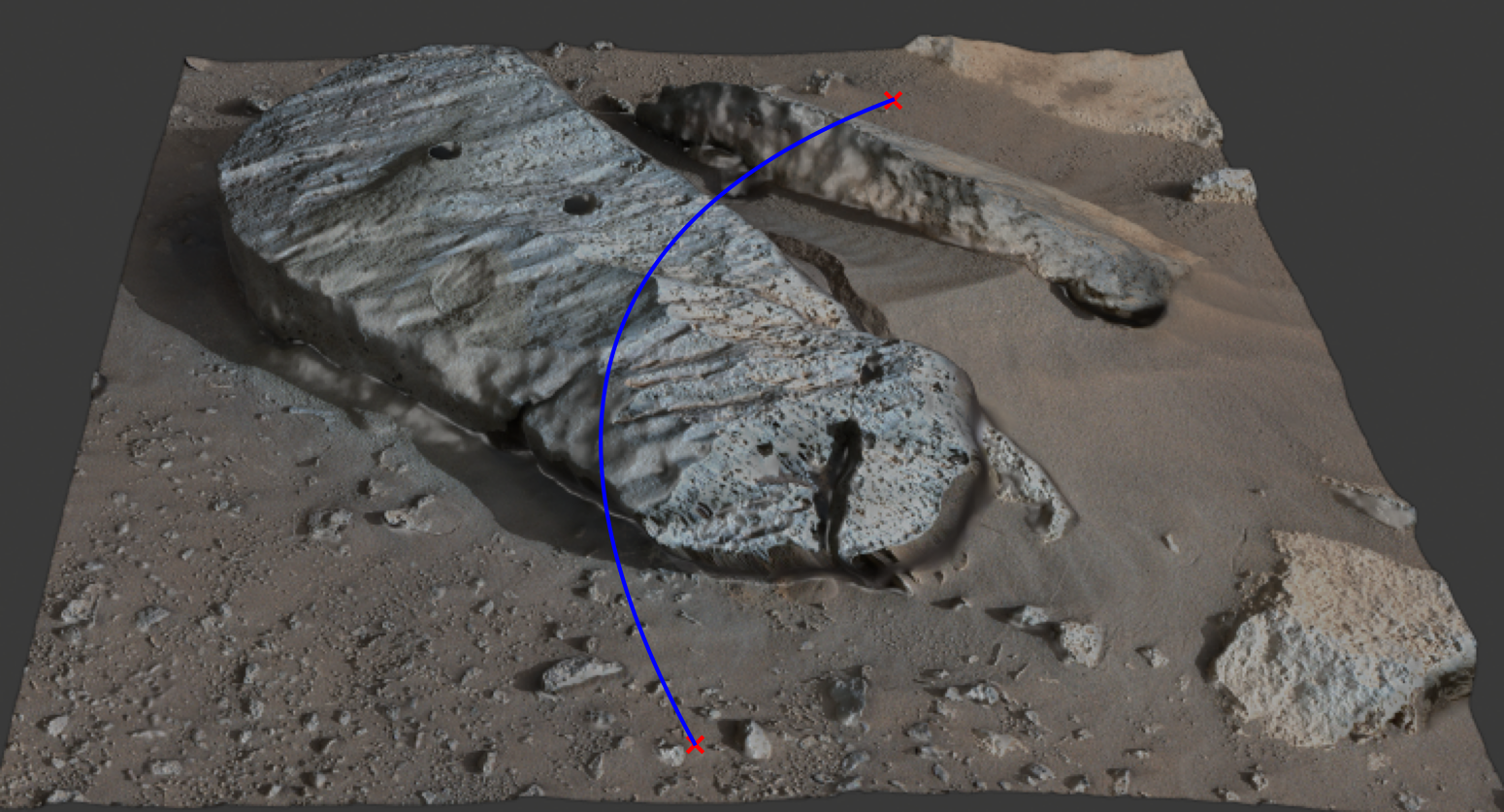}}
\caption{}
\label{fig:rock_above}
\end{subfigure}
\caption{The task for the spherical robot is to go from one point of the Martian rock to the other using the proposed integrated feedback and the CBF based controller.}
\label{fig:}
\end{figure}
}
\subsection{Docking and rendezvous of the SpaceX Dragon 2 
spacecraft with the International Space Station}
In this section, we implement our proposed nonlinear feedback tracking control law within the highly realistic, web-based simulator created by SpaceX. An illustrative screenshot of the simulator can be seen in Fig. \ref{fig:iss_simulator}\footnote{Screenshot from \href{https://iss-sim.spacex.com/}{https://iss-sim.spacex.com/}}. Our goal is to enable the Dragon 2 spacecraft to autonomously and safely dock with the International Space Station, without the need for human intervention. The following video \href{https://drive.google.com/file/d/1mQise2-45HY3LqHLdu-ovYfnYchMEG1f/view?usp=sharing}{https://drive.google.com/file/d/1mQise2-45HY3LqHLdu-ovYfnYchMEG1f/view?usp=sharing} demonstrates the successful implementation of our control law \eqref{eqn:feedback_control_law}, on Dragon 2. As seen in this video, the proposed control law allows the Dragon 2 to successfully dock with the International Space Station.
\begin{figure}[H]
\centering
\includegraphics[width=11cm]{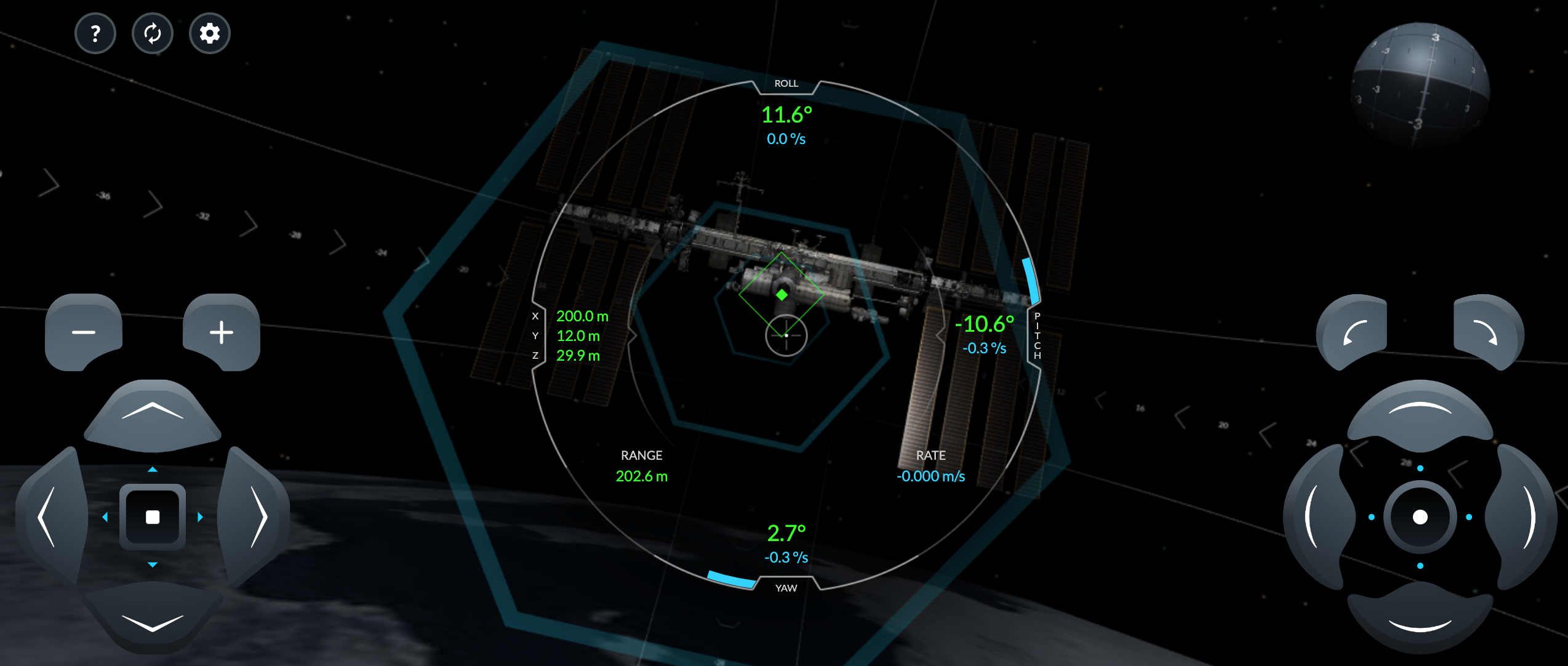}
\caption{A screenshot of the SpaceX docking simulator.}
\label{fig:iss_simulator}
\end{figure}

\section{Conclusions\label{sec:conclusions}}
A novel dual quaternion based nonlinear feedback tracking controller is proposed in this paper. Using Lyapunov analysis, Semi-Global Exponential Stability is established for the closed-loop dual quaternion based tracking error dynamics without imposing any extra constraints on the feedback gains. Further, we perform robustness analysis by proving Input-to-State Stability for the closed-loop dynamics in the presence of time-varying additive and bounded external disturbances. Using the proposed controller as the nominal input, we leverage Control Barrier Functions to compute safe control inputs that ensure that motion and safety constraints are satisfied. Finally, we verify the efficacy of the proposed tracking controller in realistic scenarios such as the MarCO mission, the Apollo transposition and docking problem, Starship flip maneuver, the collision avoidance of spherical robots, and the docking of SpaceX Dragon 2 with the International Space Station. Future work includes redesigning the proposed tracking controllers so that their applicability does not require any model information such as mass and moment of inertia.

\bibliography{main}

\end{document}